\documentclass[a4paper,12pt]{scrartcl}

\usepackage[utf8]{inputenc}
\usepackage[T1]{fontenc}
\usepackage{microtype}
\usepackage{lmodern}
\usepackage{enumitem}

\usepackage[fleqn]{amsmath}
\usepackage{amsthm}
\usepackage{amsfonts}
\usepackage{authblk}

\usepackage{tikz}
\usetikzlibrary{automata, positioning, arrows, shapes, arrows.meta}

\usepackage{thm-restate}
\usepackage{graphicx}
\usepackage{xcolor}
\usepackage{xspace}
\usepackage{amssymb}
\usepackage[most]{tcolorbox}
\usepackage{wrapfig}
\usepackage{multirow}
\usepackage[strings]{underscore}
\usepackage{booktabs}
\usepackage{multicol}
\usepackage{pbox}
\usepackage{enumitem}

\usepackage{algorithm}
\usepackage[noend]{algpseudocode}
\algrenewcommand\algorithmicindent{1.2em}
\algnewcommand\Input{\item[\textbf{Input:}]}
\algnewcommand\Output{\item[\textbf{Output:}]}
\algnewcommand\Effect{\item[\textbf{Effect:}]}
\allowdisplaybreaks
 
\usepackage{xcolor}
\usepackage{stmaryrd}

\def\O{\mathcal{O}}

\renewcommand{\P}{\mathsf{P}}
\newcommand{\NP}{\mathsf{NP}}
\newcommand{\coNP}{\mathsf{coNP}}
\newcommand{\PSPACE}{\textnormal{\textsf{PSPACE}}}
\newcommand{\IP}{\textnormal{\textsf{IP}}}

\newcommand{\Abs}[1]{\mathopen|#1\mathclose|}

\definecolor{nicebg}{HTML}{f6f0e4}
\definecolor{nicered}{HTML}{7f0a13}
\definecolor{nicebgred}{HTML}{f2e7e8}
\definecolor{niceblue}{HTML}{104354}
\definecolor{nicebgblue}{HTML}{e8edee}
\definecolor{nicegreen}{HTML}{217516}
\definecolor{nicebggreen}{HTML}{e9f1e8}
\definecolor{nicepurple}{HTML}{884bab}
\definecolor{nicebgpurple}{HTML}{f3edf7}
\definecolor{niceorange}{HTML}{d27c11}
\definecolor{nicebgorange}{HTML}{fbf2e8}
\definecolor{nicepink}{HTML}{e95f9f}
\definecolor{nicebgpink}{HTML}{fdeff6}
\definecolor{niceredlight}{HTML}{c9888d}
\definecolor{nicebluelight}{HTML}{78a4b8}
\definecolor{nicegreenlight}{HTML}{76de68}
\definecolor{nicepurplelight}{HTML}{bc87db}
\definecolor{niceredbright}{HTML}{bd0310}
\definecolor{nicebgredbright}{HTML}{f9e6e8}
\definecolor{nicebluebright}{HTML}{197b9b}
\definecolor{nicebgbluebright}{HTML}{e8f2f5}

\newtheorem{definition}  {Definition}
\newtheorem{proposition} {Proposition}
\newtheorem{lemma}       {Lemma}
\newtheorem{example}     {Example}

\makeatletter
\RequirePackage[bookmarks,unicode,colorlinks=true]{hyperref}\def\@citecolor{niceblue}\def\@urlcolor{niceblue}\def\@linkcolor{nicered}
\def\orcidID#1{\smash{\href{http://orcid.org/#1}{\protect\raisebox{1pt}{\protect\includegraphics{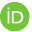}}}}}
\makeatother

\usepackage{etoolbox}
\makeatletter
\pretocmd\start@gather{\if@minipage\kern-\topskip\kern-\baselineskip\kern+7pt\fi
}{}{}
\makeatother

\newcommand{\Parag}[1]{\paragraph*{#1.}}
\newcommand{\parag}[1]{\paragraph*{#1.}}
\newcommand{\proofparag}[1]{\smallskip\noindent\textbf{\textsf{#1.}}}
\newcommand{\Qed}{}

\makeatletter
\renewcommand\paragraph{\scr@startsection{paragraph}{4}{\z@}{.5ex \@plus.25ex \@minus.25ex}{-1em}{\normalfont\normalsize\sffamily\bfseries}}
\renewcommand\subparagraph{\scr@startsection{subparagraph}{5}{\parindent}{0ex \@plus.25ex}{-1em}{\normalfont\normalsize\sffamily\bfseries}}
\makeatother

\usepackage{enumitem}
\setlist[enumerate,1]{itemsep=0pt,topsep=1ex,before={\pagebreak[1]}}
\setlist[itemize,1]{itemsep=0pt,topsep=1ex}

\usepackage[format=plain]{caption}

\newcommand{\appref}[1]{Appendix~\ref{#1}}

\def\Fullversion{1}
\def\set#1{{\{ #1 \}}}
\newcommand{\F}{\mathbb{F}}
\renewcommand{\O}{\mathcal{O}}

\newcommand{\Zero}{\mathbf{0}}
\newcommand{\One}{\mathbf{1}}

\newcommand{\sem}[1]{[\![ #1 ]\!]}
\newcommand{\Eval}[1]{\sem{{#1}}}
\newcommand{\pev}[1]{\pi_{[{#1}]}}
\newcommand{\Pev}[1]{{\mathrm{\Pi}}_{{#1}}}
\newcommand{\Bdd}[1]{\left\langle#1\right\rangle}
\newcommand{\Level}{\operatorname{\ell}}
\newcommand{\Gets}{:=}
\newcommand{\CPEnofmt}{CP}
\newcommand{\CPE}{\textrm{\CPEnofmt}}
\newcommand{\CPEDG}{\textrm{CPD}}
\newcommand{\CPEs}{\CPE{}s}

\newcommand{\bddop}{\operatorname{Apply}_\circledast}
\newcommand{\bddopvee}{\operatorname{Apply}_\vee}

\newcommand{\Claimset}{\mathcal{C}}

\newcommand{\Alg}[1]{\textnormal{\textsc{#1}}}
\newcommand{\ComputeEBDD}{\Alg{ComputeEBDD}}
\newcommand{\EvaluateEBDD}{\Alg{EvaluateEBDD}}
\newcommand{\CPCertify}{\Alg{CPCertify}}
\newcommand{\BDDSolver}{\Alg{BDDSolver}}
\newcommand{\Sumcheck}{\Alg{SumCheck}}

\newcommand{\Solver}[1]{\textnormal{\textsf{#1}}}
\newcommand{\blic}{\Solver{blic}}
\newcommand{\pgbddq}{\Solver{PGBDDQ}}
\newcommand{\caqe}{\Solver{CAQE}}
\newcommand{\depqbf}{\Solver{DepQBF}}
\newcommand{\bloqqer}{\Solver{Bloqqer}}
\newcommand{\qchecker}{\Solver{qchecker}}
\newcommand{\qrpcheck}{\Solver{QRPcheck}}

\newcommand{\True}{\textnormal{\textcolor{niceblue}{\textsf{true}}}}
\newcommand{\False}{\textnormal{\textcolor{niceblue}{\textsf{false}}}}

\renewcommand{\ldots}{...}

\tikzstyle{cnode}=[ellipse,draw,thick,minimum width=2em,minimum height=3ex,fill=nicebgblue]
\tikzstyle{dnode}=[circle,draw,thick,minimum size= 0pt,inner sep=1pt, outer sep=0pt, fill=nicebgredbright]
\tikzstyle{pnode}=[ellipse,draw,thick,fill=nicebgblue]
\tikzstyle{fml}=[rectangle,draw=nicegreen!50!white,fill=nicebggreen,rounded corners,inner sep=1.5mm]
\tikzstyle{poly}=[inner sep=0mm]

\newsavebox{\ProblemBox}
\newlength{\ProblemLength}
\newcommand{\Problem}[4][]{
  \begingroup
  \def\Arg{#1}
  \ifx\Arg\empty\def\Arg{#2}\fi
  \begin{center}
\savebox{\ProblemBox}{\textsf{#2}}\settowidth{\ProblemLength}{\usebox{\ProblemBox}}\usebox{\ProblemBox}\hspace{1mm}
    \begin{tabular}{ll}
      \textbf{Input}& #3\\
      \textbf{Output}& #4\\
    \end{tabular}
  \end{center}
  \endgroup
}
\newcommand{\Pref}[1]{\textnormal{#1}}

\makeatletter
\newcommand{\Customlabel}[2]{\protected@write \@auxout {}{\string \newlabel {#1}{{#2}{\thepage}{#2}{#1}{}} }\hypertarget{#1}{}
}
\makeatother

\begin{document}
\title{Making $\IP=\PSPACE$ Practical: Efficient Interactive Protocols for BDD Algorithms}
\author{Eszter Couillard$^1$ \orcidID{0009-0005-3609-1738}, 
Philipp Czerner$^1$ \orcidID{0000-0002-1786-9592},\\
Javier Esparza$^1$ \orcidID{0000-0001-9862-4919},
Rupak Majumdar$^2$ \orcidID{0000-0003-2136-0542} }
\affil{\large \{couillar, czerner, esparza\}@in.tum.de, rupak@mpi-sws.org\\
$^1$ Department of Informatics, TU München, Germany\\
$^2$ Max Planck Institute for Software Systems, Germany}
\date{}

\maketitle

\vspace*{-9mm}

\begin{abstract}
\textbf{Abstract.}
We show that interactive protocols between a prover and a verifier, a well-known tool of complexity theory, can be used in practice to certify the correctness of automated reasoning tools. 

\smallskip

Theoretically, interactive protocols exist for all $\textsf{PSPACE}$ problems. The verifier of a protocol checks the prover's answer to a problem instance in probabilistic polynomial time, with polynomially many bits of communication, and with exponentially small probability of error.  (The prover may need exponential time.) Existing interactive protocols are not used in practice because their provers use naive algorithms, inefficient even for small instances, that are incompatible with practical implementations of automated reasoning.

\smallskip

We bridge the gap between theory and practice by means of an interactive protocol whose prover uses BDDs. We consider the problem of counting the number of assignments to a QBF instance ($\#\textrm{CP}$), which has a natural BDD-based algorithm. We give an interactive protocol for $\#\textrm{CP}$ whose prover is implemented on top of an extended BDD library. The prover has only a linear overhead in computation time over the natural algorithm.

\smallskip

We have implemented our protocol in $\textsf{blic}$, a certifying tool for $\#\textrm{CP}$. Experiments on standard QBF benchmarks show that $\textsf{blic}$ is competitive with state-of-the-art QBF-solvers. The run time of the verifier is negligible.
While loss of absolute certainty can be concerning, the error probability in our experiments is at most $10^{-10}$ and reduces to $10^{-10k}$ by repeating the verification $k$ times.
 \end{abstract}

\section{Introduction}

Automated reasoning tools often underlie our assertions about the correctness of critical hardware and software components. 
In recent years, the scope and scalability of these techniques have grown significantly.

Automated reasoning tools
are not immune to bugs. If we are to trust their verdict, it is important that they provide evidence of their correct behaviour.
A substantial amount of research has gone into proof-producing automated reasoning tools \cite{Heule21,NeculaPCC-POPL97,Namjoshi01,CAV02BLAST,BarbosaRKLNNOPV22}. 
These works define a notion of ``correctness certificate'' suitable for the reasoning problem at hand, and adapt the reasoning engine to produce independently checkable certificates.
For example, SAT solvers produce either a satisfying assignment or a proof of unsatisfiability in some proof system, e.g.\ resolution (see \cite{Heule21} for a survey).
Extending such certificates beyond boolean SAT is an active area of current research \cite{BarbosaRKLNNOPV22,KatzBTRH16,NiemetzPLSB12,CAQE,BalabanovWJ14}.

In the worst case, the size of certificates grows exponentially in the size of the input, even for boolean unsatisfiability (unless $\NP$ $=$ $\coNP$). If users have limited computational or communication resources, transferring and checking large certificates becomes a burden. Large certificates are not just a theoretical curiosity. 
In practice, resolution proofs for complex SAT problems may run to petabytes \cite{Heule17}.
Ideally, we would prefer ``small'' certificates (polynomial in the size of the input) which can be checked independently in polynomial time.

The  $\IP=\PSPACE$ theorem proves that certification with polynomial
verification time is possible for any problem in $\PSPACE$, provided one trades off absolute
certainty for certainty with high probability \cite{Shamir92}.
The complexity class $\IP$ consists of those languages for which there is a polynomial-round, complete and sound \emph{interactive protocol} 
\cite{GoldwasserMicaliRackoff,Babai,LundFKN92,AroraBarak}---a sequence of interactions between a (computationally unbounded) prover and a (computationally bounded) verifier after which the verifier decides whether the prover correctly performed a computation. The protocol is complete if, whenever an input belongs to the language, 
there is an \emph{honest prover} who can convince a polynomial-time randomised verifier in a polynomial number of rounds.
The protocol is sound if, whenever an input does not belong to the language, the Verifier will reject the input with high probability — no matter what certificates are provided to the Verifier. That is, a “Prover" cannot fool the certification process.

Since every language in $\PSPACE$ has an interactive protocol, there are interactive protocols for UNSAT, QBF,  \emph{counting} QBF,  safety verification of concurrent state machines, etc. Observe that the prover of a protocol may perform exponential time computations (which is unavoidable unless $\P = \PSPACE$), but the verifier only requires polynomial time in the original input.

If interactive protocols provide a foundation for small and efficiently verifiable certificates (at least for problems in $\PSPACE$), why are they not in widespread practice? We believe the reason to be the following: for asymptotic complexity purposes, it suffices to use honest provers with best-case exponential complexity that naively enumerate all possibilities. Such provers are incompatible with automated reasoning tools, which use more sophisticated data structures and heuristics to scale to real-world examples.
So we need to make \emph{practical algorithms} for automated reasoning \emph{efficiently certifying}. We call an algorithm \emph{efficiently certifying} if, in addition to computing the output, it can execute the steps of an honest prover in an interactive protocol with only polynomial overhead over its running time.

In this paper, we show that algorithms using reduced ordered binary decision diagrams 
(henceforth called BDDs) \cite{Bryant86} can be made efficiently certifying.
We consider $\Pref{\#\CPE}$, the problem of computing the number of satisfying assignments of a
\emph{circuit with partial evaluation (\CPE)}. Besides boolean nodes, 
a {\CPE} contains \emph{partial evaluation} nodes $\pev{x:=\False}$ (resp., $\pev{x:=\True}$) that take a boolean predicate as input, say $\varphi$, and output the result of setting $x$ to $\False$ (resp., $\True$) in $\varphi$. $\Pref{\#\CPE}$ generalises SAT, QBF, and \emph{counting} SAT ($\#$SAT), and has a natural algorithm using BDDs: 
Compute BDDs for each node of the circuit in topological order, and count the 
accepting paths of the final BDD.

The theoretical part of the paper proceeds in two steps. First, we present \CPCertify{}, a complete and sound interactive protocol  for $\#\CPE$. 
\CPCertify{} is similar to the \Sumcheck\ protocol \cite{LundFKN92}.
It involves encoding boolean formulas as polynomials over a finite field.
The prover is responsible for producing certain polynomials from the original circuit
and evaluating them at points of the field chosen by the verifier. These polynomials are either multilinear (all exponents are at most $1$) or quadratic (at most $2$).

Second, we show that an honest prover in \CPCertify{} can be implemented on top of a suitably extended BDD library.
The run times of the certifying BDD algorithms are only a constant overhead over the computation time without certification---they depend linearly on the total number of nodes of the intermediate BDDs computed by the prover to solve the $\#\CPE$ instance. 
We use two key insights. The first is an encoding of multilinear polynomials as BDDs; we show that the 
intermediate BDDs represent all the multilinear polynomials a prover needs during the run of \CPCertify{}. The second shows that the quadratic polynomials correspond to \emph{intermediate steps} during the computation of the intermediate BDDs. We extend BDDs with additional ``book-keeping'' nodes that allow the prover to also compute the quadratic polynomials while solving the problem. So computing the polynomials required by \CPCertify{} has \emph{zero} additional cost; the only overhead is the cost of evaluating the polynomials at the field points chosen by the verifier.

We have implemented a certifying $\#\CPE$ solver based on our extended BDD library.
Our experiments show that the solver is competitive with state-of-the-art non-certifying QBF solvers,  and can outperform certifying QBF solvers based on BDDs. The number of bytes exchanged between the prover and the verifier
are an order of magnitude smaller, and Verifier's run time several orders of magnitude smaller, than current encodings of QBF proofs, while bounding the error probability to below $10^{-10}$.
Thus, our results open the way for practically efficient, probabilistic certification of automated reasoning problems using interactive protocols.

\parag{Additional Related Work}
Proof systems for SAT and QBF remain an active area of research---both in theoretical proof complexity and in practical tool development. 
Jussila, Sinz, and Biere \cite{JussilaSB06,SinzB06} showed how to extract extended resolution proofs from BDD operations.
This is the basis for proof-producing SAT and QBF solvers based on BDDs \cite{BryantH21sat,BryantH21qbf,BryantBH22}.
As in our work, the proof uses intermediate nodes produced in the construction of the BDD operations.
We focus on interactive certification instead of extended resolution proofs, which can be exponentially larger than the input formula.

Recently, Luo et al.~\cite{LuoAHPTW22} consider the problem of providing \emph{zero-knowledge} proofs of unsatisfiability, a motivation similar but not equal to ours. Their techniques require the verifier to work in time polynomial in the proof, which can be exponentially bigger than the input formula. In contrast, the verifier of \CPCertify{} runs in polynomial time in the input. Since any language in PSPACE has a zero knowledge proof \cite{Ben-OrGGHKMR88}, our protocol can in principle be made zero knowledge. Whether that system scales in practice is left for future work.

\ifx\Fullversion\undefined
\parag{Full Version} Detailed proofs can be found in the full version of the paper~\cite{fullversion}.
\fi

\section{Preliminaries}

\parag{The Class $\IP$}
An \emph{interactive protocol} between a \emph{Prover} and a \emph{Verifier} consists of a sequence of
interactions in which a Verifier asks questions to a Prover,
receives responses to the questions, and must ultimately decide if a common input $x$ belongs to a language.
The computational power of the Prover is unbounded but the Verifier is a randomised, polynomial-time algorithm. 

\newcommand{\Out}{\operatorname{out}}
Formally, let $P,V$ denote (deterministic) Turing machines.

We say that $(r;m_1,...,m_{2k})$ is a \emph{$k$-round interaction}, with $r,m_1,...,m_{2k}\in\{0,1\}^*$, if $m_{i+1}=V(r,m_1,...,m_i)$ for even $i$ and $m_{i+1}=P(m_1,...,m_i)$ for odd $i$.
We think of $r$ as an additional sequence of bits given to Verifier $V$ that is chosen randomly.
The \emph{output} $\Out(P, V)(x,r,k)$ is defined as $m_{2k}$, where $(r;m_1,...,m_{2k})$ is the unique $k$-round interaction with $m_1=x$.

A language $L$ belongs to $\IP$ if there exist some $V,P_H$ and polynomials $p_1,p_2,p_3$, s.t.\ $V(r,x,m_2,...,m_i)$ runs in time $p_1(\Abs{x})$ for all $r,x,m_2,...,m_i$, and, for each $x$ and an $r\in\{0,1\}^{p_2(\Abs{x})}$ chosen uniformly at random:
\begin{enumerate}
\item (\emph{Completeness}) $x\in L$ implies $\Out(P_H, V)(x,r,p_3(\Abs{x})) = 1$ with probability 1, and
\item (\emph{Soundness}) $x \notin L$ implies that for all $P$ we have $\Out(P, V)(x,r,p_3(\Abs{x})) = 1$ with probability at most $2^{-\Abs{x}}$.
\end{enumerate}

Intuitively, in an interactive protocol, a computationally unbounded Prover interacts with a randomised polynomial-time Verifier for $k$ rounds.
In each round, Verifier sends probabilistic “challenges” to Prover, based on the input and the answers to prior challenges, 
and receives answers from Prover.
At the end of $k$ rounds, Verifier decides to accept or reject the input.
The completeness property ensures that if the input belongs to the language $L$, then there is an “honest” Prover $P_H$ who can always convince Verifier that indeed $x\in L$. If the input does not belong to the language, then the soundness property ensures that Verifier rejects the input with high probability no matter how a (dishonest) Prover tries to convince them.

It is known that $\IP = \PSPACE$ \cite{LundFKN92,Shamir92}, that is, every language in $\PSPACE$ has a polynomial-round interactive protocol.
The proof exhibits an interactive protocol for the language QBF of true quantified boolean formulae; in particular, 
the honest Prover is a polynomial space, exponential time algorithm that uses a truth table representation of the formula to implement the protocol.

\parag{Polynomials}
Interactive protocols make extensive use of polynomials over some prime finite field $\F$.

Let $X$ be a finite set of variables. We use $x, y, z, \ldots$ for variables and $p, q, \ldots$ for polynomials. When we write a polynomial explicitly, we write it in brackets, e.g.\ $[3xy-z^2]$.  
We write $\One$ and $\Zero$ for the polynomials $[1]$ and $[0]$, respectively.
We use the following operations on polynomials:
\begin{itemize}
\item \emph{Sum, difference, and  product}. Denoted $p+q$, $p-q$, $p \cdot q$, and defined as usual. For example, $[3xy-z^2]+[z^2+yz]=[3xy+yz]$ and $[x+y]\cdot[x-y]=[x^2-y^2]$.
\item \emph{Partial evaluation}. Denoted $\pev{x \Gets  a} \, p$, it returns the result of setting variable $x$ to the field element $a$ in the polynomial $p$, e.g.\ $\pev{x \Gets  5}[3xy-z^2]=[15y-z^2]$.
\item \emph{Degree reduction}. Denoted $\delta_x \, p$.  It reduces the degree of $x$ in all monomials of the polynomial to $1$. For example, 
$\delta_x[x^3y+3x^2+7z^2]=[xy+3x+7z^2]$.
\end{itemize}

A \emph{(partial) assignment} is a (partial) mapping $\sigma: X \rightarrow \F$. We write $\Pev\sigma \, p$ for $\pev{x_1:=\sigma(x_1)}...\pev{x_k:=\sigma(x_k)} \, p$, where $x_1,...,x_k$ are the variables for which $\sigma$ is defined. Additionally, we call $\sigma$ \emph{binary} if $\sigma(x)\in\{0,1\}$ for each $x\in X$.

\parag{Binary and multilinear polynomials} A polynomial is \emph{multilinear in $x$} if the degree of $x$ in $p$ is $0$ or $1$. A polynomial is \emph{multilinear} if it is multilinear in all its variables. For example, $[xy-y^2]$ is multilinear in $x$ but not in $y$, and $[3xy -2zy]$ is multilinear. A polynomial $p$ is \emph{binary} if $\Pev\sigma \, p \in \{\Zero,\One\}$ for every binary assignment $\sigma$. Two polynomials $p,q$ are \emph{binary equivalent}, denoted  $p\equiv_b q$, if $\Pev\sigma \, p=\Pev\sigma \, q$ for every binary assignment $\sigma$. 
(Note that non-binary polynomials can be binary equivalent.)

\begin{wrapfigure}{R}{0.5\linewidth}
\tikzstyle{edgeb}=[thick]
\tikzstyle{edger}=[thick]
\tikzset{-{Latex[length=2mm, width=2mm]},node distance=16mm}
\renewcommand{\c}[1]{{\hspace{2.2pt}#1\hspace{1pt}}}
\centering
\begin{tikzpicture}
\node[cnode] (q1) {$\wedge$};
\node[right = 1mm of q1,fml] (p1) {$\neg x$};
\node[above = 1mm of p1,poly,xshift=8mm] (p2) {$\begin{array}{c}[1-x+x^2-x+x^2-x^3]\\=[1-2x+2x^2-x^3]\end{array}$};
\node[cnode, below left = 0.6cm and 0.2 cm of q1] (q2) {\small $\pev{y \Gets \True}$};
\node[above = 1mm of q2,fml,xshift=-3mm] (p2) {$\True$};
\node[above = 1mm of p2,poly,xshift=1mm] (p2b) {$[1\c{-}x\c{+}x^2]$};
\node[cnode, below right = 0.6cm and 0.2cm of q1] (q3) {\small $\pev{y \Gets \False}$};
\node[right = 1mm of q3,fml] (p3) {$\neg x$};
\node[below = 1mm of p3,poly] (p3b) {$[1-x]$};
\node[cnode, below = 15mm of q1] (q4) {$\vee$};
\node[left = 1mm of q4,fml] (p4) {$\neg x \vee y$};
\node[right = 0.5mm of q4,yshift=-2.5mm,poly] (p4b) {$\begin{array}{c}[1\c{-}x\c{+}xy\c{-}xy\c{+}x^2y]\\=[1-x+x^2y]\end{array}$};
\node[cnode, below right= 0.8cm and 1mm of q4] (q5) {$\wedge$};
\node[cnode, below left = 0.8cm and 1mm of q4] (q5b) {$\neg$};
\node[left = 1mm of q5b,fml] (p5b) {$\neg x$};
\node[below = 1mm of p5b,poly] (p5bb) {$[1-x]$};
\node[right = 1mm of q5,fml] (p5) {$x \wedge y$};
\node[below = 1mm of p5,poly] (p5bb) {$[xy]$};
\node[cnode, below = 7mm of q5b] (q6) {$x\vphantom{y}$};
\node[left = 1mm of q6,fml] (p6) {$x$};
\node[left = 1mm of p6,poly] (p6b) {$[x]$};
\node[cnode] (q7) at (q6-|q5) {$y$};
\node[right = 1mm of q7,fml] (p7) {$y$};
\node[right = 1mm of p7,poly] (p7b) {$[y]$};
\draw (q1) edge[edgeb] (q2)
(q1) edge[edger] (q3)
(q2) edge[edgeb] (q4)
(q3) edge[edgeb] (q4)
(q4) edge[edger] (q5)
(q4) edge[edgeb] (q5b)
(q5b) edge[edgeb] (q6)
(q5) edge[edgeb] (q6)
(q5) edge[edger] (q7);
\end{tikzpicture}
\caption{A \CPE\ (Section~\ref{sec:cpe}), the boolean functions represented by each node (in boxes), and the arithmetisation of the formulae (Section~\ref{subsec:arith}).}
\label{fig:circuit}
\vspace*{-10mm}
\end{wrapfigure}

\section{Circuits with Partial Evaluation}
\label{sec:cpe}

We introduce circuits with partial evaluation (\CPE), a compact representation of quantified boolean formulae, 
and formulate \Pref{\#\CPE}, the problem of counting the number of satisfying assignments of a $\CPE$.
\Pref{\#\CPE}  generalises QBF, the satisfiability problem for quantified boolean formulas.
Figure \ref{fig:circuit} shows an example of a \CPE. Informally, it is
a directed acyclic graph whose nodes are labelled with variables, boolean operators, or \emph{partial evaluation operators} $\pev{x \Gets b}$. Intuitively,  $\pev{x \Gets b} \varphi$ sets the variable $x$ to the truth value $b$ in the formula $\varphi$. In this way, each node of a circuit stands for a boolean function, and the complete circuit stands for the boolean function of the root. Figure \ref{fig:circuit} shows the formulae represented by each node.

\newcommand{\Free}{\operatorname{free}}
\newcommand{\Sub}{\operatorname{sub}}
\begin{definition}
Let $X$ denote a finite set of \emph{variables} and $S\subseteq X$. A \emph{circuit with partial evaluation and variables in  $S$} ($S$-{\CPE}) has the form
\begin{itemize}
\item $\True$, $\False$, or $x$, where $x\in S$,
\item $\neg \varphi$, $\varphi \wedge \psi$, or $\varphi \vee \psi$, where $\varphi,\psi$ are $S$-{\CPE}s, or
\item $\pev{y \Gets b} \, \varphi$, where $y\in X\setminus S$, $b\in\{\True,\False\}$, and $\varphi$ is an $(S\cup\{y\})$-{\CPE}.
\end{itemize}
The set of  \emph{free variables} of a $S$-{\CPE} $\varphi$ is $\Free(\varphi):=S$. The \emph{children} of a {\CPE} are inductively defined as follows:  $\True$, $\False$, and $x$ have no children;  the children of $\varphi \wedge \psi$ and 
$\varphi \vee \psi$ are $\varphi$ and $\psi$; and the only child of $\neg \varphi$ and $\pev{y \Gets b}\, \varphi$ is $\varphi$.
The set of \emph{descendants} of $\varphi$ is the smallest set $M$ containing $\varphi$ and all children of every element of $M$. The \emph{size} of $\varphi$ is $\Abs{\varphi}:=\Abs{M}$.
\end{definition}

We represent a {\CPE} $\varphi$ as a directed acyclic graph. The nodes of the graph are the descendants of $\varphi$.
A {\CPE} $\varphi$ encodes a boolean predicate $P_\varphi$, which maps assignments $\sigma \colon \Free(\varphi)\rightarrow\{\False,\True\}$ to a truth value $P_\varphi(\sigma)\in\{\False,\True\}$. It does so in the obvious manner, e.g., $P_x(\sigma):=\sigma(x)$, $P_{\varphi\wedge\psi}(\sigma):=P_\varphi(\sigma) \wedge P_\psi(\sigma)$, etc. We use $\pev{x \Gets  b}$ as partial evaluation operator, so $P_{\pev{x \Gets  b}\varphi}(\sigma):=P_\varphi(\sigma\cup\{x\mapsto b\})$. Intuitively, $\pev{x \Gets  b} \, \varphi$ replaces each occurrence of $x$ in $\varphi$ by $b$. An assignment $\sigma$ \emph{satisfies} $\varphi$ if $P_\varphi(\sigma)=\True$.
We define the macros
\begin{align*}
\forall_x\varphi &:= \pev{x \Gets  0} \, \varphi\wedge\pev{x \Gets  1} \, \varphi\\
\exists_x\varphi &:= \pev{x \Gets  0} \, \varphi \vee\pev{x \Gets  1} \, \varphi
\end{align*}
Figure~\ref{fig:circuit} shows a $\CPE$ for the quantified boolean formula $\forall_y(\neg x \vee (x \wedge y))$.

We consider the following problem:
\Problem[CCPE]{\#{\CPEnofmt}}
{{\CPE} $\varphi$.}
{The number of satisfying assignments of $\varphi$.}

Given a quantified boolean formula, we can use the macros for quantifiers to construct in linear time an equivalent {\CPE}, i.e., 
a {\CPE} with the same satisfying assignments. Similarly, \textsf{\#SAT} instances can also be reduced to \Pref{\#\CPE}.

\Parag{Structure of the rest of the paper} 
In Section~\ref{sec:ip-cpe}, we give an interactive protocol for \Pref{\#\CPE} called \CPCertify{}.
In Section~\ref{sec:bddprover}, we implement an honest Prover for \CPCertify{} on top of 
an extended BDD-based algorithm for \Pref{\#\CPE}.
The prover runs in time polynomial in the size of the largest BDD for any of the subcircuits of the initial circuit.
Together, these results yield our main result, Theorem \ref{thm:main}, showing that any BDD-based algorithm can be modified to run an interactive protocol with small polynomial overhead. Finally, Section~\ref{sec:evaluation} presents empirical results.

\section{An Interactive Protocol for \Pref{\#\CPE} }
\label{sec:ip-cpe}

In this section we describe an interactive protocol for \Pref{\#\CPE}, following the \Sumcheck{} protocol of \cite{LundFKN92}.  
Section \ref{subsec:arith} introduces  arithmetisation, a technique to transform \Pref{\#\CPE} into an equivalent problem about polynomials. 
Section \ref{subsec:degree-reduction} shows how to transform \Pref{\#\CPE} into an equivalent problem about
evaluating polynomials \emph{of low degree}. 
Finally, Section \ref{subsec:CPCertify} presents an interactive protocol for this problem. 

\subsection{Arithmetisation}
\label{subsec:arith}
We define a mapping $\Eval{\cdot}$ that assigns to each \CPE{} $\varphi$ a polynomial $\Eval{\varphi}$ over the variables $\Free(\varphi)$, called the \emph{arithmetisation} of $\varphi$:
\begin{itemize}
\item $\Eval{\True} := \One$; $\Eval{\False} := \Zero$; $\Eval{x} := [x]$ for every $x \in X$; and $\Eval{\neg \varphi} := \One - \Eval{\varphi}$;
\item  $\Eval{\varphi\wedge\psi} :=\Eval{\varphi} \cdot \Eval{\psi}$; and $\Eval{\varphi\vee\psi} := \Eval{\varphi}+ \Eval{\psi}-\Eval{\varphi}\cdot\Eval{\psi}$;
\item $\Eval{\pev{x \Gets  b} \, \varphi}:=\pev{x \Gets  \Eval{b}}\Eval{\varphi}$, with $x\in \Free(\varphi)$, $b\in\{\True,\False\}$.
\end{itemize}

Figure \ref{fig:circuit} also shows the polynomials corresponding to the nodes of the {\CPE}.

Let $\F$ be a fixed prime finite field.
Given an arbitrary truth assignment $\sigma \colon X \to \{\True, \False\}$, let $\overline{\sigma} \colon X \to \F$  be the 
binary assignment given by  $\overline{\sigma}(x) = 1$ if $\sigma(x) = \True$ and $\overline{\sigma}(x) = 0$ if $\sigma(x) = \False$, where $0$ and $1$ denote the additive
and multiplicative identities in $\F$. 
The mapping $\Eval{\cdot}$ is defined to satisfy the following property, whose proof is immediate:

\begin{proposition}
\label{prop:fundeval}
Let  $\varphi$ be an $S$-{\CPE} encoding some boolean predicate $P_\varphi$. Then 
$P_\varphi(\sigma) = \Pev{\overline{\sigma}}\, \Eval{\varphi}$ for every truth assignment $\sigma$ to $S$. 
\end{proposition}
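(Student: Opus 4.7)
The plan is to proceed by structural induction on $\varphi$, which works because the definition of $\Eval{\cdot}$ is structural and the statement just says that the arithmetisation computes, on binary inputs, the same value as the boolean semantics. Note first that since $\Eval{\True}=\One$ and $\Eval{\False}=\Zero$, we have $\Eval{b}=\overline b$ for $b\in\{\True,\False\}$, which will be used in the partial-evaluation case.

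For the base cases, $\Pev{\overline\sigma}\Eval{\True}=1$ and $\Pev{\overline\sigma}\Eval{\False}=0$ are immediate, and $\Pev{\overline\sigma}\Eval{x}=\Pev{\overline\sigma}[x]=\overline\sigma(x)$, which is $1$ iff $\sigma(x)=\True$, i.e.\ iff $P_x(\sigma)=\True$. A convenient invariant to carry through the induction is that $\Pev{\overline\sigma}\Eval{\varphi}\in\{0,1\}$ for every truth assignment $\sigma$ to $\Free(\varphi)$, matching $P_\varphi(\sigma)$ under the identification $\True\leftrightarrow 1$, $\False\leftrightarrow 0$.

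For the boolean connectives, I would use the elementary fact that on $\{0,1\}$-valued inputs $a,b$, the arithmetic operations $\One - a$, $a\cdot b$, and $a+b-a\cdot b$ realise $\neg$, $\wedge$, and $\vee$ respectively (and stay in $\{0,1\}$). Applying $\Pev{\overline\sigma}$ commutes with the polynomial operations used to build $\Eval{\neg\varphi}$, $\Eval{\varphi\wedge\psi}$, and $\Eval{\varphi\vee\psi}$, so by the inductive hypothesis each case reduces to the corresponding boolean identity.

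The one case needing a little care is $\varphi = \pev{y\Gets b}\,\psi$, where $y\in X\setminus S$ and $\psi$ is an $(S\cup\{y\})$-\CPE{}. Given $\sigma\colon S\to\{\True,\False\}$, let $\sigma'=\sigma\cup\{y\mapsto b\}$, an assignment to $S\cup\{y\}$. By definition of the boolean semantics, $P_{\pev{y\Gets b}\psi}(\sigma)=P_\psi(\sigma')$. On the arithmetic side,
\[
\Pev{\overline\sigma}\Eval{\pev{y\Gets b}\psi}
=\Pev{\overline\sigma}\,\pev{y\Gets \Eval{b}}\Eval{\psi}
=\pev{y\Gets\overline b}\,\Pev{\overline\sigma}\Eval{\psi}
=\Pev{\overline{\sigma'}}\Eval{\psi},
\]
using that $y\notin S$ so the partial evaluations in $\Pev{\overline\sigma}$ and $\pev{y\Gets\overline b}$ commute, and that $\Eval{b}=\overline b$. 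The inductive hypothesis applied to $\psi$ and $\sigma'$ then yields $\Pev{\overline{\sigma'}}\Eval{\psi}=P_\psi(\sigma')$, completing the step. There is no real obstacle here; the only thing to be careful about is the bookkeeping that $\Free(\pev{y\Gets b}\psi)=S$ excludes $y$, which is exactly what makes the two partial-evaluation operators commute in the last display.
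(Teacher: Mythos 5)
Your proof is correct: it is the routine structural induction that the paper itself omits with the remark that the proof ``is immediate,'' and every case (including the only mildly delicate partial-evaluation case, where the commutation of $\Pev{\overline\sigma}$ and $\pev{y\Gets\overline b}$ relies on $y\notin S$) is handled properly. Nothing further is needed.
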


So, intuitively, the polynomial $\Eval{\varphi}$ is a conservative extension of the predicate $P_\varphi$: It returns the same values for all binary assignments. 
Accordingly, in the rest of the paper we abuse language and write $\sigma$ instead of $\overline{\sigma}$ for the binary assignment corresponding to the truth assignment $\sigma$.

Observe that \Pref{\#\CPE} can be reformulated as follows: given a \CPE\ $\varphi$, compute the number of binary assignments $\sigma$ s.t.\ $\Pev{\sigma}\Eval{\varphi} = \One$. 

\subsection{Degree Reduction}
\label{subsec:degree-reduction}

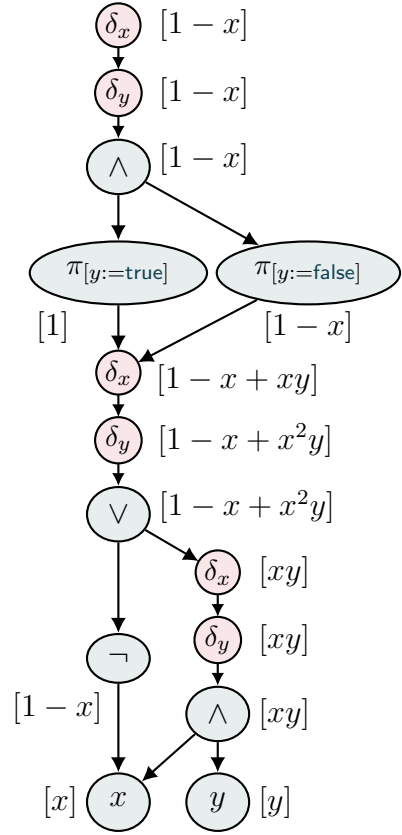
\begin{wrapfigure}{R}{0.357\linewidth}
\tikzstyle{edgeb}=[thick]
\tikzstyle{edger}=[thick]
\tikzset{-{Latex[length=2mm, width=2mm]},node distance=16mm}
\begin{tikzpicture}
\node[dnode] (q11) {\small $\delta_x$};
\node[dnode, below = 3mm of q11] (q12) {\small $\delta_y$};
\node[cnode, below = 3mm of q12] (q1) {$\wedge$};
\node[cnode, below = 6mm of q1] (q2) {\small $\pev{y \Gets \True}$};
\node[cnode, right = 1mm of q2] (q3) {\small $\pev{y \Gets \False}$};
\node[dnode, below = 6mm of q2] (q41) {\small $\delta_x$};
\node[dnode, below = 3mm of q41] (q42) {\small $\delta_y$};
\node[cnode, below = 3mm of q42] (q4) {$\vee$};
\node[dnode, below right = 3mm and 8mm of q4] (q51) {\small $\delta_x$};
\node[dnode, below = 3mm of q51] (q52) {\small $\delta_y$};
\node[cnode, below = 3mm of q52] (q5) {$\wedge$};
\node[cnode, below = 12mm of q4] (q5b) {$\neg$};
\node[cnode, below = 4mm of q5] (q7) {$y$};
\node[cnode] (q6) at (q7-|q5b)  {$x\vphantom{y}$};
\node[right = 1mm of q7,poly] (p7) {$[y]$};
\node[left  = 1mm of q6,poly] (p7) {$[x]$};
\node[below left = 2mm and -1mm of q5b,poly] (p5b) {$[1-x]$};
\node[right = 1mm of q5,poly] (p5) {$[xy]$};
\node[poly] (p51) at (q51-|p5) {$[xy]$};
\node[poly] (p52) at (q52-|p5) {$[xy]$};
\node[right = 1mm of q4,poly,yshift=1mm] (p4) {$[1-x+x^2y]$};
\node[right = 2mm of q42,poly] (p42) {$[1-x+x^2y]$};
\node[right = 2mm of q41,poly,yshift=-1mm] (p41) {$[1-x+xy]$};
\node[below left = 2mm and -2mm of q2,poly] (p2) {$[1]$};
\node[right = 1mm of q1,poly,yshift=1mm] (p1) {$[1-x]$};
\node[poly] (p12) at (q12-|p1) {$[1-x]$};
\node[poly] (p11) at (q11-|p1) {$[1-x]$};
\node[poly] (p3) at (p2-|q3) {$[1-x]$};
\draw (q1) edge[edgeb] (q2)
(q1) edge[edger] (q3)
(q2) edge[edgeb] (q41)
(q3) edge[edgeb] (q41)
(q4) edge[edger] (q51)
(q4) edge[edgeb] (q5b)
(q5b) edge[edgeb] (q6)
(q5) edge[edgeb] (q6)
(q5) edge[edger] (q7);
\draw[-{Latex[length=1.7mm, width=1.7mm]}]
(q11) edge[edgeb] (q12)
(q12) edge[edgeb] (q1)
(q41) edge[edgeb] (q42)
(q42) edge[edgeb] (q4)
(q51) edge[edgeb] (q52)
(q52) edge[edgeb] (q5);
\end{tikzpicture}
\caption{\CPEDG\ and polynomials for the \CPE\ of Figure~\ref{fig:circuit}.}
\label{fig:red-polynomials}
\vspace*{-28mm}
\end{wrapfigure}
Given a {\CPE} $\varphi$, its associated polynomial can have degree exponential in the height of $\varphi$. 
Since we are ultimately interested in evaluating polynomials over binary assignments, and since $x^2=x$ for $x\in\set{0,1}$,
we can convert polynomials to low degree without changing their behaviour on binary assignments. 

For this, we use a \emph{degree-reduction} operator $\delta_x$ for every variable $x$. 
The operator $\delta_x p$ reduces the exponent of all powers of $x$ in $p$ to $1$. 
For example, $\delta_x [x^2y + 3 x y^2 - 2x^3 y^2+ 4] = [xy + 3x y^2 - 2xy^2 + 4]$.
Observe that $\delta_x p\equiv_b p$.
Instead of working on the input {\CPE} directly, we first convert it into a \emph{circuit with partial evaluation and degree reduction} by inserting 
degree-reduction operators after binary operations. 
This ensures all intermediate polynomials obtained by arithmetisation have low degree.

\begin{definition}
A \emph{circuit with partial evaluation and degree reduction over the set $S$ of variables ($S$-{\CPEDG})} is defined in the same manner as an $S$-{\CPE}, extended as follows:
\begin{itemize}
\item if $\varphi$ is an $S$-{\CPEDG}  and $x\in S$, then $\delta_x\varphi$ is an $S$-{\CPEDG},
\item $\Eval{\delta_x\varphi}:=\delta_x\Eval{\varphi}$, and
\item $\varphi$ is the only child of $\delta_x\varphi$.
\end{itemize}
For an $S$-{\CPEDG} $\varphi$ we define $\Free(\varphi)$, $\Abs{\varphi}$, children, descendants, and the graphical representation as for $S$-{\CPE}s.
\end{definition}

\newcommand{\Convert}{\operatorname{conv}}
We convert a {\CPE} $\varphi$ into a {\CPEDG} $\Convert(\varphi)$ by adding a degree-reduction operator for each free variable before any binary operation. 

\begin{definition}
Given a {\CPE} $\varphi$ with $\Free(\varphi)=\{x_1,...,x_k\}$, its associated {\CPEDG} $\Convert(\varphi)$ is inductively defined as follows:
\begin{itemize}
\item $\Convert(\False)=\False$, $\Convert(\True):=\True$, 
\item $\Convert(\neg \psi):=\neg\Convert(\psi)$,
\item $\Convert(\pev{x \Gets  b}\, \psi):=\pev{x \Gets  b} \, \Convert(\psi)$, and
\item $\Convert(\psi_1\circledast\psi_2):=\delta_{x_1}...\delta_{x_k}(\Convert(\psi_1) \circledast \Convert(\psi_2))$, for $\circledast\in\{\vee,\wedge\}$.
\end{itemize}
\end{definition}

Figure \ref{fig:red-polynomials} shows the {\CPEDG} $\Convert(\varphi)$ for the {\CPE} $\varphi$ of Figure \ref{fig:circuit}, together with the polynomials corresponding to each node.

We collect some basic properties of {\CPEDG}s:

\begin{restatable}{lemma}{lemconv}\label{lem:conv}
Let $\varphi$ be a {\CPE}. 
\begin{enumerate}[label={(\alph*)}]
\item $\Eval{\Convert(\varphi)}$ is a binary multilinear polynomial and $\Eval{\Convert(\varphi)}\equiv_b\Eval{\varphi}$.
\item For every descendant $\psi$ of $\Convert(\varphi)$, $\Eval{\psi}$ has maximum degree $2$.
\end{enumerate}
\end{restatable}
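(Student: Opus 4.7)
The plan is to prove both parts of the lemma by a single structural induction on $\varphi$. Coupling (a) and (b) is necessary: the degree bound in (b) at a newly-introduced binary operation node of $\Convert(\varphi)$ hinges on the children's arithmetisations being multilinear, which is exactly part (a). Throughout I will use that $\delta_x p \equiv_b p$ (since binary $x$ satisfies $x^i = x$ for $i \geq 1$); that sum, difference, product, and partial evaluation of polynomials all preserve $\equiv_b$; and that the same operations send binary polynomials to binary polynomials.

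For the base cases $\True, \False, x$, the {\CPEDG} $\Convert(\varphi) = \varphi$ has arithmetisation $\One$, $\Zero$, $[x]$: binary, multilinear, of degree at most $1$, and literally equal to $\Eval{\varphi}$. For the unary cases $\varphi = \neg\psi$ and $\varphi = \pev{x \Gets b}\psi$, the new top-level operation $\One - (\cdot)$, respectively $\pev{x \Gets b}(\cdot)$, preserves binary multilinearity and keeps the degree at most $1$, while binary equivalence $\Eval{\Convert(\varphi)} \equiv_b \Eval{\varphi}$ follows from the induction hypothesis for (a) together with the preservation properties above. In the binary case $\varphi = \psi_1 \circledast \psi_2$ with $\Free(\varphi) = \{x_1, \ldots, x_k\}$, let $q$ denote the arithmetisation of $\Convert(\psi_1) \circledast \Convert(\psi_2)$; this is either a product or $p_1 + p_2 - p_1 p_2$ of two multilinear polynomials by the induction hypothesis for (a). Hence $q$ has degree at most $2$ in every variable, which discharges (b) for the new binary-operation node and for every node in the $\delta$-chain above it, since each $\delta_{x_j}$ only decreases degrees. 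After all of $\delta_{x_1}, \ldots, \delta_{x_k}$ have been applied, every variable has degree at most $1$, so $\Eval{\Convert(\varphi)}$ is multilinear; and binary equivalence with $\Eval{\varphi}$ follows by combining the induction hypotheses with $\delta_{x_j} p \equiv_b p$ and the preservation lemmas. The remaining descendants of $\Convert(\varphi)$ lie inside $\Convert(\psi_i)$ and satisfy (b) by the induction hypothesis.

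The only real obstacle is recognising that (a) and (b) must travel together through the induction: the degree-$2$ bound at a binary-operation node relies on the children's arithmetisations being multilinear, so proving (a) first and then (b) separately would require duplicated work. A second small subtlety is that the $\delta_{x_i}$'s commute and only reduce degree in their own variable, so one must argue that applying the whole chain $\delta_{x_1} \cdots \delta_{x_k}$ jointly drives every variable down to degree $\leq 1$; this is immediate because the arithmetisation of an $S$-{\CPEDG} only mentions variables in $S = \{x_1, \ldots, x_k\}$. Beyond these two points, each step is a one-line verification using the elementary identities above.
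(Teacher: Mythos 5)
Your proof is correct and follows essentially the same route as the paper: structural induction on $\varphi$ using $\delta_x p \equiv_b p$, preservation of $\equiv_b$ under the arithmetisation operations, and the observation that a product (or $p+q-pq$) of multilinear polynomials has degree at most $2$, which the degree-reduction chain cannot increase and eventually drives down to multilinearity. The paper merely packages this slightly differently---it proves (a) for all CPEs first and then obtains (b) by applying (a) to the sub-CPEs $\Convert(\psi_i)$, so your claim that (a) and (b) \emph{must} travel through a single simultaneous induction is a little overstated, but the difference is cosmetic.
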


\CPEDG{}s have another useful property. Recall that given a \CPE\ $\varphi$ we are interested in its number of satisfying assignments.
The next lemma shows that this number can be computed by evaluating the polynomial $\Eval{\Convert(\varphi)}$ on \emph{a single input}.

\begin{restatable}{lemma}{numberofsat}
\label{lem:numberofsat}
A {\CPE} $\varphi$ with $n$ free variables has $m < \Abs{\F}$ satisfying assignments if{}f $\Pev{\sigma}\Eval{\Convert(\varphi)}=m\cdot2^{-n}$, where $\sigma$ is the assignment satisfying $\sigma(x) := 2^{-1}$ in the field $\F$ for every variable $x$.\footnote{Any prime field $\F$ with $\Abs{\F}>2$ has an element $c$ such that  $2 c = 1$.}
\end{restatable}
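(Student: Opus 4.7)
The plan is to reduce the statement to a standard identity for multilinear polynomials, namely that evaluating at the ``midpoint'' $1/2^n$ computes the average of the $2^n$ boolean evaluations. By Lemma~\ref{lem:conv}(a), $\Eval{\Convert(\varphi)}$ is multilinear and binary equivalent to $\Eval{\varphi}$, so both the averaging identity and Proposition~\ref{prop:fundeval} apply.

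First I would establish the following auxiliary fact: for any multilinear polynomial $p$ over variables $x_1,\ldots,x_n$,
\begin{equation*}
\Pev{\sigma}\, p \;=\; 2^{-n}\sum_{\tau\in\{0,1\}^n} \Pev{\tau}\, p,
\end{equation*}
where $\sigma(x_i) = 2^{-1}$ for all $i$. This follows by induction on $n$. In the inductive step, multilinearity in $x_n$ lets us write $p = (1-x_n)\, p_0 + x_n\, p_1$ where $p_0 = \pev{x_n\Gets 0}p$ and $p_1 = \pev{x_n\Gets 1}p$. Substituting $x_n := 2^{-1}$ yields $\pev{x_n\Gets 2^{-1}}p = 2^{-1}(p_0 + p_1)$, and the induction hypothesis applied to $p_0$ and $p_1$ (both multilinear in the remaining $n-1$ variables) closes the argument.

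Second, I apply this fact to $p := \Eval{\Convert(\varphi)}$, which is multilinear by Lemma~\ref{lem:conv}(a). For each binary assignment $\tau\colon\Free(\varphi)\to\{0,1\}$, binary equivalence gives $\Pev{\tau}\Eval{\Convert(\varphi)} = \Pev{\tau}\Eval{\varphi}$, and Proposition~\ref{prop:fundeval} identifies this value with $P_\varphi(\tau)\in\{\Zero,\One\}$. Hence $\sum_\tau \Pev{\tau}\Eval{\Convert(\varphi)}$ equals the number of satisfying assignments of $\varphi$, viewed as an element of $\F$. Combined with the averaging identity,
\begin{equation*}
\Pev{\sigma}\Eval{\Convert(\varphi)} \;=\; 2^{-n}\cdot m \pmod{\Abs{\F}}.
\end{equation*}

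Finally, the ``iff'' uses the hypothesis $m < \Abs{\F}$: the map $k \mapsto k\cdot 2^{-n}$ is a bijection on $\F$, and the integers $0,1,\ldots,\Abs{\F}-1$ inject into $\F$, so the field-level equation $\Pev{\sigma}\Eval{\Convert(\varphi)} = m\cdot 2^{-n}$ is equivalent to the integer-level equation ``the number of satisfying assignments equals $m$''. The main subtlety to get right is exactly this last point — making sure the averaging identity, which is a statement in $\F$, recovers the true integer count, which is why the bound $m < \Abs{\F}$ appears in the statement; the multilinearity step itself is routine once set up.
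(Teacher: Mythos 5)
Your proposal is correct and follows essentially the same route as the paper: both reduce the count to $\sum_{\tau}\Pev{\tau}\Eval{\Convert(\varphi)}$ via Proposition~\ref{prop:fundeval} and binary equivalence, and then use multilinearity to show that evaluating each variable at $2^{-1}$ halves the corresponding two-term sum (the paper phrases your induction as iterated application of the identity $\pev{x\Gets 0}p+\pev{x\Gets 1}p = 2\cdot\pev{x\Gets 2^{-1}}p$). Your explicit remark on why $m<\Abs{\F}$ yields the ``iff'' is a point the paper leaves more implicit, but the substance is identical.
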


\subsection{\CPCertify: An Interactive Protocol for \Pref{\#\CPE}}
\label{subsec:CPCertify}

We describe an interactive protocol, called \CPCertify, for a {\CPE} $\varphi$ with $n$ free variables. Let $X$ denote the variables used in $\varphi$.
Prover and Verifier fix a finite field with at least $m+1$ elements, where $m$ is an upper bound on the number of assignments (e.g.\ $m=2^n$).
Prover tries to convince the Verifier that $\Pev\sigma\Eval{\Convert(\varphi)} = K$ for some $K \in \F$. 

In the protocol, Verifier challenges Prover to compute polynomials of the form $\Pev{\sigma}(\Eval{\psi})$, where $\psi$ is a node of the {\CPEDG}
$\Convert(\varphi)$ and $\sigma \colon \Free(\psi)\rightarrow\F$ is a (non-binary!) assignment; we call the expression 
$\Pev{\sigma}\Eval{\Convert(\psi)}$ a \emph{challenge}. Observe that all assignments are chosen by Verifier.
Prover answers with some $k \in \F$. We call the expression
$\Pev{\sigma}\Eval{\Convert(\psi)}=k$ a \emph{claim}, or the \emph{answer} to the challenge $\Pev{\sigma}\Eval{\Convert(\psi)}$.

\CPCertify\ consists of an initialisation and a number of rounds, one for each descendant of $\Convert(\varphi)$. 
Rounds are executed in topological order, starting at the root, i.e.\ at $\Convert(\varphi)$ itself.
The structure of a round for a node $\psi$ of  $\Convert(\varphi)$ depends on whether $\psi$ is an internal node (including the root),
or a leaf. 

At each point, Verifier keeps track of a set $\Claimset $ of claims that must be checked.

\Parag{Initialisation} Verifier sends Prover the challenge $\Pev{\sigma}\Eval{\Convert(\varphi)}$, where  $\sigma(x) := 2^{-1}$ for every $x \in \Free(\varphi)$. 
Prover returns the claim $\Pev\sigma\Eval{\Convert(\varphi)}=K$ for some $K \in \F$. (By Lemma \ref{lem:numberofsat}, this amounts to claiming that $\varphi$ has $K \cdot 2^n$ satisfying assignments.) Verifier initialises $\Claimset :=\{\Pev\sigma\Eval{\Convert(\varphi)}=K\}$.

\Parag{Round for an internal node}  A round for an internal node $\psi$ runs as follows:
\begin{itemize}
\item[(a)] Verifier collects all claims $\{ \Pev{\sigma_i}\Eval{\psi} = k_i\}_{i=1}^m$ in $\Claimset $ relating to $\psi$, with assignments $\sigma_1,\ldots,\sigma_m \colon \Free(\psi)\rightarrow\F$ and  $k_1,...,k_m\in\F$. (Initially $\psi=\Convert(\varphi)$ and the only claim is $\Pev\sigma\Eval{\Convert(\varphi)}=K$.)

\item[(b)] If $m > 1$, Verifier interacts with Prover to compute a unique claim $\Pev{\sigma}\Eval{\psi} = k$ such that very likely\footnote{The precise bound on the failure probability will be given in Proposition~\ref{prop:algorithm}.} the claim is true only if all claims $\{\Pev{\sigma_i}\Eval{\psi} = k_i\}_{i=1}^m$ are true. For this, Verifier sends a number of challenges, and checks that the answers are \emph{consistent} with the prior claims. Based on these answers, Verifier then derives new claims. (See ``Description of step (b)'' below.)

\item[(c)] Verifier interacts with Prover to compute a claim $\Pev{\sigma'}\Eval{\psi'}=k'$ for each child $\psi'$ of $\psi$. This is similar to (b): if $\Pev{\sigma}\Eval{\psi} \ne k$, i.e.\ the unique claim from (b) does not hold, then very likely one of the resulting claims will be wrong. Depending on the type of $\psi$, the claims are computed based on the answers of Prover to challenges sent by Verifier. (See ``Description of step (c)'' below.)

\item[(d)] In total, Verifier removed the claims $\{ \Pev{\sigma_i}\Eval{\psi} = k_i\}_{i=1}^m$ from $\Claimset $, and replaced them by one claim $\Pev{\sigma'}\Eval{\psi'}=k'$ for each child $\psi'$ of $\psi$.
\end{itemize}
Observe that, since a node $\psi$ can be a child of several nodes, Verifier may collect multiple claims for $\psi$, one for each parent node.

\Parag{Round for a leaf}  If $\psi$ is a leaf, then $\psi=x$ for a variable $x$, or $\psi\in\{\True,\False\}$. Verifier removes all claims $\{ \Pev{\sigma_i}\Eval{\psi} = k_i\}_{i=1}^m$ from $\Claimset $, computes the values $c_i:=\Pev{\sigma_i}\Eval{\psi}$, and rejects if $k_i\ne c_i$ for any $i$.

Observe that if all claims made by Prover about leaves are true, then very likely Prover's initial claim is also true. 

\Parag{Description of step (b)} Let $\{ \Pev{\sigma_i}\Eval{\psi} = k_i\}_{i=1}^m$ be the claims in $\Claimset $ relating to node $\psi$. Verifier and Prover conduct step (b) as follows: 
\begin{enumerate}[align=left]
\item[(b.1)] While there exists $x\in X$ s.t.\ $\sigma_1(x), \ldots ,\sigma_m(x)$ are not pairwise equal:
\begin{enumerate}[align=left]
\item[(b.1.1)] For every $i\in\{1,...,m\}$, let $\sigma_i'$ denote the partial assignment which is undefined on $x$ and otherwise matches $\sigma_i$. Verifier sends the challenges 
$\{\Pev{\sigma'_i}\Eval{\psi}\}_{i=1}^m$ to Prover. Prover answers with claims $\{\Pev{\sigma'_i}\Eval{\psi}=p_i\}_{i=1}^m$. Note that $p_1, \ldots, p_m$ are univariate polynomials with free variable $x$.
\item[(b.1.2)] Verifier checks whether $k_i=\pev{x \Gets  \sigma_i(x)} \, p_i$ holds for each $i$. If not, Verifier rejects. Otherwise, Verifier picks $r\in\F$ uniformly at random and updates $\sigma_i(x):=r$ and $k_i:=\pev{x \Gets  r}p_i$ for every $i\in\{1,...,m\}$.
\end{enumerate}
\item[(b.2)] If after exiting the loop the values $k_1,...,k_m$ are not pairwise equal, Verifier rejects. Otherwise (that is, if $k_1 =k_2 = \cdots =k_m)$, the set $\Claimset $ now contains a unique claim $\Pev\sigma\Eval{\psi}=k$ relating to $\psi$.
\end{enumerate}

\begin{example}
Consider the case in which $X = \{x\}$, and Prover has made two claims, $\Pev{\sigma_1}\Eval{\psi} = k_1$ and  $\Pev{\sigma_2}\Eval{\psi} = k_2$ with $\sigma_1(x)=1$ and $\sigma_2(x)=2$. 
In step (b.1.1) we have $\sigma_1'=\sigma_2'$ (both are  the empty assignment), and so Verifier sends the challenge
$\Eval{\psi}$ to Prover twice, who answers with claims $\Eval{\psi}=p_1$ and $\Eval{\psi}=p_2$. In step (b.1.2) Verifier checks that $p_1(1)=k_1$ and $p_2(2)=k_2$ hold, picks a random number $r$,
and updates $\sigma_1(x):=\sigma_2(x):= r$ and $k_1:=p_1(r),k_2:=p_2(r)$. Now the condition of the while loop fails, so Verifier moves to (b.2) and checks $k_1=k_2$.
\end{example}

\Parag{Description of step (c)} Let $\Pev{\sigma}\Eval{\psi} = k$ be the claim computed by Verifier in step (b). Verifier removes this claim from $\Claimset $ and replaces it by claims about the children of $\psi$, depending on the structure of $\psi$: 
\begin{enumerate}[align=left]
\item[(c.1)] If $\psi=\psi_1\circledast\psi_2$, for a $\circledast\in\{\vee,\wedge\}$, then Verifier sends Prover challenges $\Pev\sigma\Eval{\psi_i}$ for $i\in\{1,2\}$, and Prover sends claims $\Pev\sigma\Eval{\psi_i}=k_i$ back. Verifier checks the consistency condition $k=\pev{x \Gets  k_1}\pev{y \Gets k_2}\Eval{x\circledast y}$, rejecting if it does not hold.  If the condition holds, the claim $\Pev\sigma\Eval{\psi_i}=k_i$ is added to $\Claimset $, to be checked in the round for $\psi_i$.

\item[(c.2)] If $\psi=\neg \psi'$, then Verifier adds the claim $\Pev\sigma\Eval{\psi'}=1-k$ to $\psi'$.

\item[(c.3)] If $\psi=\pev{x \Gets  b} \, \psi'$, Verifier sets $\sigma' := \sigma\cup\{x \mapsto b\}$ and adds  the claim $\Pev{\sigma'}\Eval{\psi'}=k$ to $\Claimset $.

\item[(c.4)] If $\psi=\delta_x\psi'$, then Verifier sends Prover the challenge $\Pev{\sigma'}\Eval{\psi'}$, where $\sigma'$ denotes the partial assignment which is undefined on $x$ and otherwise matches $\sigma$. Prover returns the claim $p:=\Pev{\sigma'}\Eval{\psi'}$. Observe that $p$ is a univariate polynomial over $x$. Verifier checks the consistency condition $\pev{x \Gets  \sigma(x)}\delta_x \, p=k$, rejecting if it does not hold. If it holds, Verifier picks an $r\in\F$ uniformly at random, conducts the updates $\sigma(x):=r$ and $k:=\pev{x \Gets  r} \, p$, and adds $\Pev\sigma\Eval{\psi'}=k$ to the set of claims about $\psi'$.
\end{enumerate}

This concludes the description of the interactive protocol. 
We now show \CPCertify\ is complete and sound.

\begin{restatable}[\CPCertify\ is complete and sound]{proposition}{cpcertifycorrect}\label{prop:algorithm}
Let $\varphi$ be a {\CPE} with $n$ free variables.
Let $\Pev\sigma\Eval{\Convert(\varphi)}=K$ be the claim initially sent by Prover to Verifier. 
If the claim is true, then Prover has a strategy to make Verifier accept. 
If not, for every Prover, Verifier accepts with probability at most $4n\Abs{\varphi}/\Abs{\F}$.
\end{restatable}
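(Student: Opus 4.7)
The plan is to prove completeness and soundness separately. Completeness will follow by exhibiting an Honest Prover and checking that every consistency test reduces to an arithmetic identity; soundness will follow from an invariant-based argument that tracks \emph{incorrect claims} through each round, together with a Schwartz--Zippel estimate at every random choice.

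For completeness, I would let the Honest Prover answer each polynomial-valued challenge $\Pev{\sigma'}\Eval{\psi}$ with the truthful polynomial $p := \Pev{\sigma'}\Eval{\psi}$ and each field-valued challenge with the correct field element. Then every consistency test in the protocol becomes an arithmetic identity: in step~(b.1.2), $\pev{x \Gets \sigma_i(x)} p_i = \Pev{\sigma_i}\Eval{\psi} = k_i$; after step~(b.2) all $k_i$ have collapsed to a common value $\Pev{\sigma}\Eval{\psi}$; steps~(c.1)--(c.3) reduce to the defining equations of $\Eval{\psi_1\circledast\psi_2}$, $\Eval{\neg\psi'}$ and $\Eval{\pev{x\Gets b}\psi'}$; and in step~(c.4), $\pev{x\Gets \sigma(x)}\delta_x p = \Pev{\sigma}\Eval{\psi} = k$ by commuting $\delta_x$ past $\Pev{\sigma'}$. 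A straightforward induction then shows the invariant ``every claim in $\Claimset$ is true'' is preserved, so Verifier never rejects.

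For soundness, I would call a claim $\Pev{\sigma}\Eval{\psi}=k$ \emph{incorrect} if the equation fails, and maintain the invariant that at the start of every round either Verifier has already rejected or $\Claimset$ contains an incorrect claim about some not-yet-processed node. Since leaves are checked deterministically by direct field evaluation, the invariant forces rejection at termination. The algebraic lever is Lemma~\ref{lem:conv}(b), which bounds the degree of every $\Eval{\psi}$ by $2$ in each variable; hence each univariate restriction used in the protocol has degree at most $2$, and the Schwartz--Zippel bound gives that a univariate polynomial supplied by the prover which differs from the truthful one agrees with it at a uniform random $r$ with probability at most $2/\Abs{\F}$. I would then argue case by case: in step~(b.1.2), if some claim $i$ is incorrect and Verifier does not reject, the prover's $p_i$ must differ from $\Pev{\sigma_i'}\Eval{\psi}$ (otherwise the consistency check would already catch it), so the updated claim $i$ remains incorrect except with probability $2/\Abs{\F}$; iterating over the at most $\Abs{\Free(\psi)}\le n$ loop runs bounds the error contribution of step~(b) per node. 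Steps~(c.1)--(c.3) preserve incorrectness deterministically via the arithmetic definitions; step~(c.4) contributes one more random choice with error at most $2/\Abs{\F}$. A union bound over all rounds, combined with $\Abs{\Convert(\varphi)}=O(n\Abs{\varphi})$, delivers the claimed bound $4n\Abs{\varphi}/\Abs{\F}$ after careful accounting of the constants.

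The main obstacle I anticipate is step~(b): because $\Convert(\varphi)$ is a DAG, a single node can accumulate many claims through its different parents, and these are all updated simultaneously under the same random $r$. Pinpointing a specific incorrect claim that survives each loop iteration --- and verifying that the terminal equality check in~(b.2) either rejects or outputs an incorrect merged claim --- requires careful bookkeeping, though the underlying mechanism is the standard sum-check trick applied to the single ``tracked'' false claim.
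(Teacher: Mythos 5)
Your proposal follows essentially the same route as the paper: completeness by an honest prover whose truthful answers turn every consistency check into an arithmetic identity, and soundness by tracking a false claim through $\Claimset$, applying Schwartz--Zippel (with degree bound $2$ from Lemma~\ref{lem:conv}(b)) at each randomised step, and taking a union bound. Your ``invariant'' formulation (either Verifier has rejected or $\Claimset$ still contains a false claim) is just the contrapositive of the paper's ``Prover tricks Verifier at some step'' framing, and your treatment of steps (b.1.2), (b.2), (c.1)--(c.3) and (c.4) matches the paper's case analysis.

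There is, however, one concrete gap in the accounting, and it is a factor of $n$, not a constant. You charge up to $n$ iterations of the step-(b) loop to each node of $\Convert(\varphi)$, and there are up to $n\Abs{\varphi}$ such nodes; a union bound over those events gives $O(n^2\Abs{\varphi}/\Abs{\F})$, not the claimed $4n\Abs{\varphi}/\Abs{\F}$. The missing observation is that step (b) is only nontrivial at nodes that accumulate \emph{more than one} claim, i.e.\ nodes with more than one parent, and the conversion $\Convert$ does not create new multi-parent nodes (the inserted degree-reduction chains each have a single parent). Hence the number of such nodes is at most $\Abs{\varphi}$, so step (b.1) contributes at most $n\Abs{\varphi}$ randomised steps in total; step (c.4) contributes at most $\Abs{\Convert(\varphi)}\le n\Abs{\varphi}$ more. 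This yields $2n\Abs{\varphi}$ events, each of probability at most $2/\Abs{\F}$, and hence the stated constant $4$. With that observation inserted, your argument goes through; your flagged concern about many claims being updated under the same random $r$ in step (b) is handled exactly as you suggest, by tracking a single fixed false claim, whose survival probability per iteration is $2/\Abs{\F}$ independently of how many other claims are merged alongside it.
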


If the original claim is correct, Prover can answer every challenge truthfully and all claims pass all of Verifier's checks.
So Verifier accepts.
If the claim is not correct, we proceed round by round.
We bound the probability that the Verifier is tricked in a single step to at most $2/\Abs{\F}$ using the Schwartz-Zippel Lemma.
We then bound the number of such steps to $2n \Abs{\varphi}$ and use a union bound.

\section{A BDD-based Prover}
\label{sec:bddprover}

We assume familiarity with \emph{reduced ordered binary decision diagrams} (BDDs) \cite{Bryant86}.
We use BDDs over $X=\{x_1, \ldots, x_n\}$. We fix the variable order $x_1 < x_{2} < \ldots < x_n$, i.e.\ the root node would decide based on the value of $x_n$.\footnote{Throughout the paper we use the convention that $x_1$ is near the leaves, and $x_n$ at the root.}

\begin{definition}
\label{def:BDD}
BDDs are defined inductively as follows: 
\begin{itemize}
\item $\Bdd{\True}$ and $\Bdd{\False}$ are BDDs of level $0$;
\item if $u\ne v$ are BDDs of level $\ell_u, \ell_v$ and $i > \ell_u, \ell_v$, then $\Bdd{x_i, u, v }$ is a BDD of level $i$;
\item we identify $\Bdd{x_i, u, u }$ and  $u$, for a BDD $u$ of level $\ell_i$ and $i>\ell_u$.
\end{itemize}
The level of a BDD $w$ is denoted $\Level(w)$. 
The set of \emph{descendants} of $w$ is the smallest set $S$ with $w\in S$ and $u,v\in S$ for all $\Bdd{x,u,v}\in S$. 
The \emph{size} $\Abs{w}$ of $w$ is the number of its descendants. 

The \emph{arithmetisation} of a BDD $w$ is the polynomial $\Eval{w}$ defined
as follows: $\Eval{\Bdd{\True}}:=\One$, $\Eval{\Bdd{\False}}:=\Zero$ and $\Eval{\Bdd{x,u,v}}:=[1-x]\cdot\Eval{u}+[x]\cdot\Eval{v}$.
\end{definition}

Figure \ref{fig:BDD} shows a BDD for the boolean function $\varphi(x,y,z)=(x \wedge y \wedge \neg z) \vee (\neg x  \wedge y \wedge z ) \vee (x \wedge \neg y \wedge z)$
and the arithmetisation of each node.

\parag{\BDDSolver: A BDD-based Algorithm for \textnormal{\Pref{\#\CPE}}}
An instance $\varphi$ of \Pref{\#\CPE} can be solved using BDDs.
Starting at the leaves of $\varphi$, we iteratively compute a BDD for each node $\psi$ of the circuit
encoding the boolean predicate $P_\psi$. 
At the end of this procedure we obtain a BDD for $P_\varphi$. The number of satisfying assignments of $\psi$ is the number of
accepting paths of the BDD, which can be computed in linear time in the size of the BDD. 

For a node $\psi = \psi_1 \circledast \psi_2$, given BDDs representing the predicates $P_{\varphi_1}$ and $P_{\varphi_2}$, we compute a BDD for the predicate 
$P_\varphi := P_{\varphi_1} \circledast P_{\varphi_2}$, using the $\bddop$ operator on BDDs.
We name this algorithm for solving \Pref{\#\CPE} ``\BDDSolver.''

\begin{wrapfigure}{R}{0.37\textwidth}
\tikzstyle{bddnode}=[circle,draw,thick,minimum width=2em,minimum height=3ex]
\tikzstyle{product}=[bddnode,rectangle,draw=niceblue,fill=nicebgblue]
\tikzstyle{edge1}=[thick]
\tikzstyle{edge0}=[edge1,very thick,dashed,nicered]
\tikzstyle{edgep}=[edge1,niceblue]
\tikzset{-{Latex[length=2mm, width=2mm]},node distance=15mm}
\tikzstyle{diag}=[node distance=16mm]
\renewcommand{\boxed}[1]{[#1]}
\centering
\begin{tikzpicture}
\begin{scope}[scale=0.5]
\node[bddnode] (q1) {$x$};
\node[left = 0cm of q1,xshift=-1mm] (l1) {${\begin{array}{c}[xy + yz +\\ zx - 3xyz]\end{array}}$};
\node[bddnode, below left of=q1,diag,xshift=4mm] (q2) {$y$};
\node[left = 0cm of q2,yshift=-1mm] (l2) {${\begin{array}{c}[y+z\\-2yz]\end{array}}$};
\node[bddnode, below right of=q1,diag,xshift=-4mm] (q3) {$y$};
\node[right = 0cm of q3] (l3) {$\boxed{yz}$};
\node[bddnode, below of=q2] (q4) {$z$};
\node[left = 0cm of q4] (l4) {$\boxed{1-z}$};
\node[bddnode, below of=q3] (q5) {$z$};
\node[right = 0cm of q5] (l5) {$\boxed{z}$};
\node[bddnode, below right of=q4,diag,xshift=-4mm,ellipse] (q6) {$\True$};
\node[left = 0cm of q6] (l5) {$\boxed{1}$};  
\draw (q1) edge[edge0] (q3)
(q1) edge[edge1] (q2)
(q2) edge[edge1] (q4)
(q2) edge[edge0] (q5)
(q3) edge[edge1] (q5)
(q4) edge[edge0] (q6)
(q5) edge[edge1] (q6);
\end{scope}
\end{tikzpicture}
\caption{A BDD and its arithmetisation. For $\Bdd{x, u, v}$, we denote the link
from $x$ to $v$ with a solid edge and $x$ to $u$ with a dotted edge. We omit links to $\Bdd{\False}$.
\label{fig:BDD}
}
\ifx\Fullversion\undefined
\vspace*{-10mm}
\else
\vspace*{-3mm}
\fi
\end{wrapfigure}

\parag{From \BDDSolver\ to \CPCertify}
Our goal is to modify \BDDSolver\ to play the role of an honest Prover in \CPCertify\ with minimal overhead.
In \CPCertify, Prover repeatedly performs the same task: 
evaluate polynomials of the form $\Pev\sigma\Eval{\psi}$, where $\psi$ is a descendant of the {\CPEDG} $\Convert(\varphi)$,
and $\sigma$ assigns values to all free variables of $\psi$ except possibly one. 
Therefore, the polynomials have at most one free variable and, as we have seen, degree at most 2. 

Before defining the concepts precisely, we give a brief overview of this section.
\begin{itemize}
\item First (Proposition~\ref{prop:basic}), we show that BDDs correspond to binary multilinear polynomials. In particular, BDDs allow for efficient evaluation of the polynomial. 
As argued in Lemma~\ref{lem:conv}(a), for every descendant $\psi$ of $\varphi$, the CPD $\Convert(\psi)$ (which is a descendant of $\Convert(\varphi)$) evaluates to a multilinear polynomial. In particular, Prover can use standard BDD algorithms to calculate the corresponding polynomials $\Pev\sigma\Eval{\psi}$ for all descendants $\psi$ of $\Convert(\varphi)$ that are neither binary operators nor degree reductions.
\end{itemize}
\par
\ifx\Fullversion\undefined
\vspace{-\baselineskip-\topsep}
\else
\vspace{-1ex}
\fi
\begin{itemize}
\item Second (the rest of the section), we prove a surprising connection: the intermediate results obtained while executing the BDD algorithms 
(with slight adaptations) correspond precisely to the remaining descendants of $\Convert(\varphi)$.
\end{itemize}

The following proposition proves that BDDs represent exactly the binary multilinear polynomials. 

\begin{restatable}{proposition}{basicproperties}\label{prop:basic}
\begin{enumerate}[label={(\alph*)}]
\item For a BDD $w$, $\Eval{w}$ is a binary multilinear polynomial.
\item For a binary multilinear polynomial $p$ there is a unique BDD $w$ s.t.\ $p =\Eval{w}$.
\end{enumerate} 
\end{restatable}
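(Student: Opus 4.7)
The plan is to handle (a) by a routine structural induction on BDDs, and (b) by separating existence and uniqueness, both by induction.

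For part (a), I first observe (as a preliminary induction) that any BDD $w$ of level $\ell$ only mentions variables from $\{x_1,\ldots,x_\ell\}$ in $\Eval{w}$. Then I induct on the structure of $w$. The base cases $\Bdd{\True}$ and $\Bdd{\False}$ give the constant polynomials $\One$ and $\Zero$, which are vacuously binary and multilinear. For the inductive case $w = \Bdd{x_i, u, v}$, the induction hypothesis gives that $\Eval{u}$ and $\Eval{v}$ are binary multilinear, and by the preliminary observation they do not involve $x_i$ (since $\ell_u, \ell_v < i$). Hence $\Eval{w} = [1-x_i]\cdot\Eval{u} + [x_i]\cdot\Eval{v}$ is multilinear in $x_i$ (degree $1$) and multilinear in the remaining variables. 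For the binary property, I evaluate at a binary $\sigma$: the expression collapses to $\Pev{\sigma}\Eval{u}$ if $\sigma(x_i)=0$ and to $\Pev{\sigma}\Eval{v}$ if $\sigma(x_i)=1$, both in $\{\Zero,\One\}$ by the induction hypothesis.

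For part (b), existence proceeds by induction on the number of variables occurring in $p$. If none occur, then $p$ is a field constant; since $p$ is binary on the empty assignment, $p \in \{\Zero,\One\}$, giving $\Bdd{\False}$ or $\Bdd{\True}$. Otherwise let $x_i$ be the variable of largest index in $p$. Since $p$ is multilinear in $x_i$, I can uniquely decompose $p = [1-x_i]\cdot p_0 + [x_i]\cdot p_1$ with $p_0 = \pev{x_i\Gets 0}p$ and $p_1 = \pev{x_i\Gets 1}p$, both of which are binary multilinear polynomials whose variables have indices below $i$. By induction, $p_0$ and $p_1$ correspond to BDDs $u$ and $v$ of level less than $i$. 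If $u\ne v$, set $w := \Bdd{x_i,u,v}$; if $u=v$, the decomposition forces $p = \Eval{u}$, so set $w := u$.

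For uniqueness, I induct on the level. Suppose $\Eval{w_1}=\Eval{w_2}$. If both sides are constant, then both BDDs must be leaves and the constant determines which. Otherwise, let $i$ be the largest variable index appearing in the common polynomial; I claim the root level of any BDD representing it equals $i$. Indeed, a BDD with root $\Bdd{x_j,u,v}$ satisfies $u\ne v$ by definition, so by part (a) and the induction hypothesis $\Eval{u}\ne\Eval{v}$; since $\Eval{u},\Eval{v}$ involve only variables below $x_j$, the variable $x_j$ genuinely appears in $\Eval{\Bdd{x_j,u,v}}$, and no higher-indexed variable does, so $j=i$. Writing both BDDs as $w_t=\Bdd{x_i,u_t,v_t}$ for $t\in\{1,2\}$, partially evaluating the common polynomial at $x_i\Gets 0$ gives $\Eval{u_1}=\Eval{u_2}$ and at $x_i\Gets 1$ gives $\Eval{v_1}=\Eval{v_2}$, so the induction hypothesis yields $u_1=u_2$ and $v_1=v_2$, hence $w_1=w_2$.

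The only subtle step is the uniqueness argument, where the reducedness condition $u\ne v$ in the BDD definition is essential: without it one could not pin down the root level from the polynomial, since $\Bdd{x_i,u,u}$ and $u$ would represent the same polynomial at different levels. The identification rule in Definition~\ref{def:BDD} is precisely what makes uniqueness go through.
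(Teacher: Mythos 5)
Your proof is correct and follows essentially the same route as the paper's: part (a) by structural induction with the same observation that a binary assignment collapses $[1-x_i]\cdot\Eval{u}+[x_i]\cdot\Eval{v}$ to one of the two branches, and existence in (b) via the Shannon decomposition $p=[1-x_i]\cdot\pev{x_i\Gets 0}p+[x_i]\cdot\pev{x_i\Gets 1}p$ on the top variable. Your uniqueness argument is organised as a direct induction on level (first pinning down the root level from the polynomial) rather than the paper's minimal-counterexample over pairs of BDDs, but both hinge on exactly the same point you identify: the reducedness condition $u\ne v$ forces $\Eval{u}\ne\Eval{v}$, so the root variable genuinely occurs.
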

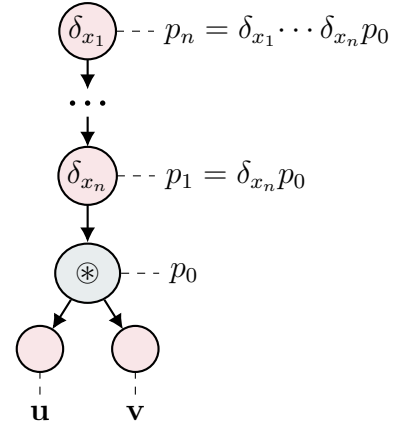
\begin{wrapfigure}{R}{0.4\linewidth}
\tikzstyle{edge1}=[thick]
\begin{center}
\begin{tikzpicture}
\begin{scope}[scale=0.8, xshift=6cm, yshift=-2cm]
\node[cnode] (n) {$\circledast$};
\node[right = 0.5cm of n](ln){$p_0$};
\node[dnode, above = 0.5cm of n](dy){$\delta_{x_n}$};
\node[right = 0.5cm of dy](ldy){$p_1 = \delta_{x_n} p_0$};
\node[above = 0.4cm of dy](dots){\LARGE $\ldots$};
\node[dnode, above = 0.4cm of dots](dx){$\delta_{x_1}$};
\node[right = 0.5cm of dx](ldx){$p_n = \delta_{x_1} \!\cdots \delta_{x_n} p_0$};
\node[dnode, below left = 0.5cm and 0.1cm of n](n1){\phantom{$\delta_x$}};
\node[below= 0.3 of n1] (l11) {$\mathbf{u}$};
\node[dnode, below right = 0.5cm and 0.1cm of n](n2){\phantom{$\delta_x$}};  
\node[below= 0.3 of n2] (l21) {$\mathbf{v}$};    
\draw[-{Latex[length=2mm, width=2mm]}]
(dx) edge[edge1] (dots)
(dots) edge[edge1] (dy)
(dy) edge[edge1] (n)
(n) edge[edge1] (n1)
(n) edge[edge1] (n2);
\draw
(n) edge[dashed] (ln)
(dx) edge[dashed] (ldx)
(dy) edge[dashed] (ldy)
(l11) edge[dashed] (n1)
(l21) edge[dashed] (n2);
\end{scope}
\end{tikzpicture}
\end{center}
\vspace*{-6mm}
\caption{A node of a {\CPE} ($\circledast$) gets a chain of degree reduction nodes in the associated \CPEDG.}
\label{fig:chain}
\ifx\Fullversion\undefined
\vspace*{-9mm}
\else
\vspace*{-3mm}
\fi
\end{wrapfigure}

\subsection{Extended BDDs}

During the execution of \CPCertify\ for a given {\CPEDG} $\Convert(\varphi)$,
Prover sends to Verifier claims of the form $\Pev\sigma \Eval{\psi}$, where $\psi$ is a descendant of $\Convert(\varphi)$, and $\sigma \colon X \rightarrow \F$ is a partial assignment. 
While all polynomials computed by \CPCertify\ are binary, not all are multilinear: some polynomials have degree 2.
For these polynomials, we introduce \emph{extended BDDs} (eBDDs) and give eBDD-based algorithms for the following two tasks:
\begin{enumerate}
\item Compute an eBDD representing $\Eval{\psi}$ for every node $\psi$ of $\Convert(\varphi)$.
\item Given an eBDD  for $\Eval{\psi}$ and a partial assignment $\sigma$, compute $\Pev\sigma \Eval{\psi}$.
\end{enumerate}

\parag{Computing eBDDs for {\CPEDG}s: Informal introduction}
Consider a {\CPE} $\varphi$ and its associated {\CPEDG} $\Convert(\varphi)$.
Each node of $\varphi$ induces a chain of nodes in $\Convert(\varphi)$, consisting of degree-reduction nodes $\delta_{x_1}, \ldots, \delta_{x_n}$, followed by the 
node itself (see Figure \ref{fig:chain}). 
Given BDDs $u$ and $v$ for the children of the node in the {\CPE}, we can compute a BDD for the node itself using a well-known BDD algorithm $\bddop(u,v)$ 
parametric in the boolean operation $\circledast$ labelling the node \cite{Bryant86}. Our goal is to transform $\bddop$ into an algorithm that computes eBDDs \emph{for all nodes in the chain}, i.e.\ eBDDs for all the polynomials $p_0, p_1, \ldots, p_n$ of Figure \ref{fig:chain}. 

Roughly speaking, $\bddop(u, v)$ recursively computes BDDs $w_0 = \bddop(u_0, v_0)$ and $w_1 =\bddop(u_1, v_1)$, where $u_b$ and $v_b$ are the $b$-children of $u$ and $v$, and then returns the BDD with $w_0$ and $w_1$ as $0$- and $1$-child, respectively.\footnote{In fact, this is only true when $u$ and $v$ are nodes at the same level and $\bddop(u_0, v_0)\ne\bddop(u_1, v_1)$, but at this point we only want to convey some intuition.}

Most importantly, we modify $\bddop$ to run in breadth-first order. Figure~\ref{fig:BDDop} shows a graphical representation of a run of $\bddopvee(u, v)$, where $u$ and $v$ are the two BDD nodes labelled by $x$. Square nodes represent pending calls to $\bddop$. Initially there is only one square call $\bddopvee(u, v)$ (Figure~\ref{fig:BDDop}, top left). 
$\bddopvee$ calls itself recursively for $u_0, v_0$ and $u_1, v_1$ (Figure~\ref{fig:BDDop}, top right). 
Each of the two calls splits again into two; however, the first three are identical (Figure~\ref{fig:BDDop}, bottom left), and so reduce to two. 
These two calls can now be resolved directly; they return nodes $\True$ and $\False$, respectively. At this point, the children of $\bddop(u, v)$ become $\Bdd{y,\True,\True}=\True$, and $\Bdd{y,\True,\False}$, which exists already as well (Figure \ref{fig:BDDop}, bottom right).

We look at the diagrams of Figure \ref{fig:BDDop} not as a visualisation aid, but as graphs with two kinds of nodes: 
standard BDD nodes, represented as circles, and \emph{product} nodes, represented as squares. We call them \emph{extended BDDs}.
Each node of an extended BDD is assigned a polynomial in the expected way: 
the polynomial $\Eval{u}$ of a standard BDD node $u$ with variable $x$  is $x \cdot \Eval{u_1} + (1-x) \cdot\Eval{u_0}$, 
the polynomial $\Eval{v}$ of a square $\wedge$-node $v$ is $\Eval{v_0} \cdot \Eval{v_1}$, etc. In this way we assign to each eBDD  a polynomial. In particular, we obtain the intermediate  polynomials $p_0, p_1, p_2, p_3$ of the figure, one for each level in the recursion. In the rest of the section we show that these are \emph{precisely} the polynomials $p_0, p_1, \ldots, p_n$ of Figure~\ref{fig:chain}.

Thus, in order to compute eBDDs for all nodes of a {\CPEDG} $\Convert(\varphi)$, it suffices to  compute BDDs for all nodes of the {\CPE} $\varphi$. 
Since we need to do this anyway to solve \Pref{\#\CPE}, the polynomial certification does not incur any overhead.

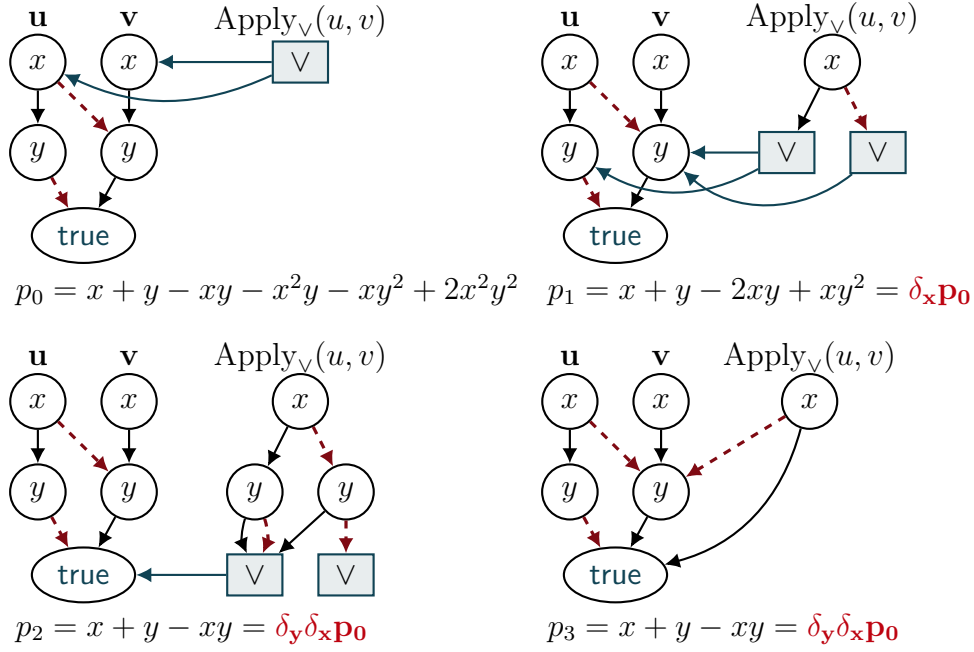
\begin{figure}[t]
\tikzstyle{bddnode}=[circle,draw,thick,minimum width=4ex,minimum height=3ex]
\tikzstyle{product}=[bddnode,rectangle,draw=niceblue,fill=nicebgblue]
\tikzstyle{edge1}=[thick]
\tikzstyle{edge0}=[edge1,very thick,dashed,nicered]
\tikzstyle{edgep}=[edge1,niceblue]
\tikzstyle{edgel}=[thin,dashed]
\tikzset{-{Latex[length=2mm, width=2mm]},node distance=12mm}
\tikzstyle{diag}=[node distance=12mm]
\tikzstyle{labeln}=[draw=none,inner sep=0mm,text depth=1mm]
\centering
\def\labeldist{24mm}
\def\ldist{0mm}
\ifx\Fullversion\undefined
\def\vdist{42mm}
\else
\def\vdist{45mm}
\fi
\begin{tikzpicture}
\begin{scope}
\node[bddnode] (q2) {$x$};
\node[bddnode, right of =q2] (q3) {$x$};
\node[above = \ldist of q2,labeln] (l2) {$\mathbf{u}$};
\node[above = \ldist of q3,labeln] (l3) {$\mathbf{v}$};
\node[bddnode, below of=q2] (q4) {$y$};
\node[bddnode, below of=q3] (q5) {$y$};
\path (q4) -- (q5) node[bddnode, midway, yshift=-11mm, ellipse] (q6) {$\True$};
\node[product, right=1.5cm of q3] (q7) {$\vee$};
\node[below right = \labeldist and -0.7cm of q2] (lp) {$p_0 = x + y - xy-x^2y-xy^2 +2x^2y^2$};
\node[above = 0.5mm of q7,labeln] (l7) {$\bddopvee(u, v)$};
\draw
(q2) edge[edge1] (q4)
(q2) edge[edge0] (q5)
(q3) edge[edge1] (q5)
(q4) edge[edge0] (q6)
(q5) edge[edge1] (q6)
(q7) edge[edgep] (q3)
(q7) edge[edgep,bend left=25] (q2);
\end{scope}
\begin{scope}[xshift=7cm]
\node[bddnode] (q2) {$x$};
\node[bddnode, right of =q2] (q3) {$x$};
\node[above = \ldist of q2,labeln] (l2) {$\mathbf{u}$};
\node[above = \ldist of q3,labeln] (l3) {$\mathbf{v}$};
\node[bddnode, below of=q2] (q4) {$y$};
\node[bddnode, below of=q3] (q5) {$y$};
\path (q4) -- (q5) node[bddnode, midway, yshift=-11mm, ellipse] (q6) {$\True$};
\node[bddnode, right=1.5cm of q3] (q7) {$x$};
\node[product, below of=q7,xshift=-6mm] (q8) {$\vee$};
\node[product, below of=q7,xshift=6mm] (q9) {$\vee$};
\node[below right = \labeldist and -0.7cm of q2] (lp) {$p_1=  x+y-2xy+xy^2 = \textcolor{niceredbright}{\mathbf{\delta_x p_0}}$};
\node[above = \ldist of q7,labeln] (l7) {$\bddopvee(u, v)$};
\draw 
(q2) edge[edge1] (q4)
(q2) edge[edge0] (q5)
(q3) edge[edge1] (q5)
(q4) edge[edge0] (q6)
(q5) edge[edge1] (q6)
(q7) edge[edge1] (q8)
(q7) edge[edge0] (q9)
(q8) edge[bend left=30, edgep] node{} (q4)
(q8) edge[edgep] node{} (q5)
(q9) edge[bend left=40, edgep] node{} (q5);
\end{scope}
\begin{scope}[yshift=-\vdist]
\node[bddnode] (q2) {$x$};
\node[bddnode, right of =q2] (q3) {$x$};
\node[above = \ldist of q2,labeln] (l2) {$\mathbf{u}$};
\node[above = \ldist of q3,labeln] (l3) {$\mathbf{v}$};
\node[bddnode, below of=q2] (q4) {$y$};
\node[bddnode, below of=q3] (q5) {$y$};
\path (q4) -- (q5) node[bddnode, midway, yshift=-11mm, ellipse] (q6) {$\True$};
\node[bddnode, right=1.5cm of q3] (q7) {$x$};
\node[bddnode, below of=q7,xshift=-6mm] (q8) {$y$};
\node[bddnode, below of=q7,xshift=6mm] (q9) {$y$};
\node[product] (q10) at (q6 -| q8) {$\vee$};
\node[product] (q11) at (q6 -| q9) {$\vee$};
\node[below right = \labeldist and -0.7cm of q2] (lp) {$p_2 = x+y-xy = \textcolor{niceredbright}{\mathbf{\delta_y\delta_x p_0}}$};
\node[above = \ldist of q7,labeln] (l7) {$\bddopvee(u, v)$};
\draw 
(q2) edge[edge1] (q4)
(q2) edge[edge0] (q5)
(q3) edge[edge1] (q5)
(q4) edge[edge0] (q6)
(q5) edge[edge1] (q6)
(q7) edge[edge1] (q8)
(q7) edge[edge0] (q9)
(q8) edge[edge1,bend right = 20] node{} (q10)
(q8) edge[edge0,bend left = 20] node{} (q10)
(q9) edge[edge1] node{} (q10)
(q9) edge[edge0] node{} (q11)
(q10) edge[edgep] node{} (q6);
\end{scope}
\begin{scope}[xshift=7cm,yshift=-\vdist]
\node[bddnode] (q2) {$x$};
\node[bddnode, right of =q2] (q3) {$x$};
\node[above = \ldist of q2,labeln] (l2) {$\mathbf{u}$};
\node[above = \ldist of q3,labeln] (l3) {$\mathbf{v}$};
\node[bddnode, below of=q2] (q4) {$y$};
\node[bddnode, below of=q3] (q5) {$y$};
\path (q4) -- (q5) node[bddnode, midway, yshift=-11mm, ellipse] (q6) {$\True$};
\node[bddnode, right=1.2cm of q3] (q7) {$x$};
\node[below right = \labeldist and -0.7cm of q2] (lp) {$p_3=  x+y-xy= \textcolor{niceredbright}{\mathbf{\delta_y\delta_x p_0}}$};
\node[above = \ldist of q7,labeln] (l7) {$\bddopvee(u, v)$};
\draw 
(q2) edge[edge1] (q4)
(q2) edge[edge0] (q5)
(q3) edge[edge1] (q5)
(q4) edge[edge0] (q6)
(q5) edge[edge1] (q6)
(q7) edge[edge1, bend left] (q6)
(q7) edge[edge0] (q5);
\end{scope}    
\end{tikzpicture}
\caption{Run of $\bddopvee(u,v)$, but with recursive calls evaluated in breadth-first order. All missing edges go to node $\False$.}
\label{fig:BDDop}
\end{figure}

\parag{Extended BDDs}
As for BDDs, we define eBDDs over $X=\{x_1, \ldots, x_n\}$ with the variable order $x_1<x_{2}<...<x_n$.

\begin{definition}
\label{def:eBDD}
Let $\circledast$ be a binary boolean operator. The set of eBDDs (for $\circledast$) is inductively defined as follows:
\begin{itemize}
\item every BDD is also an eBDD of the same level;
\item if $u, v$ are BDDs (not eBDDs!), then $\Bdd{u \circledast v}$ is an eBDD of level $l$ where $l:=\max\{\Level(u), \Level(v) \}$;
we call eBDDs of this form \emph{product nodes};
\item if $u\ne v$ are eBDDs and $i>\Level(u),\Level(v)$, then $\Bdd{x_i, u, v}$ is an eBDD of level $i$;
\item we identify $\Bdd{x_i, u, u}$ and $u$ for an eBDD $u$ and $i>\Level(u)$.
\end{itemize}
The set of \emph{descendants} of an eBDD $w$ is the smallest set $S$ with $w\in S$ and $u,v\in S$ for all $\Bdd{u \circledast v}, \Bdd{x,u,v}\in S$
The \emph{size} of $w$ is its number of descendants.
For $u, v \in \{ \Bdd{\True}, \Bdd{\False} \}$ we identify $\Bdd{u \circledast v}$ with $\Bdd{\True}$ or $\Bdd{\False}$ according to the result of $\circledast$, e.g.\ $\Bdd{\Bdd{\True}\vee \Bdd{\False}}=\Bdd{\True}$, as $\True\vee\False=\True$.
\label{def:earith}
The \emph{arithmetisation} of an eBDD for a boolean operator $\circledast \in \{\wedge, \vee\}$ is defined as for BDDs, with the extensions  $\Eval{\Bdd{u\wedge v}} = \Eval{u}\cdot\Eval{v}$ and $\Eval{\Bdd{u\vee v}} = \Eval{u}+\Eval{v}-\Eval{u}\cdot\Eval{v}$.
\end{definition}

\begin{example}
The diagrams in Figure \ref{fig:BDDop} are eBDDs for $\circledast := \vee$. Nodes of the form $\Bdd{x, u, v}$ and $\Bdd{u \vee v}$ are represented as circles and squares, respectively. Consider the top-left diagram.  Abbreviating $x \oplus y := (x \wedge  \neg y) \vee (\neg x \wedge y)$ we get $\Eval{\bddopvee(u,v)} = \Eval{ (x \oplus y) \wedge (x \wedge y) } =
\Eval{x \oplus y} \cdot \Eval{x \wedge y} = (x (1 -y)+(1-x)\cdot y - xy(1-x)(1-y)) \cdot xy$, which is the polynomial $p_0$ shown in the figure.
\end{example}

\subsubsection{Computing eBDDs for {\CPEDG}s.}

Given a node of a {\CPE} corresponding to a binary operator $\circledast$, 
Prover has to compute polynomials $p_0,\, \delta_{x_1} p_0,\,\ldots,\,\delta_{x_n}...\delta_{x_1} p_0$ corresponding to the nodes of the {\CPEDG} shown on the right. We show that Prover can compute these polynomials by representing them as eBDDs. 
Table~\ref{table:alg} describes an algorithm that gets as input an eBDD $w$ of level $n$, and outputs a sequence $w_0,w_1,...,w_{n+1}$ of eBDDs such that $w_0=w$; $\Eval{w_{i+1}}=\delta_{x_{n-i}}\Eval{w_{i}}$ for every $0 \leq i \leq \Level(w)-1$; and $w_{n+1}$ is a BDD. Interpreted as sequence of eBDDs, Figure~\ref{fig:BDDop} shows a run of this algorithm. 

\emph{Notation.} Given an eBDD $w$ and eBDDs $u, v$ such that $\Level(u) \geq \Level(v)$, we let $w[u/v]$ denote the result of replacing $u$ by $v$ in $w$. For an eBDD $w=\Bdd{x_i,w_0,w_1}$ and $b\in\{0,1\}$ we define $\pev{x_i:=b}w:=w_b$, and for $j>i$ we set $\pev{x_j:=b}w:=w$. (Note that $\Eval{\pev{x_j:=b}w}=\pev{x_j:=b}\Eval{w}$ holds for any $j$ where it is defined.)

\begin{table}[t]
\begin{minipage}[t]{7cm}
\begin{tabbing}
t \= t \= tt \= t \=\kill
$\ComputeEBDD(w)$ \\[0.1cm]
\textbf{Input:} eBDD $w$\\
\textbf{Output:} sequence $w_0,...,w_{n}$ of eBDDs \\[0.1cm]
$w_0:= w$; output $w_0$\\
\textbf{for} $i = 0, \cdots, \Level(w)-1$ \textbf{do}\\
\> $w_{i+1} := w_i$ \\
\> \textbf{for} every node $\Bdd{u \circledast v}$ of $w_i$\\
\> \hspace{5mm}at level $n-i$ \textbf{do}\\
\>\> \textbf{for} $b \in \{0,1\}$ \textbf{do} \\
\>\>\> $u_b := \pev{x_{n-i}:=b} \, u$ \\
\>\>\> $v_b := \pev{x_{n-i}:=b} \, v$ \\
\>\>\> $t_b := \Bdd{u_b \circledast v_b}$ \\
\>\> $w_{i+1} :=  w_{i+1}\left[ \, \Bdd{u \circledast v} /   \Bdd{x_{n-i}, t_0,  t_1}\, \right] $ \\
\> output $w_{i+1}$
\end{tabbing}
\end{minipage}
\hfill
\begin{minipage}[t]{5.5cm}
\newcommand{\Ev}{E_\sigma}
\begin{tabbing}
$\EvaluateEBDD(w,\sigma)=:\Ev(w)$ \\[0.1cm]
\textbf{Input:} \= eBDD $w$; assignment $\sigma \colon X \rightarrow \F$ \\
\textbf{Output:} $\Pev\sigma\Eval{w}$ \\[0.2cm]
\pushtabs
t \= txt \= \kill
\textbf{if} $P(w)$ is defined return $P(w)$ \\
\textbf{if} $w \in\{ \Bdd{\True},\Bdd{\False}\}$ return $\Eval{w}$ \\
\textbf{if} $w = \Bdd{ u \wedge v }$ \\
\> $P(w):= \Ev(u) \cdot \Ev(v)$ \\
\textbf{if} $w = \Bdd{ u \vee v }$ \\
\> $P(w):= \Ev(u)+\Ev(v)-\Ev(u)\Ev(v)$ \\
\textbf{if} $w = \Bdd{x, u, v}$ and $\sigma(x)$ undefined  \\
\> $P(w):= [1-x] \cdot \Ev(u) + [x] \cdot \Ev(v)$ \\
\textbf{if} $w = \Bdd{x, u, v}$ and $\sigma(x)=s\in\F$  \\
\> $P(w):= [1-s] \cdot \Ev(u) + [s] \cdot \Ev(v)$\\
return $P(w)$
\poptabs
\end{tabbing}
\end{minipage}
\smallskip
\caption{On the left: Algorithm computing eBDDs for the sequence $\Eval{w}$, $\delta_{x_{n}} \Eval{w}$, $\delta_{x_{n-1}} \delta_{x_{n}} \Eval{w}$, $\ldots$, $\delta_{x_1} \cdots \delta_{x_{n}} \Eval{w}$ of polynomials. On the right: Recursive algorithm to evaluate the polynomial represented by an eBDD at a given partial assignment. $P(w)$ is a mapping  used to memoize the polynomials returned by recursive calls.}
\label{table:alg}
\ifx\Fullversion\undefined
\vspace*{-0.7cm}
\fi
\end{table}

\begin{restatable}{proposition}{computebdd}\label{prop:computebdd}
Let $\psi_1,\psi_2$ denote \CPEs\ and $u_1,u_2$ BDDs with $\Eval{u_i}=\Eval{\psi_i}$, $i\in\{1,2\}$. Let $w:=\Bdd{u_1\circledast u_2}$ denote an eBDD. Then $\ComputeEBDD(w)$ satisfies $\Eval{w_0}=\Eval{\psi_1\circledast\psi_2}$ and $\Eval{w_{i+1}}=\delta_{x_{n-i}}\Eval{w_{i}}$ for every $0 \leq i \leq n-1$; moreover, $w_n$ is a BDD with $w_n=\bddop(u_1,u_2)$. Finally, the algorithm runs in time $\O(T)$, where $T\in\O(\Abs{u_1}\cdot\Abs{u_2})$ is the time taken by $\bddop(u_1,u_2)$.
\end{restatable}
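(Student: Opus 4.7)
The plan is to prove the four assertions in turn, with the bulk of the effort going into (2). For (1), unfolding the arithmetisation of an eBDD gives $\Eval{w_0} = \Eval{\Bdd{u_1 \circledast u_2}} = \Eval{u_1} \circledast \Eval{u_2}$ (with the polynomial operation corresponding to $\circledast$), which equals $\Eval{\psi_1 \circledast \psi_2}$ by the hypothesis $\Eval{u_i} = \Eval{\psi_i}$ and the identical arithmetisation rules for \CPE{}s and eBDDs.

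For (2), I would induct on $i$ while maintaining the loop invariant that every product node of $w_i$ has level at most $n - i$. The base case holds since the only product node of $w_0$ has level $\max(\Level(u_1), \Level(u_2)) \leq n$; the step is preserved because each level-$(n-i)$ product node is replaced by a BDD node $\Bdd{x_{n-i}, t_0, t_1}$ whose product children $t_b = \Bdd{u_b \circledast v_b}$ have BDD arguments obtained by cofactoring at $x_{n-i}$, hence of level strictly less than $n-i$. The equation $\Eval{w_{i+1}} = \delta_{x_{n-i}}\Eval{w_i}$ then follows from a structural lemma: \emph{whenever every product node of an eBDD $w$ has level at most $k$, the eBDD $w'$ obtained by applying the prescribed replacement to every level-$k$ product node satisfies $\Eval{w'} = \delta_{x_k}\Eval{w}$.} I prove this by structural induction on $w$. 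The crucial case is a level-$k$ product node $\Bdd{u \circledast v}$, where $u, v$ are BDDs and the identity reduces to the algebraic fact
\begin{equation*}
\delta_x(p \cdot q) = [1-x]\,(\pev{x \Gets 0}p)(\pev{x \Gets 0}q) + x\,(\pev{x \Gets 1}p)(\pev{x \Gets 1}q)
\end{equation*}
for binary multilinear $p, q$, verified by expanding $p = [1-x]p_0 + xp_1$, $q = [1-x]q_0 + xq_1$ and using $x^2 \mapsto x$, $(1-x)x \mapsto 0$, and $(1-x)^2 \mapsto 1-x$. The $\vee$-case reduces to this via $p \vee q = p + q - pq$ together with $\delta_x p = p$, $\delta_x q = q$ (as $p, q$ are multilinear). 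The remaining structural cases---BDD nodes at any level and product nodes at level strictly below $k$---hold because no relevant subexpression involves $x_k$ with degree exceeding $1$, so $\delta_{x_k}$ commutes through the eBDD arithmetisation rules.

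For (3), after $n$ iterations the invariant yields that every surviving product node has level $\leq 0$; but any such node has both BDD children in $\{\Bdd{\True}, \Bdd{\False}\}$ and so collapses to a leaf by Definition~\ref{def:eBDD}. Hence $w_n$ is a pure BDD, and its polynomial is $\delta_{x_1} \cdots \delta_{x_n} \Eval{w_0}$, which by Lemma~\ref{lem:conv}(a) is the binary multilinear arithmetisation of the boolean function $P_{\psi_1 \circledast \psi_2}$. Proposition~\ref{prop:basic}(b) gives uniqueness of the BDD representing this polynomial, forcing $w_n = \bddop(u_1, u_2)$.

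For the time bound (4), every product node $\Bdd{u' \circledast v'}$ arising during the run has $u'$ a descendant of $u_1$ and $v'$ a descendant of $u_2$, since cofactoring a BDD produces one of its descendants. There are at most $\Abs{u_1} \cdot \Abs{u_2}$ such pairs, and each is handled in amortised $O(1)$ time (using the standard BDD hash to share equal subnodes), matching the complexity of $\bddop(u_1, u_2)$. The main obstacle is the structural lemma in (2): the algebraic identity for $\delta_x(p \cdot q)$ is elementary, but arguing that locally rewriting the level-$k$ product nodes globally implements $\delta_{x_k}$ requires careful case analysis and attention to sharing in the eBDD DAG.
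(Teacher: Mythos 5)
Your proposal is correct and follows essentially the same route as the paper's proof: the same invariant that product nodes of $w_i$ live only at levels $\le n-i$, the same local-to-global argument that replacing the level-$(n-i)$ product nodes implements $\delta_{x_{n-i}}$ (the paper factors your algebraic identity through $\delta_x p=[1-x]\cdot\pev{x\Gets 0}p+[x]\cdot\pev{x\Gets 1}p$ and a computation of $\Eval{t_b}$, but the content is identical), the same uniqueness-of-BDDs endgame for $w_n=\bddop(u_1,u_2)$, and the same pairs-of-descendants count for the time bound. No gaps.
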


\parag{Evaluating polynomials represented as eBDDs}
Recall that Prover must evaluate expressions of the form $\Pev{\sigma}\Eval{\psi}$ for some 
{\CPEDG} $\psi$, where $\sigma$ assigns values to all variables of $\psi$ except for possibly one.
We give an algorithm to evaluate arbitrary expressions $\Pev{\sigma}\Eval{w}$, where $w$ is an eBDD, and show that if there is at most
one free variable then the algorithm takes linear time in the size of $\psi$. The algorithm 
is shown on the right of Table \ref{table:alg}. It has the standard structure
of BDD procedures: It recurs on the structure of the eBDD, memoizing the result of recursive calls so that the
algorithm is called at most once with a given input.

\begin{restatable}{proposition}{evaluateebdd}\label{prop:evaluateebdd}
Let $w$ denote an eBDD, $\sigma: X\rightarrow\F$ a partial assignment, and $k$ the number of variables assigned by $\sigma$. 
Then \EvaluateEBDD\ evaluates the polynomial $\Pev\sigma\Eval{w}$ in time $\O\big(\!\operatorname{poly}(2^{n-k})\cdot\Abs{w}\big)$.
\end{restatable}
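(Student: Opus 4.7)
My plan is to prove both correctness and the time bound by structural induction on the eBDD $w$. For correctness, the algorithm \EvaluateEBDD\ mirrors the inductive definition of the arithmetisation $\Eval{\cdot}$ and of partial evaluation $\Pev\sigma$: the leaf cases return the arithmetisation directly; product nodes $\Bdd{u\wedge v}$ and $\Bdd{u\vee v}$ assemble $E_\sigma(u)\cdot E_\sigma(v)$ and $E_\sigma(u)+E_\sigma(v)-E_\sigma(u)E_\sigma(v)$ in accordance with Definition~\ref{def:earith}; and a decision node $\Bdd{x,u,v}$ computes $[1-x]\,E_\sigma(u)+[x]\,E_\sigma(v)$, substituting $s$ for $x$ when $\sigma(x)=s$ is defined. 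Since the output is a deterministic function of the input, memoization does not affect correctness, and induction gives $P(w)=\Pev\sigma\Eval{w}$.

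For the time bound I would first prove a degree lemma: for any eBDD $u$, every variable has degree at most $2$ in $\Eval{u}$. This rests on three observations. First, if $u$ is a pure BDD then $\Eval{u}$ is multilinear by Proposition~\ref{prop:basic}(a). Second, at a product node $\Bdd{u\circledast v}$ the children $u,v$ are BDDs, so their arithmetisations are multilinear and $\Eval{u}\cdot\Eval{v}$ has degree at most $2$ in each variable. Third, at a decision node $\Bdd{x_i,u,v}$ the level constraint $i>\Level(u),\Level(v)$ forbids $x_i$ from appearing in $\Eval{u}$ or $\Eval{v}$, so combining them with the factors $[1-x_i],[x_i]$ preserves the degree-$2$ bound on other variables while introducing $x_i$ with degree $1$. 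Consequently $\Pev\sigma\Eval{w}$ is a polynomial in at most $n-k$ free variables each of degree at most $2$, hence has at most $3^{n-k}=\Poly(2^{n-k})$ monomials. Memoization ensures the body of \EvaluateEBDD\ executes at most once per descendant of $w$, i.e.\ at most $\Abs{w}$ times, and each invocation performs a bounded number of polynomial operations (addition, naive multiplication, partial evaluation) on polynomials of size $\Poly(2^{n-k})$, each running in $\Poly(2^{n-k})$ time. Summing yields the claimed bound $\O\bigl(\Poly(2^{n-k})\cdot\Abs{w}\bigr)$.

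The main obstacle is the degree lemma: without a uniform size bound on the intermediate polynomials, compounding multiplications could push the per-node cost arbitrarily high. The saving structural fact is that in an eBDD the children of product nodes are always BDDs, so the degree-$2$ behaviour is confined to a single ``layer'' and does not compound as decision nodes are stacked above it; this is precisely what makes the uniform per-operation cost $\Poly(2^{n-k})$ possible.
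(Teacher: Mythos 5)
Your proposal is correct and follows essentially the same route as the paper: memoization bounds the number of executed calls by $\Abs{w}$, and the key structural fact that product nodes have BDD (hence multilinear) children bounds every variable's degree by $2$, so each intermediate polynomial has at most $3^{n-k}$ monomials and each call costs $\Poly(2^{n-k})$. Your write-up is somewhat more explicit than the paper's (which states the degree bound as an immediate consequence of Definition~\ref{def:eBDD} and omits the correctness induction), but there is no substantive difference in approach.
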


\subsection{Efficient Certification}
\label{sec:main}

In the \CPCertify\ algorithm, Prover must (a) compute polynomials for all nodes of the {\CPEDG}, and (b) evaluate them on assignments chosen by Verifier. 
In the last section we have seen that \ComputeEBDD\ (for binary operations of the \CPE), combined with standard BDD algorithms (for all other operations), 
yields eBDDs representing all these polynomials---at no additional overhead, compared to a BDD-based implementation. 
This covers part (a). 
Regarding (b), recall that all polynomials computed in (a) have at most one variable. 
Therefore, using \EvaluateEBDD\ we can evaluate a polynomial in linear time in the size of the eBDD representing it. 

The Verifier \CPCertify\ is implemented in a straightforward manner. As the algorithm runs in polynomial size w.r.t.\ the \CPE\ (and not the computed BDDs, which may be exponentially larger), incurring overhead is less of a concern.

\begin{restatable}[Main Result]{theorem}{thmmain}
\label{thm:main}
If \BDDSolver\ solves an instance $\varphi$ of {\Pref{\#\CPE}} with $n$ variables in time $T$, with $T>n\Abs{\varphi}$, then
\begin{enumerate}[label={(\alph*)}]
\item Prover computes eBDDs for all nodes of $\Convert(\varphi)$ in time $\O(T)$,
\item Prover responds to Verifier's challenges in time $\O(nT)$, and
\item Verifier executes \CPCertify\ in time $\O(n^2\Abs{\varphi})$, with failure probability at most $4n\Abs{\varphi}/\Abs{\F}$.
\end{enumerate}
\end{restatable}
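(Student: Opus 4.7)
My plan is to combine three results established earlier in the paper: Proposition~\ref{prop:algorithm} (soundness and completeness of \CPCertify), Proposition~\ref{prop:computebdd} (\ComputeEBDD), and Proposition~\ref{prop:evaluateebdd} (\EvaluateEBDD). The failure probability bound in (c) follows verbatim from Proposition~\ref{prop:algorithm}; the remainder of the work is the three running-time bounds.

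For (a), I would traverse $\varphi$ in topological order, mirroring \BDDSolver. For each leaf, negation, or partial-evaluation node of $\varphi$, the corresponding single node of $\Convert(\varphi)$ has an eBDD that coincides with the BDD that \BDDSolver\ computes using standard BDD algorithms. For each binary-operator node $\psi = \psi_1 \circledast \psi_2$ of $\varphi$, Prover invokes $\ComputeEBDD(\Bdd{u_1 \circledast u_2})$ on the BDDs $u_1,u_2$ already computed for the children; by Proposition~\ref{prop:computebdd} this produces eBDDs for the entire chain of degree-reduction nodes that $\Convert$ inserts above $\psi$, and its final output equals $\bddop(u_1,u_2)$, which \BDDSolver\ would compute anyway. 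The cost is $\O(\Abs{u_1}\cdot\Abs{u_2})$, matching up to a constant the corresponding $\bddop$ call; summing over all nodes of $\varphi$ yields total time $\O(T)$.

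For (b) and (c) I count challenges and the per-challenge cost. Every challenge is of the form $\Pev\sigma\Eval{w_\psi}$ for some node $\psi$ of $\Convert(\varphi)$ and an assignment $\sigma$ that leaves at most one variable free, so by Proposition~\ref{prop:evaluateebdd} with $n-k\leq 1$ Prover answers each in time $\O(\Abs{w_\psi})$. For each $\psi$, step~(b)'s while loop runs at most $n$ times (each iteration fixes a variable previously in disagreement) issuing $m_\psi$ challenges per iteration, while step~(c) contributes $\O(1)$ more; this gives $\O(n\, m_\psi\, \Abs{w_\psi})$ Prover work per node. The Verifier, in contrast, only performs local univariate-polynomial evaluations and equality checks, so each of the $\O(n\Abs{\Convert(\varphi)})=\O(n^2\Abs{\varphi})$ total challenges contributes $\O(1)$ Verifier work in field operations, yielding the bound in (c).

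The main obstacle is the amortisation in (b): summing $\O(n\, m_\psi\, \Abs{w_\psi})$ across $\psi$ one needs $\sum_\psi m_\psi \Abs{w_\psi} = \O(T)$ to obtain the claimed $\O(nT)$. The available ingredients are that the total number of claims ever created is $\O(\Abs{\Convert(\varphi)})=\O(n\Abs{\varphi})\leq T$ by the hypothesis $T>n\Abs{\varphi}$, and that $\sum_\psi \Abs{w_\psi}=\O(T)$ by part~(a). The crux is to charge each of the $m_\psi$ claims on a shared node $\psi$ against the $\bddop$/\ComputeEBDD\ work already performed when $w_\psi$ was built, thereby showing that sharing in $\Convert(\varphi)$ does not inflate the total evaluation cost beyond what \BDDSolver\ already paid.
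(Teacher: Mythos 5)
Parts (a) and (c) of your proposal follow the paper's own route: (a) is exactly Proposition~\ref{prop:computebdd} applied node by node, and (c) combines the count of at most $n\Abs{\varphi}$ challenges, each carrying an assignment of size $\O(n)$ and requiring only constant-degree univariate arithmetic, with the failure bound from Proposition~\ref{prop:algorithm}. The problem is part (b): you correctly isolate the amortisation $\sum_\psi m_\psi\Abs{w_\psi}=\O(T)$ as ``the crux'' and then stop, so as written the $\O(nT)$ bound is asserted rather than proved. Naming the obstacle is not the same as overcoming it, and this is precisely the step where the content of the theorem lives.

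The paper closes this differently, and the decomposition is worth internalising. Let $S$ be the set of descendants of $\Convert(\varphi)$ and let $S_{\mathrm{bdd}}=\{\Convert(\chi):\chi\text{ a descendant of }\varphi\}\subseteq S$ be those that evaluate to BDDs. Two observations replace your per-claim charging argument. First, for $\psi\in S_{\mathrm{bdd}}$ the eBDD $B_\psi$ is (by uniqueness, Proposition~\ref{prop:basic}) exactly the BDD that \BDDSolver\ builds, so $\sum_{\psi\in S_{\mathrm{bdd}}}\Abs{B_\psi}\le T$; and every node of every intermediate eBDD along a degree-reduction chain is created during the corresponding run of \ComputeEBDD\ and is counted at most $n$ times across the chain, so $\sum_{\psi\in S}\Abs{B_\psi}\le nT$. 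Second, step (b.1) of \CPCertify\ only ever fires at nodes that can accumulate several claims, i.e.\ nodes with several parents in $\Convert(\varphi)$; every interior node of a $\delta$-chain has a unique parent, so these are all in $S_{\mathrm{bdd}}$, and the total claim multiplicity is bounded by the number of parent edges, which is $\O(\Abs{\varphi})$. Hence step (b.1) costs at most $n\sum_{\psi\in S_{\mathrm{bdd}}}\Abs{B_\psi}\le nT$ evaluation work, while step (c) issues one challenge per node of $S$ and costs $\sum_{\psi\in S}\Abs{B_\psi}\le nT$. This accounting avoids ever having to prove the finer bound $\sum_\psi m_\psi\Abs{w_\psi}=\O(T)$ that your sketch hinges on; you should either adopt it or supply the missing charging argument explicitly.
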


As presented above, \EvaluateEBDD\ incurs a factor-of-$n$ overhead, as every node of the \CPEDG\ must be evaluated. 
In our implementation, we use a caching strategy to reduce the complexity of Theorem~\ref{thm:main}(b) to $\O(T)$.

Note that the bounds above assume a uniform cost model. In particular, operations on BDD nodes and finite field arithmetic are assumed to be $\O(1)$. This is a reasonable assumption, as for a constant failure probability $\log\,\Abs{\F}\approx \log n$. Hence the finite field remains small. (It is possible to verify the number of assignments even if it exceeds $\Abs{\F}$, see below.)

\subsection{Implementation concerns}
We list a number of points that are not described in detail in this paper, but need to be considered for an efficient implementation.

\parag{Finite field arithmetic} It is not necessary to use large finite fields. In particular, one can avoid the overhead of arbitrarily sized integers. For our implementation we fix the finite field $\F:=\mathbb{Z}_p$, with $p = 2^{61} -1$ (the largest Mersenne prime to fit in 64 bits).

\parag{Incremental eBDD representation} Algorithm \ComputeEBDD\ computes a sequence of eBDDs. These must not be stored explicitly, otherwise one incurs a space-overhead. Instead, we only store the last eBDD as well as the differences between each subsequent element of the sequence. To evaluate the eBDDs, we then revert to a previous state by applying the differences appropriately.

\parag{Evaluation order} It simplifies the implementation if Prover only needs to evaluate nodes of the \CPEDG\ in some (fixed) topological order. \CPCertify\ can easily be adapted to guarantee this, by picking the next node appropriately in each iteration, and by evaluating only one child of a binary operator $\psi_1\circledast\psi_2$. The value of the other child can then be derived by solving a linear equation. 

\parag{Efficient evaluation} As stated in Theorem~\ref{thm:main}, using \EvaluateEBDD\ Prover needs $\Omega(nT)$ time to respond to Verifier's challenges. In our implementation we instead use a caching strategy that reduces this time to $\O(T)$. Essentially, we exploit the special structure of $\Convert(\varphi)$: Verifier sends a sequence of challenges
\[\Pev{\sigma_0}\delta_{x_1}...\delta_{x_n}w,\quad\Pev{\sigma_1}\delta_{x_2}...\delta_{x_n}w,\quad ...,\quad \Pev{\sigma_n}w\]
where assignments $\sigma_i$ and $\sigma_{i+1}$ differ only in variables $x_i$ and $x_{i+1}$. The corresponding eBDDs likewise change only at levels $i$ and $i+1$. We cache the linear coefficients of eBDD nodes that contribute to the arithmetisation of the root top-down, and the arithmetised values of nodes bottom up. As a result, only levels $i,i+1$ need to be updated.

\parag{Large numbers of assignments} If the number of satisfying assignments of a \CPE\ exceeds $\Abs{\F}$, Verifier would not be able to verify the count accurately. Instead of choosing $\Abs{\F}\ge2^n$, which incurs a significant overhead, Verifier can query the precise number of assignments, and then choose $\Abs{\F}$ randomly. This introduces another possibility of failure, but (roughly speaking) it suffices to double $\log\,\Abs{\F}$ for the additional failure probability to match the existing one. Our implementation does not currently support this technique.

\section{Evaluation}\label{sec:evaluation}

We have implemented an eBDD library, \blic{} (BDD Library with Interactive Certification)~\footnote{\url{https://gitlab.lrz.de/i7/blic}}, 
that is a stand-in replacement for BDDs but additionally performs the role of Prover in the \textsc{CPCertify} protocol.
We have also implemented a client that executes the protocol as Verifier.
The eBDD library is about 900 lines of C++ code and the \textsc{CPCertify} protocol is about 400 lines.
We have built a prototype certifying QBF solver in \blic, totalling about 2600 lines of code.
We aim to answer the following questions in our evaluation:

\begin{enumerate}[align=left]
\item[\textbf{RQ1.}] Is a QBF solver with \textsc{CPCertify}-based certification competitive?
If so, how high is the overhead of implementing \textsc{CPCertify} on top of the BDD operations?
\item[\textbf{RQ2.}] What is the amount of communication for Prover and Verifier in executing the \textsc{CPCertify}
protocol, what is the time requirement for Verifier, and how do these numbers compare to proof 
sizes and proof checking times for certificates based on resolution and other proof systems? 
\end{enumerate}

\parag{RQ1: Performance of \blic}
We compare \blic{} with  \caqe{}, \depqbf{}, and \pgbddq{},
three state-of-the-art QBF solvers. 
\caqe{} \cite{CAQE,CAQETool} does not provide any certificates in its most recent version.
\depqbf{} \cite{LonsingE17,depqbf} is a certifying QBF solver.
\pgbddq{} \cite{BryantH21qbf,PGBDDQ} is an independent implementation of a BDD-based QBF solver.
Both \depqbf{} and \pgbddq{} provide specialised checkers for their certificates, though \pgbddq{} can also proofs in standard QRAT format.
Note that \pgbddq{} is written in Python and generates proofs in an ASCII-based format, incurring overhead compared to the other tools.

    \begin{figure}[t]\centering
\ifx\Fullversion\undefined
	\def\w{59.6mm}
	\def\h{0.5mm}
	\def\hh{1mm}
\else
	\def\w{73mm}
	\def\h{3mm}
	\def\hh{5mm}
\fi
{\footnotesize
        \begin{tabular}{cc}
        \includegraphics[width=\w]{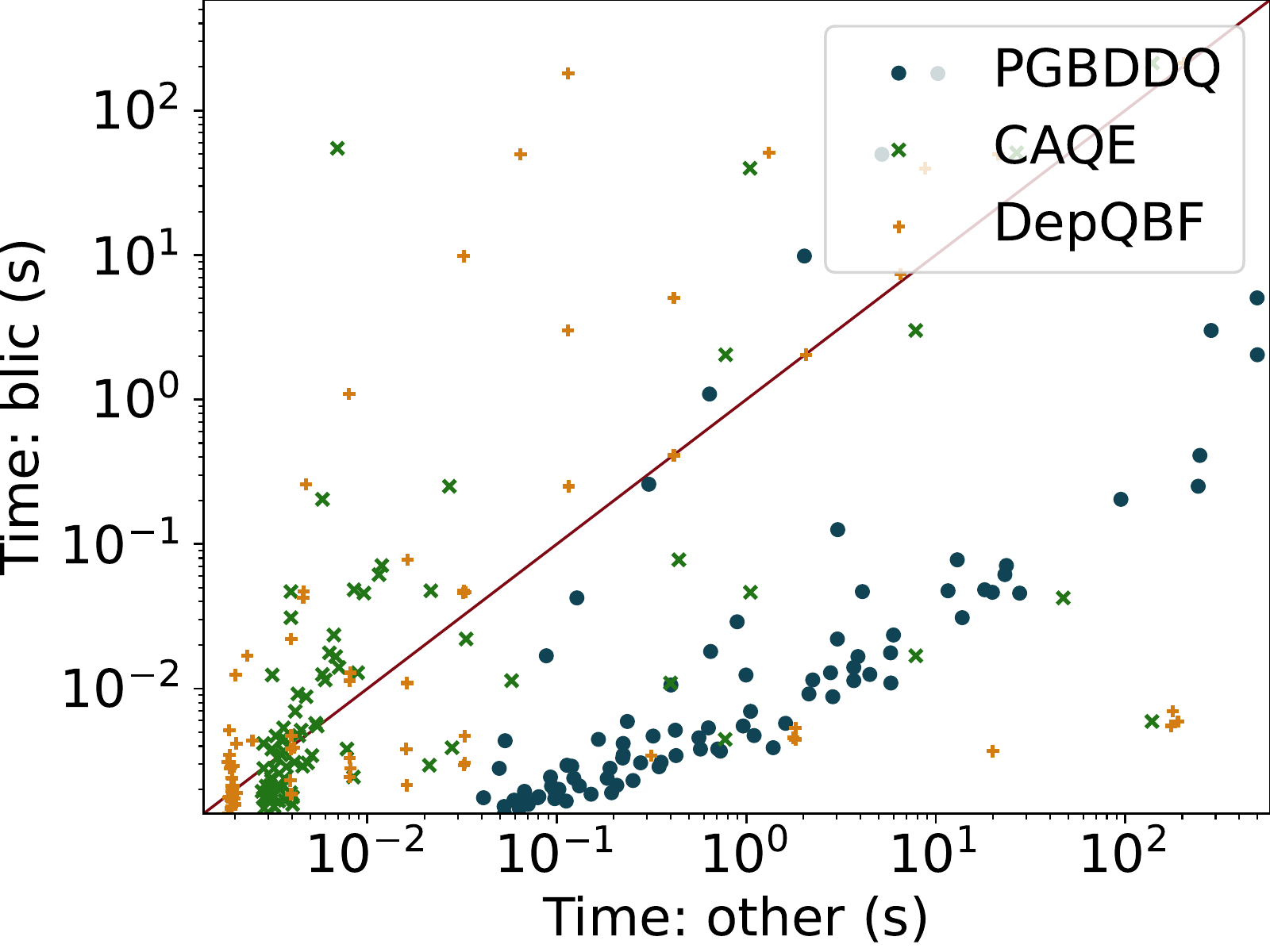}\hspace{\h}
&
        \includegraphics[width=\w]{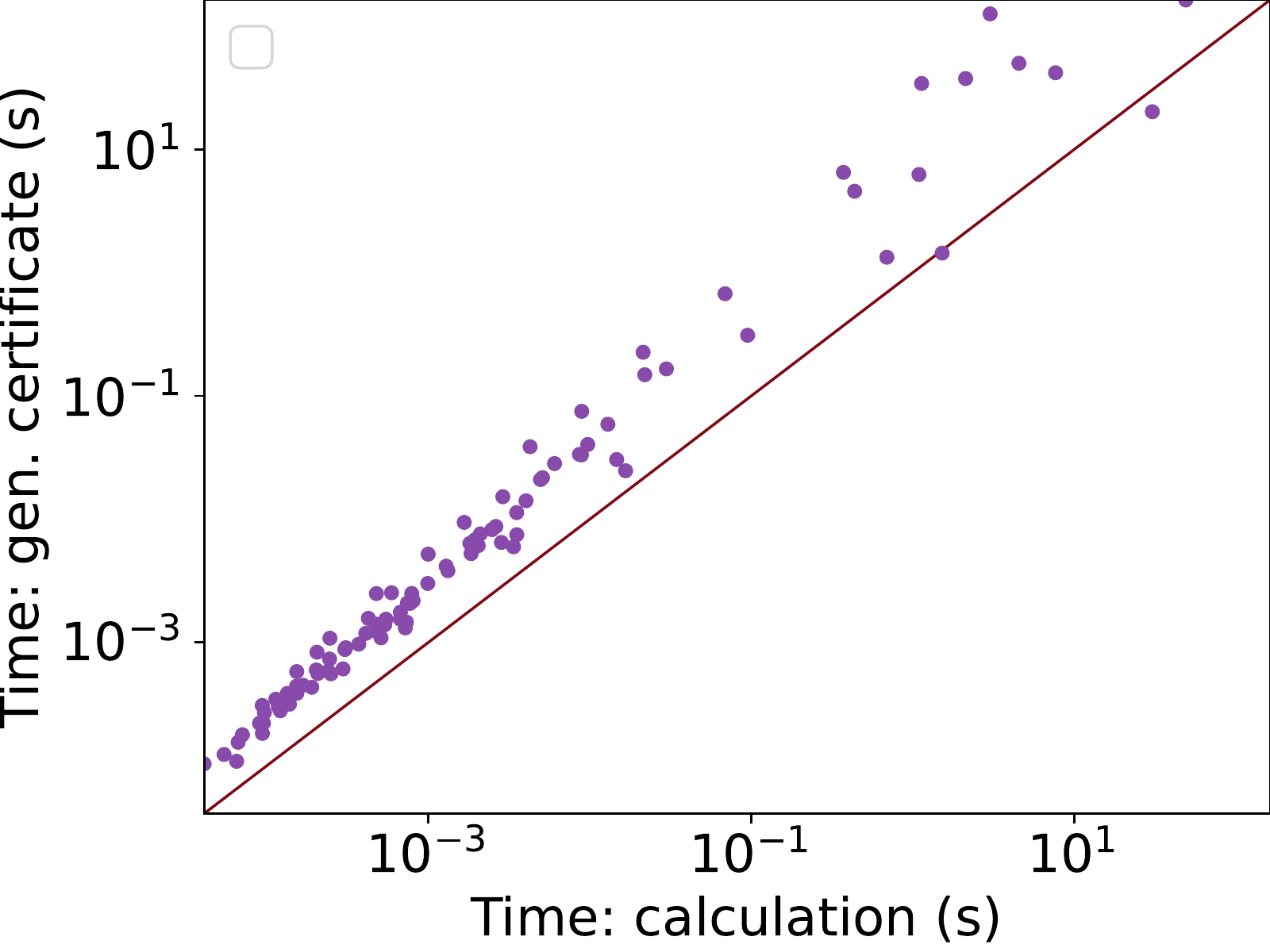} \\
	(a) & (b) \\[\hh]
        \includegraphics[width=\w]{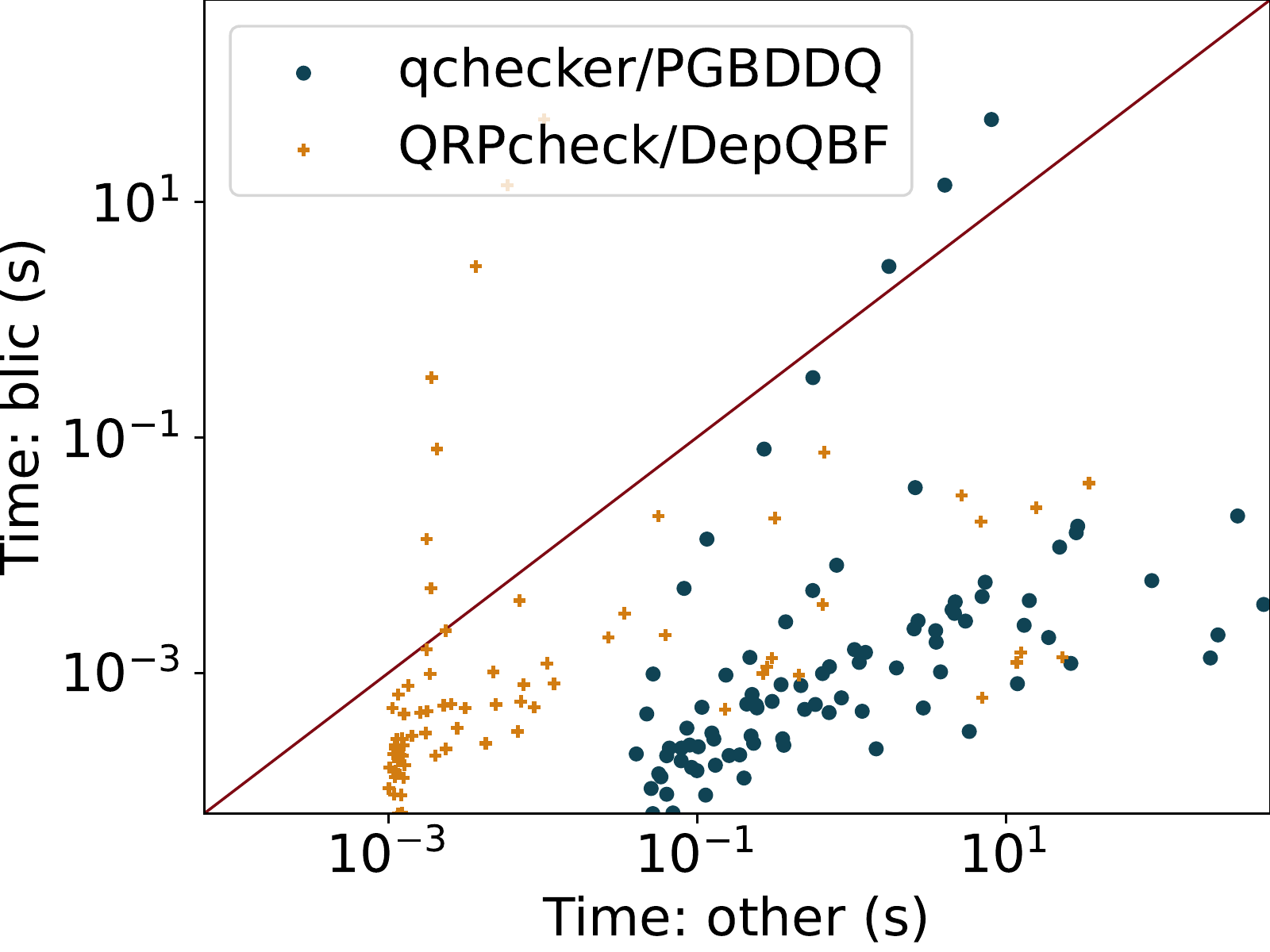} \hspace{\h}
&
        \includegraphics[width=\w]{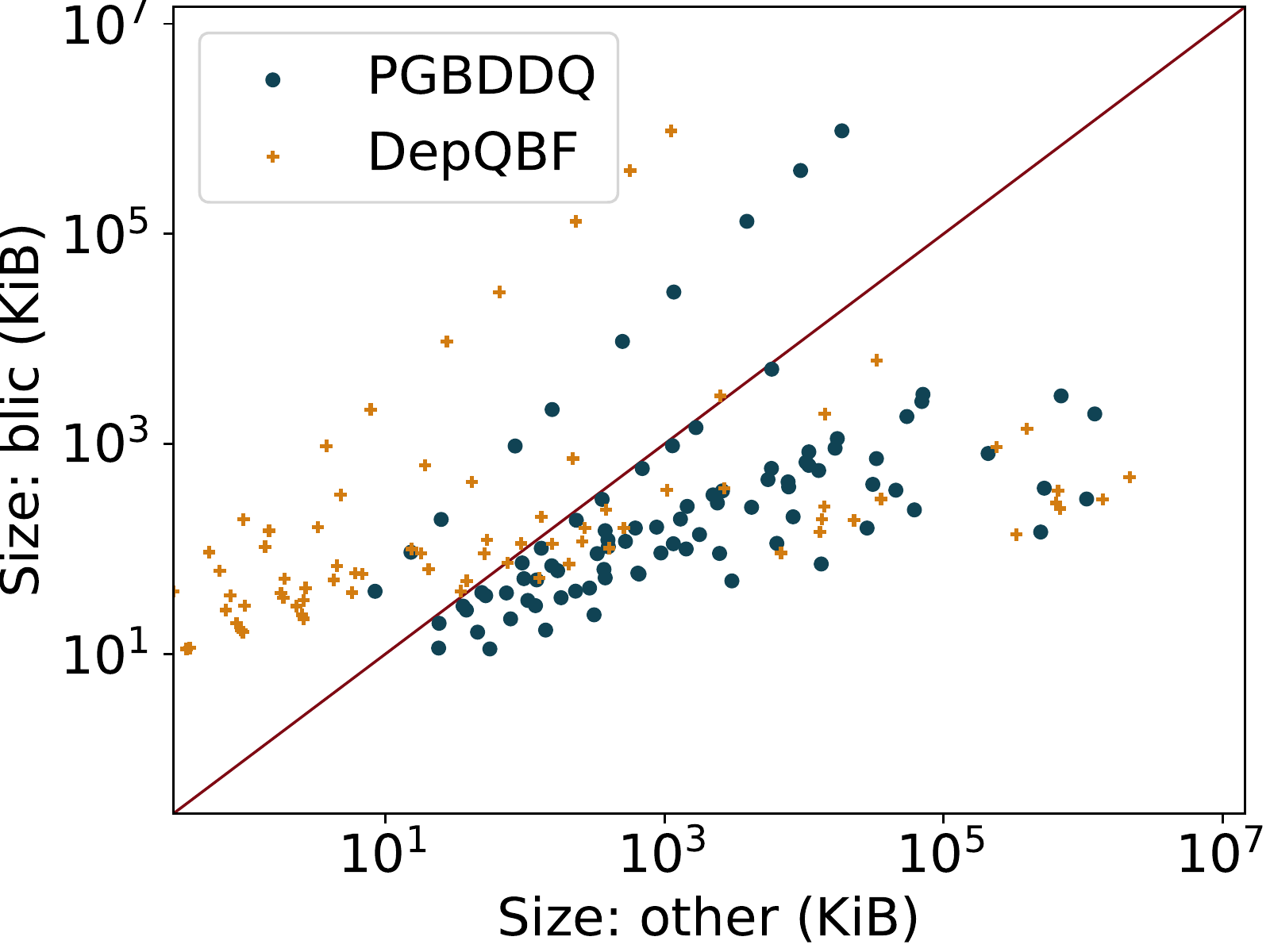} \\
	(c) & (d) \\
        \end{tabular}
}
        \caption{(a) Time taken on instances (dashed lines are $y=100x$ and $y=0.01x$),
                 (b) Cost of generating a certificate over computing the solution, 
        		 (c) Time to verify the certificate,
        		 (d) Size of certificates
    	}	
\label{fig:plots}
        \end{figure}

We take 172 QBF instances (all unsatisfiable) from the \emph{Crafted Instances} track of the QBF Evaluation 2022.\footnote{\caqe{} and \depqbf{} were the winner and runner-up in this category. The configuration we used differs from the competition, as described in \appref{app:setup}.
}
The \emph{Prenex CNF} track of the QBF competition is not evaluated here. It features instances with a large number of variables. BDD-based solvers perform poorly under these circumstances without additional optimisations. Our overall goal is not to propose a new approach for solving QBF, but rather to certify a BDD-based approach, so we wanted to focus on cases where the existing BDD-based approaches are practical.

We ran each benchmark with a 10 minute timeout; all tools other than \caqe{} were run with certificate production.
All times were obtained on a machine with an Intel Xeon E7-8857 CPU and 1.58 TiB RAM\footnote{\blic{} uses at most 60 GiB on the shown benchmarks, 5 GiB when excluding timeouts.} running Linux. See \appref{app:setup} for a detailed description.
\blic{} solved 96 out of 172 benchmarks, \caqe{} solved 98, \depqbf{} solved 87, and \pgbddq{} solved 91.
Figure~\ref{fig:plots}(a) shows the run times of \blic{} compared to the other tools.
The plot indicates that \blic{} is competitive on these instances, with a few cases, mostly from the Lonsing family of benchmarks, 
where \blic{} is slower than \depqbf{} by an order of magnitude. 
Figure~\ref{fig:plots}(b) shows the overhead of certification: for each benchmark (that finishes within a 10min timeout),
we plot the ratio of the time to compute the answer to the time it takes to run Prover in \textsc{CPCertify}. 
The dotted regression line shows \CPCertify{} has a 2.8$\times$ overhead over computing BDDs. 
For this set of examples, the error probability never exceeds $10^{-8.9}$ ($10^{-11.6}$ when Lonsing examples are excluded);
running the verifier $k$ times reduces it to $10^{-8.9k}$.

\begin{table}[t]
\begin{center}
\setlength{\tabcolsep}{1.5mm}
{\small
\centering
    \begin{tabular}{llrrrrrr}
\multicolumn{2}{c}{Instance} & \multicolumn{2}{c}{Solve time (s)} & \multicolumn{2}{c}{Certificate (MiB)} & \multicolumn{2}{c}{Verifier time (s)} \\
\multicolumn{1}{c}{$n$}& \multicolumn{1}{c}{result}& \multicolumn{1}{c}{\blic{}} & \multicolumn{1}{c}{\pgbddq{}} & \multicolumn{1}{c}{\blic{}} & \multicolumn{1}{c}{\pgbddq{}} & \multicolumn{1}{c}{\blic{}} & \multicolumn{1}{c}{\qchecker{}} \\ \cmidrule{1-2} \cmidrule(lr){3-4} \cmidrule(lr){5-6} \cmidrule(lr){7-8}
10 & sat & 0.03 & 3.67 & 1.20 & 8.48 & 0.01 & 3.80 \\
10 & unsat & 0.03 & 3.66 & 1.20 & 8.45 & 0.01 & 3.83 \\
15 & sat & 0.13 & 18.07 & 4.12 & 44.25 & 0.02 & 18.45 \\
15 & unsat & 0.13 & 18.14 & 4.11 & 44.20 & 0.02 & 18.55 \\
20 & sat & 0.54 & 82.92 & 11.59 & 198.54 & 0.07 & 80.28 \\
20 & unsat & 0.53 & 83.02 & 11.64 & 198.76 & 0.06 & 79.05 \\
25 & sat & 1.56 & 261.16 & 23.94 & 566.95 & 0.14 & 238.99 \\
25 & unsat & 1.55 & 261.25 & 23.86 & 565.36 & 0.15 & 237.94 \\
40 & sat & 25.22 & 4863.71 & 132.43 & 7464.96 & 0.95 & 5141.08 \\
40 & unsat & 25.25 & 4827.06 & 132.67 & 7467.84 & 0.99 & 5463.54 \\
    \end{tabular}
}
\end{center}
\caption{Comparison of certificate generation, bytes exchanged between prover and verifier, and time taken to verify the certificate on a set of QBF
benchmarks from \cite{BryantH21qbf}. ``Solve time'' is time taken to solve the instance and to generate a certificate (seconds), ``Certificate'' is the 
size of proof encoding for \pgbddq{}, and bytes exchanged by \CPCertify{} for \blic{}, and ``Verifier time'' is time to verify the
certificate (Verifier's run time for \blic{} and time taken by \qchecker).}
\label{table:pgbddqvsblic}
\end{table}

\parag{RQ2: Communication Cost of Certification and Verifier Time}
We explore \textbf{RQ2} by comparing the number of bytes exchanged between Prover and Verifier and the time needed for 
Verifier to execute \CPCertify{} 
with the number of bytes in an QBF proof and the time required to verify the proof produced by \depqbf{} and \pgbddq{}, for which we use \qrpcheck{}~\cite{NiemetzPLSB12,QRPCHECK} and \qchecker~\cite{BryantH21qbf,PGBDDQ}, respectively. Note that the latter is written in Python.

We show that the overhead of certification is low. 
Figure~\ref{fig:plots}(c) shows the run time of Verifier---this is generally negligible for \blic{}, except for the Lonsing and KBKF families, which have a large number of variables, but very small BDDs.
Figure~\ref{fig:plots}(d) shows the total number of bytes exchanged between Prover and Verifier in \blic{} against 
the size of the proofs generated by \pgbddq{} and \depqbf{}.
For large instances, the number of bytes exchanged in \blic{} is significantly smaller than the 
size of the proofs. The exception are again the Lonsing and KBKNF families of instances. For both plots, the dotted line results from a log-linear regression.

In addition to the Crafted Instances, we compare
against \pgbddq{} on a challenging family of benchmarks used in the \pgbddq{} 
paper (matching the parameters of \cite[Table 3]{BryantH21qbf}); these are QBF encodings of a linear domino placing game.\footnote{\depqbf{} only solved 1 of 10 instances within 120min, and is thus not compared.}
Our results are summarised in Table~\ref{table:pgbddqvsblic}.
The upper bound on Verifier error is $10^{-9.22}$.
We show that \blic\ outperforms \pgbddq{} both in overall cost of computing the answer and the certificates as well as in the number of bytes communicated and the time used by Verifier.

Our results indicate that giving up absolute certainty through interactive protocols can lead to an order of magnitude smaller communication cost and several orders of magnitude
smaller checking costs for the verifier.

\section{Conclusion}

We have presented a solver that combines BDDs with an interactive protocol. 
\blic{} can be seen as a self-certifying BDD library able to certify the correctness of arbitrary sequences of BDD operations.
In order to trust the result, a user must only trust the verifier (a straightforward program that poses challenges to the prover).
We have shown that \blic{} (including certification time) is competitive with other solvers,
and Verifier's time and error probabilities are negligible.

Our results show that $\IP=\PSPACE$ can become an important result not only in theory but also in the practice of automatic verification. 
From this perspective, our paper is a first step towards practical certification based on interactive protocols. 
While we have focused on BDDs, we can ask the more general question: which practical automated reasoning algorithms can be made efficiently certifying? 
For example, whether there is an interactive protocol and an efficient certifying version of modern SAT solving algorithms is an interesting open challenge.
 
\bibliographystyle{plain}
\bibliography{references}

\appendix

\section{Proof of Lemma~\ref{lem:conv}}

Before we start with the proof, we show a technical property, namely that binary equivalence is preserved by addition and multiplication.
\begin{lemma}\label{lem:equivfacts}
Let $p_1,p_2,q_1,q_2$ denote polynomials with $p_i\equiv_b q_i$ for $i\in\{1,2\}$. Then $p_1+p_2\equiv_b q_1+q_2$, $p_1\cdot p_2\equiv_b q_1\cdot q_2$, and $\pev{x:=r}p_1\equiv_b\pev{x:=r}q_1$ for $r\in\{0,1\}$.
\end{lemma}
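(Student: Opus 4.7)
The plan is to exploit the fact that partial evaluation is a ring homomorphism on polynomials and that a full binary assignment $\Pev\sigma$ is the composition of single-variable partial evaluations. This reduces all three claims to manipulations of the definition of $\equiv_b$.

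First, I would unfold the definition: to prove $p_1 + p_2 \equiv_b q_1 + q_2$, fix an arbitrary binary assignment $\sigma \colon X \to \{0,1\}$ and observe that $\Pev\sigma$ distributes over sums, so
\[
\Pev\sigma(p_1 + p_2) = \Pev\sigma p_1 + \Pev\sigma p_2 = \Pev\sigma q_1 + \Pev\sigma q_2 = \Pev\sigma(q_1 + q_2),
\]
where the middle equality uses the hypotheses $p_i \equiv_b q_i$. The argument for multiplication is identical, replacing the sum by a product and using that $\Pev\sigma$ distributes over products (both facts are immediate from the definition of $\Pev\sigma$ as iterated substitution into a polynomial over the field $\F$).

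For the partial evaluation claim, the idea is to absorb the operator $\pev{x \Gets r}$ into the assignment. Fix a binary assignment $\sigma \colon X \to \{0,1\}$; without loss of generality assume $\sigma$ is not defined on $x$ (if it is, $\pev{x \Gets r}$ has no effect on $\Pev\sigma p_1$ beyond what $\sigma$ already does, and the equation is trivial). Let $\sigma' := \sigma \cup \{x \mapsto r\}$. Since $r \in \{0,1\}$, $\sigma'$ is again a binary assignment, so by hypothesis $\Pev{\sigma'} p_1 = \Pev{\sigma'} q_1$. But by definition of $\Pev\cdot$ as iterated partial evaluation, $\Pev\sigma \pev{x \Gets r} p_1 = \Pev{\sigma'} p_1$, and likewise for $q_1$, giving the result.

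I expect no real obstacle — all three parts follow directly from the homomorphism property of $\Pev\sigma$ together with the observation that $r \in \{0,1\}$ is exactly what is needed to keep the extended assignment $\sigma'$ binary. The only point requiring a moment's care is the side case where $\sigma$ is already defined on $x$ in the partial-evaluation argument, which is handled by noting that in that case $\pev{x \Gets r}$ commutes trivially past $\pev{x \Gets \sigma(x)}$ (the first substitution wins).
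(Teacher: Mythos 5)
Your proposal is correct and follows essentially the same route as the paper: the paper also proves the first two claims by distributing $\Pev\sigma$ over sums and products, and the third by commuting $\pev{x:=r}$ past $\Pev\sigma$ and invoking the hypothesis on the extended binary assignment (which is binary precisely because $r\in\{0,1\}$). Your extra remark about the case where $\sigma$ already assigns $x$ is a harmless elaboration the paper leaves implicit.
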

\begin{proof}
Let $\sigma$ denote an arbitrary binary assignment. Then 
\begin{equation*}
\begin{array}{llll}
\Pev{\sigma}(p_1+p_2)
&=\Pev{\sigma}p_1+\Pev{\sigma}p_2
&=\Pev{\sigma}q_1+\Pev{\sigma}q_2
&=\Pev{\sigma}(q_1+q_2)\\[1mm]
\Pev{\sigma}(p_1\cdot p_2)
&=\Pev{\sigma}p_1\cdot \Pev{\sigma}p_2
&=\Pev{\sigma}q_1\cdot \Pev{\sigma}q_2
&=\Pev{\sigma}(q_1\cdot q_2)\\
\Pev{\sigma}\pev{x:=r}p_1
&=\pev{x:=r}\Pev{\sigma}p_1
&=\pev{x:=r}\Pev{\sigma}q_1
&=\Pev{\sigma}\pev{x:=r}q_1
\end{array}
\end{equation*}
\Qed
\end{proof}

\lemconv*
\begin{proof}
\newcommand{\starequiv}{\mathbin{\raisebox{-1pt}[0pt]{$\stackrel{(*)}{\equiv}_b$}}}
\proofparag{Part (a)} We proceed by structural induction on $\varphi$, to show $\Eval{\Convert(\varphi)}\equiv_b\Eval{\varphi}$.
\begin{itemize}
\item The base case $\varphi\in\{\True,\False\}$ is trivial.
\item Let $\varphi=\neg\psi$. By hypothesis, $\Eval{\Convert(\psi)}\equiv_b\Eval{\psi}$. Now we have \[\Eval{\Convert(\neg\psi)}=\Eval{\neg\Convert(\psi)}=1-\Eval{\Convert(\psi)}
\starequiv 1-\Eval{\psi}=\Eval{\varphi}
\]
At $(*)$ we use both the induction hypothesis and Lemma~\ref{lem:equivfacts}. We use $(*)$ for the other cases as well, with the same meaning.
\item Let $\varphi=\pev{x:=a}\psi$.
\[\Eval{\Convert(\pev{x:=a}\psi)}
=\pev{x:=\Eval{a}}\Eval{\Convert(\psi)}
\starequiv\pev{x:=a}\Eval{\psi}
=\Eval{\varphi}
\]
\item Let $\varphi=\psi_1\wedge\psi_2$. We use $\delta_xp\equiv_bp$ for any polynomial $p$ and variable $x$.
\begin{gather*}
\Eval{\Convert(\psi_1\wedge\psi_2)}
=\Eval{\delta_{x_1}...\delta_{x_k}(\Convert(\psi_1)\wedge\Convert(\psi_2))}\\\quad
\equiv_b\Eval{\Convert(\psi_1)\wedge\Convert(\psi_2)}
=\Eval{\Convert(\psi_1)}\cdot\Eval{\Convert(\psi_2)}
\starequiv\Eval{\psi_1}\cdot\Eval{\psi_2}\\\quad
=\Eval{\psi_1\wedge\psi_2}=\Eval{\varphi}
\end{gather*}
\item For $\varphi=\psi_1\vee\psi_2$ the argument is analogous.
\end{itemize}

It remains to show that $\Eval{\Convert(\varphi)}$ is multilinear. Again, we do a structural induction. The base case $\varphi\in\{\True,\False\}$ is again trivial. The other cases all follow from the observation that, given multilinear polynomials $p,q$ over variables $x_1,...,x_k$, variable $x$ and $a\in\{\True,\False\}$, all of $1-p$, $\pev{x:=a}p$, $\delta_{x_1}...\delta_{x_k}(p\cdot q)$, and $\delta_{x_1}...\delta_{x_k}(p+q-p\cdot q)$ are multilinear.

\proofparag{Part (b)} If there is a descendant $\psi'$ of $\varphi$, s.t.\ $\Convert(\psi')=\psi$, then the statement follows from part (a), as $\Eval{\psi}$ is multilinear. This leaves two cases. First, if $\psi=\psi_1\circledast\psi_2$, then $\psi_i=\Convert(\psi_i')$ for $i\in\{1,2\}$ by construction. Therefore, $\Eval{\psi_i}$ is multilinear. So if $\circledast=\wedge$ we get $\Eval{\psi}=\Eval{\psi_1}\cdot\Eval{\psi_2}$, which has maximum degree $2$. Analogously for $\circledast=\vee$. 

Second, we have the case $\psi=\delta_x\psi_1$. By induction, we find that $\psi_1$ has maximum degree $2$, which cannot be increased by $\delta_x$.
\Qed
\end{proof}

\section{Proof of Lemma~\ref{lem:numberofsat}}

\numberofsat*

\begin{proof}
Let $X=\{x_1,...,x_n\}:=\Free(\varphi)$, and let $S:=\{\sigma\mid\sigma:X\rightarrow\{0,1\}\}$ be the set of binary assignments on $X$. We also set $p:=\Eval{\Convert(\varphi)}$. Then
\[m\stackrel{(1)}{=}\sum_{\sigma\in S}\Pev{\sigma}\Eval{\varphi}
\stackrel{(2)}{=}\sum_{\sigma\in S}\Pev{\sigma}p
\] 
where (1) uses Proposition~\ref{prop:fundeval} and $m<\Abs{\F}$, and (2) uses $\Eval{\varphi}\equiv_bp=\Eval{\Convert(\varphi)}$ (Lemma~\ref{lem:conv}).

We now introduce the notation $\Sigma_xp:=\pev{x:=0}p+\pev{x:=1}p$ for $x\in X$. Using this notation, we rewrite above equation.
\[m=\Sigma_{x_1}\Sigma_{x_2}\cdots\Sigma_{x_n}p\tag{$*$}\]
Crucially, we can now use the fact that $p$ is multilinear (again Lemma~\ref{lem:conv}), to derive 
\[p=\delta_xp=[1-x]\cdot\pev{x:=0}p+[x]\cdot\pev{x:=1}p\]
for any $x\in X$. Setting $x$ to $1/2$ yields
\begin{gather*}
\pev{x:=1/2}p=\pev{x:=1/2}\big([1-x]\cdot\pev{x:=0}p+[x]\cdot\pev{x:=1}p\big)\\[1mm]\quad
=(\pev{x:=0}p+\pev{x:=1}p)/2=\Sigma_xp/2
\end{gather*}
In other words, $\Sigma_xp=2\cdot\pev{x:=1/2}p$. By plugging this into $(*)$ we get $m=2^n\Pev{\sigma}p$, where $\sigma(x):=1/2$ for $x\in X$, as desired.
\Qed
\end{proof}

\section{Proof of Proposition \ref{prop:algorithm}}

The core of the argument in Proposition~\ref{prop:algorithm} uses the Schwartz-Zippel Lemma,
of which we only need a very simple version.

\begin{lemma}[Schwartz-Zippel Lemma]
\label{lem:sz}
Let $p_1, p_2$ be distinct univariate polynomials over $\F$ of degree at most $d \geq 0$. Let $r$ be selected uniformly at random from $\F$. The probability that $p_1(r) = p_2(r)$ holds is at most $d /\Abs{\F}$.
\end{lemma}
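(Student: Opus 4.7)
The plan is to reduce the statement to a standard fact about roots of nonzero polynomials over a field. First, I would define $q := p_1 - p_2$. Since $p_1 \neq p_2$ by assumption, $q$ is a nonzero univariate polynomial, and since both $p_1$ and $p_2$ have degree at most $d$, so does $q$. The event that $p_1(r) = p_2(r)$ is precisely the event that $q(r) = 0$, so it suffices to bound the probability that a uniformly random $r \in \F$ is a root of $q$.

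Next, I would invoke the standard fact that a nonzero univariate polynomial of degree at most $d$ over a field has at most $d$ roots in that field. If the lemma is to be self-contained, I would sketch the short inductive proof: for $d = 0$, $q$ is a nonzero constant and has no roots; for $d \geq 1$, either $q$ has no root in $\F$ (and we are done), or it has some root $a \in \F$, in which case the factor theorem (polynomial division by $x - a$ in $\F[x]$, using that $\F$ is a field) gives $q = (x - a) \cdot q'$ with $\deg q' \leq d - 1$. Every other root of $q$ must be a root of $q'$, so by induction $q$ has at most $1 + (d-1) = d$ roots.

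Finally, since $r$ is chosen uniformly at random from $\F$ and the set of roots of $q$ has cardinality at most $d$, the probability that $q(r) = 0$ is at most $d / \Abs{\F}$, establishing the claim. Note the bound is vacuous when $d \geq \Abs{\F}$, which is consistent with the statement as written.

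The main (and essentially only) obstacle is the algebraic fact that a nonzero polynomial of degree $d$ over a field has at most $d$ roots, which rests on the factor theorem and in turn on $\F$ being an integral domain. Everything else in the argument is a direct reformulation: subtracting $p_2$ from $p_1$ to turn the equality event into a root event, and dividing the root count by $\Abs{\F}$ to turn it into a probability. No delicate case analysis or probabilistic manipulation is required.
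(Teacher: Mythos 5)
Your proposal is correct and follows essentially the same route as the paper: form the nonzero difference polynomial, bound its number of roots by $d$, and divide by $\Abs{\F}$. The only difference is that you additionally sketch the standard inductive proof (via the factor theorem) that a nonzero degree-$d$ polynomial has at most $d$ roots, which the paper simply takes as known.
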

\begin{proof}
Since $p_1\ne p_2$ the polynomial $p := p_1 - p_2$ is not the zero polynomial and has degree at most $d$. Therefore $p$ has at most $d$ zeros, and so the probability of $p(r)=0$ is at most $d / \Abs{\F}$.
\end{proof}

Now we move on to the proof.

\cpcertifycorrect*

\begin{proof}
If the claim is true, 
Prover can always answer every challenge posed by Verifier truthfully. True claims pass all the checks conducted by Verifier, and so Verifier accepts. 

Assume now the claim is false.  We show that Verifier accepts with probability at most $4n\Abs{\varphi}/\Abs{\F}$.

First, we consider the contents of $\Claimset $, the set of claims yet to be checked, after each step of the protocol. This gives rise to a sequence $\Claimset _0, \Claimset _1, ...\Claimset _l$, where $\Claimset _0$ contains only the initial claim. In particular, we consider each iteration of step (b.1) separately, so executing the loop $s$ times adds $s$ elements to the sequence.

As (by assumption) the initial claim is false, $\Claimset_0$ contains a false claim. For the moment, assume that Verifier accepts. They therefore must complete all rounds without rejecting. As the nodes are processed in topological order (every node is processed before its descendants), eventually $\Claimset$ must become empty: an inner node $\psi$ replaces all claims about itself with claims about its children, and each leaf node $\psi$ removes all claims about itself.

So $\Claimset_l=\emptyset$ contains only true claims and thus there must be an $i$, s.t.\ $\Claimset_i$ contains one false claim, but $\Claimset_{i+1}$ contains only true claims. For any such $i$, we say that Prover \emph{tricks} Verifier in step $i$. In other words: if Verifier accepts, it was tricked at some point.

We will now show that at each step, Verifier is tricked either with probability $0$ or probability at most $2/\Abs{\F}$, 
and the latter case occurs at most $2n\Abs{\varphi}$ times. By union bound, this implies the stated bound.

First, we show that Verifier can only be tricked in an iteration of step (b.1), or in step (c.4). Steps (a) and (b.2) do not modify $\Claimset$. The arguments for steps (c.1-c.3) are analogous, so we only present (c.1) with $\circledast=\wedge$ here. Let $\Pev{\sigma}\Eval{\psi} = k$ be the claim to be checked. We have
\[\Pev{\sigma}\Eval{\psi}=\Pev{\sigma}\Eval{\psi_1\wedge\psi_2}
=\Pev{\sigma}(\Eval{\psi_1}\cdot\Eval{\psi_2})
=\Pev{\sigma}\Eval{\psi_1}\cdot\Pev{\sigma}\Eval{\psi_2}\]
So $\Pev{\sigma}\Eval{\psi} = k$ is equivalent to $\Pev{\sigma}\Eval{\psi_1}=k_1\wedge\Pev{\sigma}\Eval{\psi_2}=k_2\wedge k_1k_2=k$. In other words, if $\Pev{\sigma}\Eval{\psi} \ne k$, then either $k_1k_2\ne k$ and Verifier rejects immediately (note $\pev{x \Gets  k_1}\pev{y \Gets k_2}\Eval{x\wedge y}=k_1k_2$), or one of the two claims added to $\Claimset$ is false.

For step (c.4) we get
\[\Pev{\sigma}\Eval{\psi}
=\pev{x:=\sigma(x)}\Pev{\sigma'}\Eval{\delta_x\psi'}
=\pev{x:=\sigma(x)}\Pev{\sigma'}\delta_x\Eval{\psi'}
=\pev{x:=\sigma(x)}\delta_x\Pev{\sigma'}\Eval{\psi'}
\]
so $\Pev{\sigma}\Eval{\psi} = k$ is equivalent to $\Pev{\sigma'}\Eval{\psi'} = p\wedge \pev{x:=\sigma(x)}\delta_xp=k$. Conversely, if the claim does not hold we get either $\pev{x:=\sigma(x)}\delta_xp\ne k$ and Verifier rejects immediately, or $\Pev{\sigma'}\Eval{\psi'} \ne p$. Note that $p$ is a univariate polynomial with degree at most 2 (Lemma~\ref{lem:conv}b), not a constant — Verifier cannot simply add the claim $\Pev{\sigma'}\Eval{\psi'} = p$ to $\Claimset$. However, by Lemma~\ref{lem:sz}, $\Pev{\sigma'}\Eval{\psi'} \ne p$ implies that $\pev{x:=r}\Pev{\sigma'}\Eval{\psi'}=\pev{x:=r}p$ holds with probability at most $2/\Abs{\F}$. Otherwise, the claim added to $\Claimset$ is false and Verifier is not tricked.

Step (b.1) remains. We remark that, though unintuitive, the probability that Verifier is tricked does not increase with $m$, the number of claims to be merged. If all claims $\{ \Pev{\sigma_i}\Eval{\psi} = k_i\}_{i=1}^m$ are true, then clearly Verifier cannot be tricked in this step. So fix an $i$ s.t.\ $\Pev{\sigma_i}\Eval{\psi} \ne k_i$. Similar to step (c.4) above we get that either
$\pev{x \Gets  \sigma_i(x)} \, p_i\ne k_i$ and Verifier rejects, or $\Pev{\sigma_i'}\Eval{\psi} \ne p_i$. In the latter case, $\pev{x\Gets r}\Pev{\sigma_i'}\Eval{\psi} = \pev{x\Gets r}p_i$ again holds with probability at most $2/\Abs{\F}$.

To conclude the proof, we argue that steps (b.1) and (c.4) occur at most $2n\Abs{\varphi}$ times in total. Step (c.4) occurs at most $\Abs{\Convert(\varphi)}\le n\Abs{\varphi}$ times. For (b.1) we note that it is executed at most $n$ times for each node with more than one parent. The conversion of $\varphi$ to a CPD does not increase the number of such nodes, so it also occurs at most $n\Abs{\varphi}$ times.
\Qed
\end{proof}

\section{Proof of Proposition \ref{prop:basic}}

\basicproperties*

\begin{proof}
\proofparag{Part (a)}
We proceed by induction, and show that a BDD $w$ with $i:=\Level(w)$ is multilinear in $x_1,...,x_i$ and does not depend on $x_{i+1},...,x_n$. The base case $i=0$ is trivial. For the induction step, let $w=\Bdd{x_i,u,v}$. We have $\Eval{w}=[1-x_i]\cdot\Eval{u}+[x_i]\cdot\Eval{v}$. By induction hypothesis, $\Eval{u},\Eval{v}$ are multilinear in $x_1,...,x_{i-1}$ and do not depend on $x_i,...,x_n$. The claim follow immediately.

It remains to argue that $\Eval{w}$ is binary, for a BDD $w$. We again proceed by induction on $\Level(w)$. For $w\in\{\Bdd{\False},\Bdd{\True}\}$, this is clear, so let $w=\Bdd{x,u,v}$ and let $\sigma$ denote a binary assignment. We get 
\begin{align*}
\Pev{\sigma}\Eval{w}
&=\Pev{\sigma}\big([1-x_i]\cdot\Eval{u}+[x_i]\cdot\Eval{v}\big)\\
&=[1-\sigma(x_i)]\cdot\Pev{\sigma}\Eval{u}+[\sigma(x_i)]\cdot\Pev{\sigma}\Eval{v}\in\{\Pev{\sigma}\Eval{u},\Pev{\sigma}\Eval{v}\}
\end{align*}
The last step uses $\sigma(x_i)\in\{0,1\}$. By induction hypothesis, both $\Pev{\sigma}\Eval{u}$ and $\Pev{\sigma}\Eval{v}$ are $\Zero$ or $\One$, and the claim follows.

\proofparag{Part (b)}
Before we prove this part, we remark that the statement follows from two well-known facts: multilinear polynomials are uniquely determined by the values they take on binary assignments, and BDDs uniquely represent arbitrary boolean functions. Here, however, we give an elementary proof.

We show that such a $w$ exists if $p$ is a polynomial over variables $x_1,...,x_i$, for all $0\le i\le n$. We proceed by induction on $i$. For $i=0$ we have $p\in\{\Zero,\One\}$ and choose the appropriate $w\in\{\Bdd{\False},\Bdd{\True}\}$. For $i>0$, let $p_b:=\pev{x_i:=b}p$, for $b\in\{0,1\}$, and let $w_b$ denote a BDD for $p_b$.

We have $p=[1-x_i]\cdot p_0+[x_i]\cdot p_1$. If $p_0=p_1$, then $p=p_0$ does not depend on $x_i$, so $p=\Eval{w_0}$. Otherwise, with $w:=\Bdd{x_i,w_0,w_1}$ we have $\Eval{w}=[1-x_i]\cdot\Eval{w_0}+[x_i]\cdot\Eval{w_1}=p$.

It remains to show that $w$ is unique. We will do this by proving that $\Eval{u}\ne\Eval{v}$ for all BDDs $u\ne v$. Assume the contrary, and choose a counterexample $u\ne v$ s.t.\ $\Eval{u}=\Eval{v}$ and $\Level(u)+\Level(v)$ is minimal. Wlog.\ we assume $\Level(u)\le\Level(v)=:i$.

First, we consider the case that $\Level(u)<i$. Then $v=\Bdd{x_i,v_0,v_1}$ and we get $\Eval{u}=\pev{x_i:=b}\Eval{u}=\pev{x_i:=b}\Eval{v}=\Eval{v_b}$ for $b\in\{0,1\}$. Due to $v_0\ne v_1$ (else $v$ would not be a BDD) we find a $b$ with $u\ne v_b$, but $\Eval{u}=\Eval{v_b}$. This contradicts minimality of the counterexample.

Second, we have the case $\Level(u)=\Level(v)=i$. Clearly, $i>0$, so $v=\Bdd{x_i,v_0,v_1}$ and $u=\Bdd{x_i,u_0,u_1}$. Due to $v\ne u$ there is a $b\in\{0,1\}$ with $v_b\ne u_b$. But we get $\Eval{v_b}=\pev{x_i:=b}\Eval{v}=\pev{x_i:=b}\Eval{u}=\Eval{u_b}$, so again a smaller counterexample exists.
\Qed
\end{proof}

\section{Proof of Proposition~\ref{prop:computebdd}}

\computebdd*

The proof will take up the remainder of this section.

For the time bound, observe that \EvaluateEBDD\ performs the same operations as $\bddop$, but with a breadth-first traversal of the BDD, instead of a depth-first one. Clearly, this does not increase the time complexity.

The bound $T\in\O(\Abs{u_1}\cdot\Abs{u_2})$ is a well-known bound for $\bddop$, it relies on there being at most $\Abs{u_1}\cdot\Abs{u_2}$ recursive calls, as each call corresponds to a pair of BDD nodes (and identical calls are memoised). Naturally, the same bound holds for \EvaluateEBDD, where the number of created product nodes is bounded by $\Abs{u_1}\cdot\Abs{u_2}$. Each of these product nodes is operated on once, by replacing it with a BDD nodes.

Now we move to showing correctness, which will follow from Lemmata~\ref{lem:claim3} and~\ref{lem:claim4}. We start with a basic property of the degree reduction operator.

\begin{lemma}\label{lem:degreeredution}
Let $p$ denote a polynomial and $x$ a variable. Then we have $\delta_xp=[1-x]\cdot\pev{x:=0}p+[x]\cdot\pev{x:=1}p$.
\end{lemma}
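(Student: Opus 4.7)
The plan is to prove the identity monomial-by-monomial, exploiting that all operations involved ($\delta_x$, partial evaluation, addition, scalar-by-polynomial multiplication) are linear in $p$. So it suffices to verify the identity for polynomials of the form $p = x^k \cdot m$, where $m$ is a monomial not involving $x$, and then extend by linearity to arbitrary $p$.

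First I would state the reduction: write $p = \sum_{k \ge 0} x^k q_k$, where each $q_k$ is a polynomial in the remaining variables. By linearity of all the operators on both sides, it is enough to check the identity when $p = x^k q$ for a single $k \ge 0$ and a polynomial $q$ not mentioning $x$. Then I would split into two cases:

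\textbf{Case $k = 0$:} Here $p = q$, so $\delta_x p = q$ (no powers of $x$ to reduce), while $\pev{x:=0}p = q = \pev{x:=1}p$, and the right-hand side collapses to $[1-x]q + [x]q = q$. So both sides agree.

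\textbf{Case $k \ge 1$:} Here $\delta_x p = x \cdot q$ by definition of $\delta_x$, while $\pev{x:=0}p = 0$ and $\pev{x:=1}p = q$, so the right-hand side equals $[1-x]\cdot 0 + [x]\cdot q = xq$. Again both sides agree.

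The only subtlety — and I expect this to be the sole point requiring a sentence of justification rather than direct computation — is to make explicit that $\delta_x$ is indeed linear in its argument (it acts monomial-wise by replacing $x^k$ with $x$ when $k \ge 1$ and leaving $x^0$ untouched), so that the reduction to single monomials is valid. Once that is noted, the two case computations above conclude the proof.
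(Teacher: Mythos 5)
Your proof is correct and follows essentially the same route as the paper: the paper writes $p = p_0 + [x]\cdot p_1 + \ldots + [x^k]\cdot p_k$ with the $p_i$ independent of $x$ and evaluates both sides of the identity on this sum directly, which is exactly your monomial decomposition with the linearity step performed implicitly. Your explicit case split ($k=0$ versus $k\ge 1$) and the remark that $\delta_x$ acts monomial-wise are fine and match the computation in the paper.
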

\begin{proof}
We write $p$ as $p=p_0+[x]\cdot p_1+...+[x^k]\cdot p_k$ for polynomials $p_0,...,p_k$ which do not depend on $x$. We get
\begin{gather*}
[1-x]\cdot\pev{x:=0}p+[x]\cdot\pev{x:=1}p=[1-x]\cdot p_0+[x]\cdot(p_0+p_1+...+p_k)\\
=p_0+[x]\cdot(p_1+...+p_k)=\delta_xp
\end{gather*}
\Qed
\end{proof}

We first show that the innermost loop of \ComputeEBDD\ computes a degree reduction.

\begin{lemma}\label{lem:closed}
Let $\Bdd{u \circledast v}$ be a product eBDD, and let $s:=\Bdd{x_{n-i}, t_0, t_1}$ be the eBDD computed by the innermost loop of \ComputeEBDD\ (Table \ref{table:alg}, left). Then $\Eval{s} = \delta_{x_{n-i}}\Eval{\Bdd{u \circledast v}}$. 
\end{lemma}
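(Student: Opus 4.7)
The plan is to prove the identity by a direct calculation, using Lemma~\ref{lem:degreeredution} to unfold $\delta_{x_{n-i}}$ and then matching the resulting expression against the arithmetisation of $s = \Bdd{x_{n-i}, t_0, t_1}$ as given in Definition~\ref{def:earith}.

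First I would record a small commutation fact: for an eBDD $w$, a variable $x_j$ with $j \leq \Level(w)$, and $b \in \{0,1\}$, we have $\Eval{\pev{x_j:=b} w} = \pev{x_j:=b}\Eval{w}$. This is mentioned in the ``Notation'' paragraph before the algorithm, and follows by a routine induction on the structure of $w$ using that $\pev{x_j:=b}$ is linear and multiplicative. Applied to the BDDs $u, v$ at variable $x_{n-i}$, this yields $\Eval{u_b} = \pev{x_{n-i}:=b}\Eval{u}$ and $\Eval{v_b} = \pev{x_{n-i}:=b}\Eval{v}$.

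Next, I would apply Lemma~\ref{lem:degreeredution} with $x := x_{n-i}$ and $p := \Eval{\Bdd{u \circledast v}}$ to obtain
\[
\delta_{x_{n-i}} \Eval{\Bdd{u \circledast v}} = [1-x_{n-i}] \cdot \pev{x_{n-i}:=0}\Eval{\Bdd{u \circledast v}} + [x_{n-i}] \cdot \pev{x_{n-i}:=1}\Eval{\Bdd{u \circledast v}}.
\]
The main calculation then splits by $\circledast$. For $\circledast = \wedge$, I use $\Eval{\Bdd{u \wedge v}} = \Eval{u} \cdot \Eval{v}$ together with the fact that $\pev{x_{n-i}:=b}$ distributes over products to get
\[
\pev{x_{n-i}:=b} \Eval{\Bdd{u \wedge v}} = \Eval{u_b} \cdot \Eval{v_b} = \Eval{\Bdd{u_b \wedge v_b}} = \Eval{t_b}.
\]
Plugging this back in gives $\delta_{x_{n-i}} \Eval{\Bdd{u \wedge v}} = [1-x_{n-i}] \Eval{t_0} + [x_{n-i}] \Eval{t_1}$, which is exactly $\Eval{\Bdd{x_{n-i}, t_0, t_1}} = \Eval{s}$. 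The case $\circledast = \vee$ is identical in structure, using additionally that $\pev{x_{n-i}:=b}$ is linear so it distributes through the formula $\Eval{u} + \Eval{v} - \Eval{u}\cdot\Eval{v}$.

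I do not expect a real obstacle here: Lemma~\ref{lem:degreeredution} does essentially all the work, and the only point that needs care is justifying that partial evaluation commutes with the eBDD arithmetisation operators, which is immediate from the algebraic definitions of $+$, $\cdot$, and $\pev{\cdot}$. A minor sanity check is that $x_{n-i}$ really is the top variable of $\Bdd{u \circledast v}$ (so that $\Bdd{x_{n-i}, t_0, t_1}$ is a well-formed eBDD): this holds because the loop iterates over product nodes at level $n-i$, meaning $\max\{\Level(u), \Level(v)\} = n-i$ and hence both $u_b$ and $v_b$ have level strictly below $n-i$, so $t_b$ has level at most $n-i - 1 < n-i$, as required by Definition~\ref{def:eBDD}.
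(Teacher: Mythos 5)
Your proposal is correct and follows essentially the same route as the paper's proof: both rest on the commutation of partial evaluation with arithmetisation to show $\Eval{t_b}=\pev{x_{n-i}:=b}\Eval{\Bdd{u\circledast v}}$, and then invoke Lemma~\ref{lem:degreeredution} to identify $[1-x_{n-i}]\cdot\Eval{t_0}+[x_{n-i}]\cdot\Eval{t_1}$ with the degree reduction. The only difference is cosmetic (you unfold $\delta_{x_{n-i}}$ first and match, while the paper computes $\Eval{t_b}$ first and assembles), plus your added level-bookkeeping sanity check, which the paper omits.
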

\begin{proof}
For $\circledast=\wedge$ and $b\in\{0,1\}$ we get
\begin{align*}
\Eval{t_b}&=\Eval{\Bdd{u_b\wedge v_b}}
=\Eval{u_b}\cdot\Eval{v_b}
=\Eval{\pev{x_{n-i}:=b}u}\cdot\Eval{\pev{x_{n-i}:=b}v}\\
&=\pev{x_{n-i}:=b}\Eval{u}\cdot\pev{x_{n-i}:=b}\Eval{v}
=\pev{x_{n-i}:=b}\big(\Eval{u}\cdot\Eval{v}\big)
=\pev{x_{n-i}:=b}\Eval{\Bdd{u\wedge v}}
\end{align*}
Analogously, one can derive $\Eval{t_b}=\pev{x_{n-i}:=b}\Eval{\Bdd{u\circledast v}}$ for $\circledast=\vee$ as well. Finally:
\begin{align*}
\Eval{s}
&=[1-x_{n-i}]\cdot\Eval{t_0}+[x_{n-i}]\cdot\Eval{t_1}\\
&=[1-x_{n-i}]\cdot\pev{x_{n-i}:=0}\Eval{\Bdd{u\circledast v}}+[x_{n-i}]\cdot\pev{x_{n-i}:=1}\Eval{\Bdd{u\circledast v}}\\
&=\delta_{x_{n-i}}\Eval{\Bdd{u \circledast v}}
\end{align*}
The last step uses Lemma~\ref{lem:degreeredution}.\Qed
\end{proof}

Now we show some simple invariants of the algorithms.

\begin{lemma}\label{lem:claim1}
$w_i$ only has product nodes at levels $1,...,n-i$.
\end{lemma}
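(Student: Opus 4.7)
The plan is to prove the invariant by induction on $i$. The base case $i=0$ is immediate: $w_0 = w$ is an eBDD of level $n$, so all of its nodes lie at levels between $0$ and $n$; and the identification convention from Definition~\ref{def:eBDD} (collapsing $\Bdd{u\circledast v}$ to $\Bdd{\True}$ or $\Bdd{\False}$ when $u,v$ are both constant) forbids any product node at level $0$. Hence the product nodes of $w_0$ live at levels in $\{1,\ldots,n\}$, as claimed.

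For the inductive step, I would assume the invariant for $w_i$ and trace what iteration $i$ does. The algorithm selects every product node of $w_i$ at level exactly $n-i$ and replaces it by a new subgraph $\Bdd{x_{n-i},t_0,t_1}$ whose two children $t_b := \Bdd{u_b \circledast v_b}$ are freshly created product nodes. Product nodes of $w_i$ at levels strictly below $n-i$ are untouched, and no rewriting introduces any node above level $n-i$. So the crux is to bound $\Level(t_b)$. Since the replaced product node sat at level $n-i$, both of its BDD children $u,v$ have level at most $n-i$. Partial evaluation at $x_{n-i}$ strictly decreases the level of any BDD whose top variable is $x_{n-i}$ and leaves all other BDDs unchanged (both cases are covered by the definition of $\pev{x_j\Gets b}$ given just before the algorithm), so $\Level(u_b),\Level(v_b)\le n-i-1$. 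By Definition~\ref{def:eBDD}, $\Level(t_b)=\max(\Level(u_b),\Level(v_b))\le n-(i+1)$. Combined with the observation above, this yields the invariant for $w_{i+1}$.

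I do not anticipate a genuine obstacle; the argument is a one-step bookkeeping check. The only subtle point is the boundary case $n-(i+1)=0$, which would threaten the lower bound of $1$ on product-node levels. This is resolved automatically by the identification rules in Definition~\ref{def:eBDD}: whenever $u_b,v_b \in\{\Bdd{\True},\Bdd{\False}\}$, the product node $t_b$ collapses to $\Bdd{\True}$ or $\Bdd{\False}$ before being stored in $w_{i+1}$, so no level-$0$ product node is ever actually written.
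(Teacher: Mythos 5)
Your proof is correct and follows essentially the same route as the paper's: induction on $i$, observing that iteration $i$ removes all product nodes at level $n-i$ and that the freshly created product nodes $t_b=\Bdd{u_b\circledast v_b}$ land at level at most $n-(i+1)$ because partial evaluation at $x_{n-i}$ drops the children's levels below $n-i$. If anything you are more explicit than the paper, which asserts the conclusion without spelling out the level bound on $t_0,t_1$ or the level-$0$ collapse; both of your added details are accurate.
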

\begin{proof}
For $w_0$ the statement holds vacuously. Assume the statement holds for $w_i$. The algorithm computes $w_{i+1}$ by replacing each product node $\Bdd{u \circledast v}$ of $w_i$ at level $n-i$ with a non-product node $\Bdd{x_{n-i}, t_0,t_1}$. So $w_{i+1}$ has no product nodes at level $n-i$.\Qed
\end{proof}

\begin{lemma}\label{lem:claim2}
$\Eval{w_i}$ is multilinear in all of $x_{n-i+1},...,x_n$.
\end{lemma}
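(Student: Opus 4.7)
The plan is to prove this by induction on the level $j$ of descendants of $w_i$, with a strengthened claim: for every descendant $w'$ of $w_i$ at level $j$, the polynomial $\Eval{w'}$ depends only on the variables $x_1,\ldots,x_j$ and is multilinear in every $x_k$ with $n-i+1 \leq k \leq j$ (the latter range being empty when $j \leq n-i$). Applying this to $w'=w_i$ itself, which has level $n$, yields the lemma directly.

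The base case $j=0$ is immediate, since $w' \in \{\Bdd{\True},\Bdd{\False}\}$ and its arithmetisation is a constant. For the inductive step, I distinguish two cases. When $j \leq n-i$, the multilinearity range is empty and only the dependency statement must be verified. For a product node $w' = \Bdd{u \circledast v}$ this follows from $\Level(u),\Level(v)\leq j$ together with Proposition~\ref{prop:basic}(a), which forces $\Eval{u}$ and $\Eval{v}$ to depend only on variables up to their respective levels. For a BDD node $w' = \Bdd{x_j,u,v}$, the induction hypothesis applied to the children immediately gives the conclusion.

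The substantive case is $j > n-i$. Here Lemma~\ref{lem:claim1} rules out $w'$ being a product node (all product nodes of $w_i$ live at levels $\leq n-i < j$), so necessarily $w' = \Bdd{x_j,u,v}$ with $\Level(u),\Level(v) < j$, and $\Eval{w'} = [1-x_j]\cdot\Eval{u} + [x_j]\cdot\Eval{v}$. The induction hypothesis guarantees that $\Eval{u}$ and $\Eval{v}$ do not mention $x_j$, so $\Eval{w'}$ is linear (hence multilinear) in $x_j$. For any $k$ with $n-i+1 \leq k < j$, the induction hypothesis gives multilinearity of $\Eval{u}$ and $\Eval{v}$ in $x_k$, and since the coefficients $[1-x_j]$ and $[x_j]$ do not involve $x_k$, multilinearity in $x_k$ is preserved by multiplication and summation, yielding the claim for $\Eval{w'}$.

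The only mild subtlety is setting up the strengthened induction hypothesis so that both the dependency claim and the multilinearity claim are carried along simultaneously. Once that is done, Lemma~\ref{lem:claim1} is doing the real structural work by confining product nodes to the lower part of $w_i$, and the rest of the argument is a routine manipulation of linear combinations of multilinear polynomials.
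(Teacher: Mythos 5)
Your proof is correct and follows essentially the same route as the paper's: induction on the level, with Lemma~\ref{lem:claim1} confining product nodes to levels at most $n-i$ where their arithmetisations cannot involve $x_{n-i+1},\ldots,x_n$, and the standard $[1-x_j]\cdot\Eval{u}+[x_j]\cdot\Eval{v}$ decomposition for the remaining nodes. The only cosmetic difference is that you carry the variable-dependency claim explicitly in the induction hypothesis, whereas the paper reads it off directly from Definitions~\ref{def:BDD} and~\ref{def:eBDD}.
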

\begin{proof}
We prove that for every level $j\in\{1,...,n\}$ and every node $v$ of $w_i$ at level $j$ the polynomial $\Eval{v}$ is a multilinear polynomial over the variables $x_{n-i+1},...,x_n$. We proceed by induction on $j$. The case $j=0$ is trivial. For the inductive step, consider two cases. If $v$ is a product node, then $j\le n-i$ by Lemma~\ref{lem:claim1}. By Definition~\ref{def:eBDD}, $\Eval{v}$ is a polynomial over $x_1,...,x_j$. So $\Eval{v}$ does not depend on $x_{n-i+1},...,x_n$, and in particular $\Eval{v}$ is multilinear in them. If $v$ is not a product node, then $v=\Bdd{x_j,v_0,v_1}$ for nodes $v_0, v_1$ and $\Eval{v}=[1-x_j]\cdot\Eval{v_0}+[x_j]\cdot\Eval{v_1}$ (Definition~\ref{def:BDD}). Again by Definition~\ref{def:BDD} $\Eval{v_0},\Eval{v_1}$ do not depend on $x_j$, and by induction hypothesis, they are multilinear in $x_{n-i+1},...,x_n$. So $\Eval{v}$ is multilinear in $x_{n-i+1},...,x_n$ as well.\Qed
\end{proof}

\begin{lemma}\label{lem:claim3}
$\Eval{w_{i+1}}=\delta_{x_{n-i}}\Eval{w_{i}}$.
\end{lemma}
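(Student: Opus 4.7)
The plan is to prove Lemma~\ref{lem:claim3} by structural induction on the eBDD, lifting the local statement of Lemma~\ref{lem:closed} (which handles a single product node at level $n-i$) to the entire eBDD rooted at $w_i$. For this, I would first make precise the correspondence between nodes of $w_i$ and nodes of $w_{i+1}$: every descendant $v$ of $w_i$ gets mapped to a node $v'$ of $w_{i+1}$, obtained by recursively applying the replacement $\Bdd{u \circledast v} \mapsto \Bdd{x_{n-i}, t_0, t_1}$ at every product descendant of $v$ at level $n-i$. The goal is then to prove by induction on $\Level(v)$ that $\Eval{v'} = \delta_{x_{n-i}}\Eval{v}$; applied to the root $v = w_i$, this yields the lemma.

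The case analysis breaks down as follows. If $\Level(v) < n-i$, then $v' = v$ and $\Eval{v}$ is a polynomial over $x_1,\ldots,x_{\Level(v)}$ (by Definitions~\ref{def:BDD} and~\ref{def:eBDD}), so it does not depend on $x_{n-i}$ and $\delta_{x_{n-i}}$ acts as the identity. If $\Level(v) = n-i$ and $v$ is a product node, Lemma~\ref{lem:closed} gives exactly $\Eval{v'} = \delta_{x_{n-i}}\Eval{v}$. If $\Level(v) = n-i$ and $v$ is not a product node, then $v = \Bdd{x_{n-i}, v_0, v_1}$ with $v_0, v_1$ at strictly lower level, hence $\Eval{v_0}, \Eval{v_1}$ do not depend on $x_{n-i}$; thus $\Eval{v}$ has degree at most $1$ in $x_{n-i}$ and $\delta_{x_{n-i}}\Eval{v} = \Eval{v} = \Eval{v'}$ (using the previous case for the unchanged children).

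For $\Level(v) = j > n-i$, Lemma~\ref{lem:claim1} guarantees that $v$ is not a product node, so $v = \Bdd{x_j, v_0, v_1}$ and $v' = \Bdd{x_j, v_0', v_1'}$ (modulo the reduction $\Bdd{x_j, u, u} = u$, which does not affect arithmetisation). Applying the induction hypothesis to $v_0$ and $v_1$ and using that multiplication by $[1-x_j]$ and $[x_j]$ commutes with $\delta_{x_{n-i}}$ (since $j \ne n-i$), I compute
\begin{align*}
\Eval{v'} &= [1-x_j]\cdot\Eval{v_0'} + [x_j]\cdot\Eval{v_1'}
= [1-x_j]\cdot\delta_{x_{n-i}}\Eval{v_0} + [x_j]\cdot\delta_{x_{n-i}}\Eval{v_1} \\
&= \delta_{x_{n-i}}\bigl([1-x_j]\cdot\Eval{v_0} + [x_j]\cdot\Eval{v_1}\bigr) = \delta_{x_{n-i}}\Eval{v}.
\end{align*}

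I expect the only non-routine point to be justifying that $\delta_{x_{n-i}}$ genuinely commutes with multiplication by $[1-x_j]$ and $[x_j]$ when $j \ne n-i$, and cleanly treating the boundary case $\Level(v) = n-i$ with a non-product $v$: the subtlety is that the replacement at that level only rewrites product nodes, and one must argue that this partial replacement is consistent with the claim. Both points reduce to the observation that $\delta_{x_{n-i}}$ is $\F$-linear and acts as the identity on any polynomial whose degree in $x_{n-i}$ is at most $1$, which is immediate from the definition of $\delta$.
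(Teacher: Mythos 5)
Your proposal is correct and follows essentially the same route as the paper's proof: both establish the node correspondence $v\mapsto v'$ (the paper writes $v^*$), prove the stronger per-node claim $\Eval{v'}=\delta_{x_{n-i}}\Eval{v}$ by induction, invoke Lemma~\ref{lem:closed} for product nodes at level $n-i$, and dispose of the remaining cases by observing that $\delta_{x_{n-i}}$ acts as the identity on polynomials of degree at most~$1$ in $x_{n-i}$ and commutes with multiplication by $[x_j],[1-x_j]$ for $j\ne n-i$. Your case split by level relative to $n-i$ versus the paper's split by product/non-product node is only a cosmetic difference.
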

\begin{proof}
For any node $v$ of $w_i$, let $v^*$ denote the corresponding node in $w_{i+1}$. More precisely, let $u_1,v_1,...,u_l,v_l$ denote the sequence of replaced nodes, i.e.\ $w_{i+1}=w_i[u_1/v_1]\cdots[u_l/v_l]$. Then $v^*:=v[u_1/v_1]\cdots[u_l/v_l]$. So if $v$ is a product node at level $n-i$, $v^*$ denotes the BDD node that replaces it, and for the other nodes, $v^*$ and $v$ are the same node, except that some descendant of $v$ have been replaced by new nodes. Note that $w_i^*=w_{i+1}$, and that $\Bdd{x,v_0,v_1}^*=\Bdd{x,v_0^*,v_1^*}$.

We prove the stronger claim that $\Eval{v^*}=\delta_{x_{n-i}}\Eval{v}$ for every descendant $v$ of $w_{i+1}$. We proceed by induction. For the two leaves $\True$ and $\False$ the statement clearly holds. For the induction step, we consider two cases.
\begin{itemize}
\item $v$ is a product node. If $\Level(v)<n-i$, then $v^*=v$. By Lemma~\ref{lem:claim2}, $\Eval{v}$ is multilinear in $x_{n-i}$, and so $\Eval{v^*}=\delta_{x_{n-i}}\Eval{v}=\Eval{v}$. If $\Level(v)=n-i$, then $\Eval{v^*}=\delta_{x_{n-i}}\Eval{v}$ follows from Lemma~\ref{lem:closed}.
\item $v=\Bdd{x,v_0,v_1}$. As noted above, we have $v^*=\Bdd{x,v_0^*,v_1^*}$, so
\begin{gather*}
\delta_{x_{n-i}}\Eval{v}\stackrel{(1)}{=}\delta_{x_{n-i}}\big([1-x]\cdot\Eval{v_0}+[x]\cdot\Eval{v_1}\big)\\
\quad \stackrel{(2)}{=}[1-x]\cdot\delta_{x_{n-i}}\Eval{v_0}+[x]\cdot\delta_{x_{n-i}}\Eval{v_1}\\
\quad \stackrel{(3)}{=}[1-x]\cdot\Eval{v_0^*}+[x]\cdot\delta_{x_{n-i}}\Eval{v_1^*}
\stackrel{(4)}{=}\Eval{v^*}
\end{gather*}
where (1) expands the arithmetisation, (2) uses that either $x\ne x_{n-i}$, or $\Eval{v_b}$ does not depend on $x_{n-i}$ and $\delta_{x_{n-i}}\Eval{v_0}=\Eval{v_0}$, for $b\in\{0,1\}$, (3) uses the induction hypothesis, and (4) folds the arithmetisation back.
\end{itemize} \Qed
\end{proof}

\begin{lemma}\label{lem:claim4}
$w_n$ is a BDD and $w_n=\bddop(u_1,u_2)$.
\end{lemma}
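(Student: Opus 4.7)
\textbf{Proof plan for Lemma~\ref{lem:claim4}.}
The plan is to split the claim into two parts: first that $w_n$ is a BDD, and second that as a BDD it coincides with $\bddop(u_1,u_2)$. For the first part I would invoke Lemma~\ref{lem:claim1} with $i=n$: it guarantees that $w_n$ has product nodes only at levels in $\{1,\ldots,0\}=\emptyset$. The only remaining possibility for a product node would be at level $0$, but Definition~\ref{def:eBDD} explicitly identifies $\Bdd{u\circledast v}$ for $u,v\in\{\Bdd{\True},\Bdd{\False}\}$ with the corresponding boolean constant, so no level-$0$ product nodes survive either. Hence $w_n$ contains no product nodes at all and is a BDD.

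For the second part, I would compute $\Eval{w_n}$ by iterating Lemma~\ref{lem:claim3} $n$ times, obtaining $\Eval{w_n}=\delta_{x_1}\delta_{x_2}\cdots\delta_{x_n}\Eval{w_0}$. Since the degree-reduction operator preserves binary equivalence and $\Eval{w_0}=\Eval{\Bdd{u_1\circledast u_2}}$, we get $\Eval{w_n}\equiv_b\Eval{\Bdd{u_1\circledast u_2}}$. On the other hand, $\bddop(u_1,u_2)$ is by definition the BDD representing the boolean function $P_{u_1}\circledast P_{u_2}$; combining Proposition~\ref{prop:fundeval} with the definition of arithmetisation shows that $\Eval{\bddop(u_1,u_2)}$ agrees with $\Eval{\Bdd{u_1\circledast u_2}}$ on every binary assignment, i.e.\ $\Eval{\bddop(u_1,u_2)}\equiv_b\Eval{\Bdd{u_1\circledast u_2}}$.

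The final step is to lift binary equivalence to literal polynomial equality. Both $\Eval{w_n}$ and $\Eval{\bddop(u_1,u_2)}$ are arithmetisations of BDDs, hence binary multilinear by Proposition~\ref{prop:basic}(a); by the previous paragraph, they are binary equivalent. Two multilinear polynomials that agree on $\{0,1\}^n$ must coincide as polynomials (by a standard interpolation/M\"obius-inversion argument on the multilinear basis, which is implicit in the uniqueness proof of Proposition~\ref{prop:basic}(b): the proof shows that distinct BDDs have distinct arithmetisations, which is exactly this fact). Therefore $\Eval{w_n}=\Eval{\bddop(u_1,u_2)}$, and invoking the uniqueness part of Proposition~\ref{prop:basic}(b) one more time yields $w_n=\bddop(u_1,u_2)$.

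The only mildly delicate point --- what I would flag as the main obstacle --- is the upgrade from binary equivalence to polynomial equality in the last paragraph. All of the preceding lemmata in this section track only $\equiv_b$, and Proposition~\ref{prop:basic}(b) is phrased as uniqueness of the BDD (not uniqueness of the multilinear polynomial). I would therefore either cite the interpolation fact explicitly, or reroute the argument through Proposition~\ref{prop:basic}(b) applied to the common boolean function $P_{u_1}\circledast P_{u_2}$: both $w_n$ and $\bddop(u_1,u_2)$ are BDDs whose arithmetisations are binary equivalent to this function, and uniqueness of the BDD representation then forces $w_n=\bddop(u_1,u_2)$ directly, avoiding any separate multilinear-interpolation lemma.
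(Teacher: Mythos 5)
Your proposal is correct and follows essentially the same route as the paper: Lemma~\ref{lem:claim1} to rule out product nodes, iteration of Lemma~\ref{lem:claim3} to relate $\Eval{w_n}$ to $\Eval{w_0}$ up to binary equivalence, and then the fact that binary-equivalent multilinear polynomials are equal combined with Proposition~\ref{prop:basic} to conclude $w_n=\bddop(u_1,u_2)$. The ``delicate point'' you flag is handled in the paper exactly as you propose, by citing the well-known interpolation fact for multilinear polynomials before invoking uniqueness of BDDs.
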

\begin{proof}
By Lemma~\ref{lem:claim1}, $w_n$ has no product nodes. By Definition~\ref{def:eBDD}, an eBDD without product nodes is a BDD.
Moreover,
\[\Eval{\bddop(u_1,u_2)}\equiv_b\Eval{\psi_1\circledast\psi_2}=\Eval{w_0}\equiv_b\delta_{x_n}\Eval{w_0}=\Eval{w_1}\equiv_b...\equiv_b\Eval{w_n}\]
By Proposition~\ref{prop:basic}a, both $\Eval{\bddop(u_1,u_2)}$ and $\Eval{w_n}$ are multilinear polynomials. It is well-known that two multilinear polynomials that coincide on all binary inputs must be equal, so we get $\Eval{\bddop(u_1,u_2)}=\Eval{w_n}$, and by Proposition~\ref{prop:basic}b, $\bddop(u_1,u_2)=w_n$
~\Qed
\end{proof}

\section{Proof of Proposition~\ref{prop:evaluateebdd}}

\evaluateebdd*
\begin{proof}
The algorithm memoises the computed polynomials for each node, so the total number of calls is in $\O(\Abs{w})$. In each call, a constant number of operations on polynomials are performed. For any eBDD $w$, the polynomial $\Eval{w}$ has maximum degree at most $2$. (This follows immediately from Definition~\ref{def:eBDD}: a product node can only have two BDDs as children, not eBDDs.) A polynomial with $n-k$ free variables and maximum degree at most $2$ can be represented using $3^{n-k}$ coefficients (one for each monomial), and operations can be performed efficiently on this representation. 
\end{proof}

\section{Proof of Theorem~\ref{thm:main}}

\thmmain*
\begin{proof}
\proofparag{Part (a)} This follows immediately from Proposition~\ref{prop:computebdd}.

\newcommand{\Sbdd}{S_{\mathrm{bdd}}}
\proofparag{Part (b)}
Let $S$ denote the set of descendants of $\Convert(\varphi)$, and set
\[\Sbdd:=\{\Convert(\psi):\psi\text{ descendant of }\varphi\}\subseteq S\]
to the descendants that correspond to BDDs (and not eBDDs). Note $\Abs{\Sbdd}=\Abs{\varphi}$ and $\Abs{S}\le n\Abs{\varphi}$. Additionally, let $B_\psi$ denote the eBDD representing $\psi\in S$, computed by \ComputeEBDD. Note that $B_\psi$ is a BDD if $\psi\in\Sbdd$, and (as BDDs are unique, see Proposition~\ref{prop:basic}) those necessarily match the BDDs computed by \BDDSolver. We thus observe $\sum_{\psi\in\Sbdd}\Abs{B_\psi}\le T$.

For the eBDD $\psi\in S\setminus\Sbdd$, each node also appears in the computation of \ComputeEBDD. In the sum $\sum_{\psi\in S}\Abs{B_\psi}$, however, a node is counted up to $n$ times, so we get $\sum_{\psi\in S}\Abs{B_\psi}\le nT$.

As shown in Proposition~\ref{prop:evaluateebdd}, responding to one challenge takes time linear in $\Abs{w}$, where $w$ is the evaluated eBDD. Step (b.1.1) of \CPCertify\ sends at most $n$ challenges for each node in $\Sbdd$, which are evaluated in time linear in $\sum_{\psi\in\Sbdd}nf\Abs{B_\psi}\le nT$. Step (c) sends a challenge for each node in $S$, which take at most $\sum_{\psi\in S}\Abs{B_\psi}\le nT$ time.

\proofparag{Part (c)}
As argued for part (b), Verifier sends at most $n\Abs{\varphi}$ challenges. The challenge consist of one partial assignment, which has size at most $n$. The failure probability follows from Proposition~\ref{prop:algorithm}.
\end{proof}

\section{Evaluation – Detailed Description}\label{app:setup}
\subsection{Instances}
We used the instances from the crafted instances track of the QBF Evaluation 2022 (\url{http://www.qbflib.org/QBFEVAL_20_DATASET.zip}). These are re-used from the QBF Evaluation 2020. It should be noted that these instances are all unsatisfiable. (Instances in QDIMACS format are specified so that the outermost quantifier is existential.)

We also use the linear domino placement game used for evaluating \pgbddq{}~\cite{BryantH21qbf}. They can be obtained at \url{https://github.com/rebryant/pgbddq-artifact}. We reproduce the parameters of~\cite[Table 3]{BryantH21qbf}. On these instances we run only \pgbddq\ and our tool, which are both BDD-based, and we allow both to use the provided variable ordering, which improves performance significantly.

\subsection{Tools}
The following four tools were compared:
\begin{center}
\setlength{\tabcolsep}{1.5mm}
\begin{tabular}{llll}
tool&language&version& link \\ \midrule
\caqe{} & Rust & \texttt{8b646df} & \url{https://github.com/ltentrup/caqe} \\
\depqbf{} & C & \texttt{2ad3995} & \url{https://github.com/lonsing/depqbf} \\
\pgbddq{} & Python & \texttt{d5cbc96}&\url{https://github.com/rebryant/pgbdd} \\
\qchecker{} & Python & \texttt{d5cbc96}&\url{https://github.com/rebryant/pgbdd} \\
\qrpcheck{} & C & 1.0.3&\url{http://fmv.jku.at/qrpcheck/} \\
\blic{} (ours) & C++ & \texttt{bd3d298} & \url{https://gitlab.lrz.de/i7/blic} \\
\end{tabular}
\end{center}

All solvers were run without a preprocessor, limiting their performance. In the QBF Evaluation 2022, \caqe{} combined with the preprocessor \bloqqer{} achieved first place in the crafted instances track. For comparison, we ran our tool and \caqe{} on the preprocessed instances: \bloqqer{} solves 97 of 172 by itself, of the remaining 74 instances our tool solves 34, while \caqe{} solves 50 (timeout of 10min). 

We ran \depqbf{} with certificate generation enabled. More precisely, we used flags
\begin{center}
\begin{minipage}{0.9\textwidth}
\texttt{--trace=bqrp} \texttt{--dep-man=simple} \texttt{--no-lazy-qpup} \texttt{--no-dynamic-nenofex} \texttt{--no-qbce-dynamic} \texttt{--no-trivial-falsity} \texttt{--no-trivial-truth}
\end{minipage}
\end{center}
These flags disable features that are not supported in conjunction with certificate generation. This reduces the performance of \depqbf{}: in its default configuration it can solve 104 of 172 instances (compared to 87 when certification was enabled).

To verify certificates generated by \pgbddq{} we use the tool \qchecker{}, which is part of \pgbddq{}. It is specialised for the certificates \pgbddq{} generates.

\subsection{Time Measurement}
Times are measured in the calling process. These times exceed self-reported times by about 1-2ms. Running our tool repeatedly on one (arbitrarily chosen) instance yields an average run-to-run deviation of 11ms, maximum deviation of 34ms (compared to a total running time of 3.08s).

To measure the time taken by our tool for Verifier and Prover parts of \CPCertify, it is necessary to measure the contributions of each round of the interactive protocol. As the protocol is executed within a single process, these data are collected internally in our tool.

\subsubsection{Comparison with results in~\cite{BryantH21qbf}.}
As we use the same instances and configuration of \pgbddq{}, we can compare the times we obtained with the times in~\cite[Table 3]{BryantH21qbf} to verify that we can reproduce their numbers.

Our results match their results relatively closely. Our times are between $15\%$ and $29\%$ slower for $10\le N\le 25$, and $45-49\%$ slower for $N=45$, with one outlier at $58\%$ (a subsequent run was $54\%$ slower, making it unlikely that the issue is intermittent). This can likely be accounted for by the faster (in terms of single-thread performance) Intel Core i7-7700K processor used in~\cite{BryantH21qbf}, and differences in main memory.

\end{document}